\newcommand{\multiline}[1]{%
  \begin{tabularx}{\dimexpr\linewidth-\ALG@thistlm}[t]{@{}X@{}}
    #1
  \end{tabularx}
}
\newtheorem{theorem}{Theorem}[section]
\newtheorem{corollary}[theorem]{Corollary}
\newtheorem{lemma}[theorem]{Lemma}
\theoremstyle{definition}
\newtheorem{definition}[theorem]{Definition}
\newtheorem{example}[theorem]{Example}
\newtheorem{remark}[theorem]{Remark}
\title{Construction and Decoding of Convolutional Codes with optimal Column Distances}
\author{Julia Lieb and Michael Schaller}
\begin{document}

\maketitle

\begin{abstract}
The construction of Maximum Distance Profile (MDP) convolutional codes in general requires the use of very large finite fields.
In contrast convolutional codes with optimal column distances maximize the column distances for a given arbitrary finite field.
In this paper, we present a construction of such convolutional codes.
In addition, we prove that for the considered parameters the codes that we constructed are the only ones achieving optimal column distances.
The structure of the presented convolutional codes with optimal column distances is strongly related to first order Reed-Muller block codes and we leverage this fact to develop a reduced complexity version of the Viterbi algorithm for these codes.
\end{abstract}

\section{Introduction}

Convolutional codes form a class of error-correcting codes which is mainly used for the transmission of streams of data.
In contrast to classical linear block codes, where data blocks of fixed size are encoded and decoded independently, convolutional codes possess memory, i.e., in the encoding process dependencies between different data blocks are created.
These dependencies allow correction of error patterns that cannot be corrected with block codes of the same length.
The drawback is that this additional security against errors also makes the decoding process more complex. 
The Viterbi algorithm, the most famous decoding algorithm for convolutional codes, has a complexity that is growing exponentially with the memory of the code; see \cite{viterbi}.
Hence, there have been various attempts to make this algorithm more efficient. 

In addition to the efficiency of the decoding process, the other crucial property of an error-correcting code is the number of errors it can correct.
Among the most important measures for the error-correcting capability of a convolutional code are its column distances.
There exist Singleton-like upper bounds for these column distances and convolutional codes that reach a maximal number of these upper bounds with equality are called Maximum Distance Profile (MDP) codes.
While MDP codes have optimal error-correcting properties, their construction requires in general very large finite fields, see e.g. 
\cite{GRS:sMDS}, \cite{ANP:MDP}, \cite{rscc}, which makes such constructions often impractical.

Therefore, our approach is to first fix the finite field arbitrarily and then 
determine the largest possible column distances over this given finite field, where we prioritize maximizing the first column distances.
In this way we define the notion of optimal column distances.
We then present a construction for convolutional codes with optimal column distances for certain code rates.
Moreover, we prove that for the given code parameters this construction is the only one achieving optimal column distances.
Since our construction uses MacDonald codes and therefore first order Reed-Muller codes as building blocks, we are able to present a reduced complexity version of the Viterbi algorithm for these specific codes with optimal column distances using an efficient algorithm for the decoding of first order Reed Muller codes.
Note that this generalizes the work in \cite{isit23} and in \cite{isit24} to arbitrary finite fields.
In \cite{isit23} binary convolutional codes with optimal column distances were constructed and in \cite{isit24} an efficient decoding algorithm for these codes was presented.

To obtain also good constructions for different code rates than the ones covered by our optimal construction, we modify this initial construction using directly first order Reed Muller codes or simplex codes as building blocks. Doing this we loose the optimality of the column distances but the column distances are still close to optimal and
the modified constructions can be decoded with the Viterbi algorithm with the same reduced complexity as the original construction.

The paper is structured as follows.
In Section \ref{sec:preliminaries}, we give some background on block codes and convolutional codes, in Section \ref{sec:viterbi} we describe the Viterbi algorithm for convolutional codes.
In Section \ref{sec:construction}, we present our construction of convolutional codes with optimal column distances.
In Section \ref{sec:decoding}, we explain a decoding algorithm for first order Reed Muller codes and MacDonald codes that we will use in the following.
In Section \ref{sec:NewDecodingAlgorithm}, we present an improved Viterbi algorithm for the codes constructed in Section \ref{sec:construction} using results of Section \ref{sec:decoding} and analyze the complexity of this algorithm.
In Section \ref{sec:alternative_constructions}, we present two alternative constructions with good column distances and the same reduced complexity with the Viterbi algorithm as the construction of Section \ref{sec:construction}. 

\section{Preliminaries}
\label{sec:preliminaries}

Let $\mathbb F_q$ denote a finite field with $q$ elements and denote by $\mathbb F_q[z]$ the ring of polynomials over $\mathbb F_q$.

\subsection{Linear block codes} 
In this subsection, we introduce linear block codes with a focus on MacDonald codes, which we will use later for our convolutional code construction.

\begin{definition}

A $q$-ary $[n,k]$ \textbf{linear block code} $C$ of length $n$ and dimension $k$ is a $k$-dimensional subspace of $\mathbb F_q^n$. This means that there exists a \textbf{generator matrix}  $G\in \mathbb{F}_q^{k \times n}$ of full row rank such that  
$$
C = \left\{ c= u G \in \mathbb F_q^{n}:\, u \in \mathbb F_q^{k}\right\}.$$
We denote by $\operatorname{wt}(v)$ the (Hamming) weight of $v\in\mathbb F_q^n$ and by $d(u,v)$ the (Hamming) distance between $u,v\in\mathbb F_q^n$. We define the \textbf{minimum (Hamming) distance} of $C$ as
$$d_{min}(C)=\min_{u,v\in  C,\ u\neq v}d(u,v)=\min_{c\in C\setminus \{0\}}\operatorname{wt}(c). $$
\end{definition}

\begin{definition}
    A matrix $M \in \mathbb{F}_q^{n \times n}$ is called a \textbf{monomial matrix} if it can be written as
    \[
        M = DP
    \]
    where $D\in \mathbb{F}_q^{n \times n}$ is an invertible diagonal matrix and $P\in \mathbb{F}_q^{n \times n}$ is a permutation matrix.
\end{definition}

Note that if we multiply a generator matrix of a linear code with a monomial matrix on the right the parameters of the corresponding code remain unchanged.
There exist several bounds relating the different parameters of a linear block code. To optimize the error-correcting capacity it is desirable to construct codes meeting one of these bounds. 
In the next theorem, we present the Plotkin bound which is met by simplex codes. 
We will use a result related to this for the proof that the convolutional codes we construct later in the paper have optimal column distances.

\begin{theorem}\cite[Theorem 5.2.4]{van1998introduction}
Let $C\subset\mathbb F_q^n$ be a linear block code with $d:=d_{min}(C)<n(1-\frac{1}{q})$. Then, 
$$|C|\leq \frac{d}{d-n(1-\frac{1}{q})}.$$
\end{theorem}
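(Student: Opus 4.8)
The plan is to estimate, in two different ways, the total sum of pairwise Hamming distances over all ordered pairs of codewords and then compare the two estimates. Concretely, set
$$S = \sum_{u \in C} \sum_{v \in C} d(u,v).$$
First I would obtain a lower bound on $S$. Every pair of \emph{distinct} codewords is at distance at least $d = d_{min}(C)$, and there are $|C|(|C|-1)$ ordered pairs of distinct codewords (the diagonal pairs contribute $0$), so immediately
$$S \geq d\,|C|(|C|-1).$$

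Next I would produce a matching upper bound by summing coordinate by coordinate. Writing $d(u,v) = \sum_{i=1}^n \mathbf{1}[u_i \neq v_i]$ and exchanging the order of summation, $S$ becomes the sum over the $n$ coordinates of the number of ordered pairs that differ in that coordinate. For coordinate $i$, let $a_{i,\alpha}$ be the number of codewords whose $i$-th entry equals $\alpha \in \mathbb F_q$; then the number of ordered pairs differing in coordinate $i$ is $|C|^2 - \sum_{\alpha \in \mathbb F_q} a_{i,\alpha}^2$. Here the linearity of $C$ is the key structural input: the evaluation map $c \mapsto c_i$ is an $\mathbb F_q$-linear functional, so its image is either $\{0\}$ or all of $\mathbb F_q$. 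In the first case the coordinate contributes nothing, and in the second every value is attained equally often, so $a_{i,\alpha} = |C|/q$ for all $\alpha$ and the contribution equals exactly $|C|^2(1 - \tfrac1q)$. (For a general, possibly nonlinear, code one would instead invoke convexity, $\sum_\alpha a_{i,\alpha}^2 \geq |C|^2/q$, to reach the same per-coordinate upper bound.) Summing over all $n$ coordinates yields
$$S \leq n\,|C|^2\left(1 - \tfrac1q\right).$$

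Combining the two estimates gives $d\,|C|(|C|-1) \leq n\,|C|^2(1 - \tfrac1q)$, and after dividing by $|C|$ and rearranging I obtain $|C|\,(d - n(1-\tfrac1q)) \leq d$. The final step is to divide by $d - n(1-\tfrac1q)$. The main subtlety — and really the only place where care is needed — is the sign of this denominator: the double-counting inequality is only informative once $d$ exceeds the threshold $n(1-\tfrac1q)$, and it is precisely the positivity of $d - n(1-\tfrac1q)$ that preserves the inequality direction and yields the stated bound $|C| \leq d/(d - n(1-\tfrac1q))$. I would flag that the hypothesis must be read in this regime (the bound is vacuous otherwise), and note as a sanity check that simplex codes, for which $d = q^{k-1}$ and $n = (q^k-1)/(q-1)$, meet it with equality, which is exactly the extremal behavior to be exploited later for the optimal column distance argument.
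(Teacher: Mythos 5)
Your proof is correct and follows essentially the same double-counting route that underlies the paper's own treatment: the paper does not reprove this cited theorem, but its Lemma 1.7 (whose proof it describes as an intermediate step toward the Plotkin bound) establishes exactly your coordinate-wise upper bound $S\leq n|C|^2(1-\frac1q)$ via the counts $m_{i,j}$, and your pairing of that with the lower bound $S\geq d\,|C|(|C|-1)$ is the standard completion. You were also right to flag the sign of the denominator: the hypothesis as printed, $d<n(1-\frac1q)$, is a typo for $d>n(1-\frac1q)$, since otherwise the right-hand side is negative and the claim cannot hold; your reading is the correct one, and your simplex-code equality check matches the paper's subsequent use of the bound.
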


In the following, we define and describe MacDonald codes, the building blocks for our convolutional code construction with optimal column distances. MacDonald codes were introduced in \cite{macdonald1960design} for the binary case and generalized to arbitrary finite fields in \cite{qaryMacD}. These codes are closely related to simplex codes, which we therefore recall first.

Let $S(q,m)\in\mathbb F_q^{m\times \frac{q^m-1}{q-1}}$ be a generator matrix of a q-ary simplex code $\mathcal{S}(q,m)$, i.e., the columns of $S(q,m)$ form a set of representatives for the $\frac{q^m-1}{q-1}$ distinct one-dimensional subspaces of $\mathbb F_q^m$.
We have the following lemma.
\begin{lemma}
\label{lemma:simplex_code_monomial_right}
    Let $S(q, m), S'(q, m)$ be two generator matrices of simplex codes of the same dimension.
    Then
    \[
        S'(q, m) = S(q, m) M
    \]
    for some monomial matrix $M$.
\end{lemma}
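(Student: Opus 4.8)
The plan is to exploit the defining property of a simplex generator matrix directly: by assumption the columns of $S(q,m)$ form a complete, irredundant system of representatives for the $N := \frac{q^m-1}{q-1}$ distinct one-dimensional subspaces of $\mathbb{F}_q^m$, and exactly the same holds for $S'(q,m)$. Right multiplication by a monomial matrix $M = DP$ permutes the columns (through $P$) and rescales each of them by a nonzero scalar (through $D$), so proving the lemma amounts to showing that the two column systems differ by precisely such a relabelling-and-rescaling.

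First I would record the bijection between columns. Write $s_1, \dots, s_N$ for the columns of $S(q,m)$ and $s'_1, \dots, s'_N$ for those of $S'(q,m)$. Each $s'_j$ is nonzero and hence spans a one-dimensional subspace $\langle s'_j\rangle$; since the columns of $S(q,m)$ represent each one-dimensional subspace exactly once, there is a unique index $\sigma(j)$ with $\langle s'_j\rangle = \langle s_{\sigma(j)}\rangle$. The map $\sigma$ is a bijection on $\{1,\dots,N\}$: it is well defined and injective because the subspaces $\langle s_i\rangle$ are pairwise distinct, and surjective because both matrices realise the full list of $N$ one-dimensional subspaces.

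Next I would extract the scalars. Since $s'_j$ and $s_{\sigma(j)}$ span the same line, there is a unique $\lambda_j \in \mathbb{F}_q^{\times}$ with $s'_j = \lambda_j\, s_{\sigma(j)}$. I would then define $M \in \mathbb{F}_q^{N\times N}$ to be the matrix whose $j$-th column has its single nonzero entry equal to $\lambda_j$ in row $\sigma(j)$. By construction the $j$-th column of $S(q,m)\,M$ is $\lambda_j s_{\sigma(j)} = s'_j$, so $S'(q,m) = S(q,m)\,M$.

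It remains to check that $M$ is monomial, and this is the only point requiring any care. Because $\sigma$ is a bijection, $M$ has exactly one nonzero entry in every column and in every row; writing $P$ for the permutation matrix of $\sigma$ and $D$ for the diagonal matrix carrying the scalars in the appropriate positions then exhibits $M = DP$ in the required form, with $D$ invertible since each $\lambda_j \neq 0$. The main (and modest) obstacle is purely bookkeeping: one must be careful about whether the permutation acts on row or on column indices, so that the factorisation $M = DP$ comes out with the diagonal and the permutation on the correct sides; everything else is immediate from the "one representative per subspace" definition.
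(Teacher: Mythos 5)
Your proof is correct and follows essentially the same route as the paper: both arguments reduce to the fact that two complete sets of representatives of the one-dimensional subspaces of $\mathbb{F}_q^m$ differ by a column permutation and nonzero column scalings, i.e.\ by a monomial matrix. You in fact spell out explicitly (via the bijection $\sigma$ and the scalars $\lambda_j$) the step the paper dismisses as ``clear,'' while omitting the paper's preliminary remark about row operations, which is unnecessary given that the paper's definition already endows \emph{every} simplex generator matrix with the column-representative property.
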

\begin{proof}
    Note that row operations do not change the fact that any two columns of a generator matrix are linearly independent.
    Therefore, row operations just change the set of column representatives.
    It is clear that any two sets of column representatives are related by right multiplication with a monomial matrix.
\end{proof}
Independently of the choice of representatives, there are exactly $\frac{q^{m-k}-1}{q-1}$ of these columns whose first $k$ entries are equal to zero.
If we puncture a simplex code in exactly these positions, we obtain a \textbf{MacDonald code}, denoted by $\mathcal{C}_{m,m-k}(q)$, which is a q-ary code of length $\frac{q^m-q^{m-k}}{q-1}$, dimension $m$ and minimum distance $q^{m-1}-q^{m-k-1}$. 
More precisely we have the following lemma.

\begin{lemma}\cite{seneviratne2018generalized}
\label{lemma:weight_Mac_Donald_Codes}
    If we denote by $C_{m,m-k}(q)$ a generator matrix of $\mathcal{C}_{m,m-k}(q)$ that is obtained in the way described above, then all codewords of $\mathcal{C}_{m,m-k}(q)$ obtained as non-zero linear combinations of the first $k$ rows of $C_{m,m-k}(q)$ have weight $q^{m-1}$ and all other codewords have weight $q^{m-1}-q^{m-k-1}$. 
\end{lemma}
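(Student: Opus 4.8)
The plan is to compute the weights directly from the description of $C_{m,m-k}(q)$ as a punctured simplex generator matrix, using the standard correspondence between zero positions of a simplex codeword and one-dimensional subspaces of a hyperplane. The columns of the full matrix $S(q,m)$ are representatives $v$ of the one-dimensional subspaces of $\mathbb{F}_q^m$, and for $u \in \mathbb{F}_q^m \setminus \{0\}$ the simplex codeword $u\,S(q,m)$ has a nonzero entry in the position indexed by $v$ exactly when $u \cdot v \neq 0$. Hence its zero positions correspond precisely to the one-dimensional subspaces contained in the hyperplane $u^{\perp}$. Since such a hyperplane contains $\frac{q^{m-1}-1}{q-1}$ one-dimensional subspaces, every nonzero simplex codeword has weight $\frac{q^m-1}{q-1} - \frac{q^{m-1}-1}{q-1} = q^{m-1}$, which recovers the equidistance of the simplex code and will serve as the baseline for the MacDonald computation.

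Next I would pass to $C_{m,m-k}(q)$, which is obtained by deleting the $\frac{q^{m-k}-1}{q-1}$ columns whose first $k$ coordinates vanish. The main observation is that the weight of $u\,C_{m,m-k}(q)$ equals the simplex weight $q^{m-1}$ minus the number of deleted columns that lay in the support of the simplex codeword, i.e. those $v$ with first $k$ entries zero and $u \cdot v \neq 0$. Writing $u = (u',u'')$ with $u' \in \mathbb{F}_q^k$ and $u'' \in \mathbb{F}_q^{m-k}$, a deleted column has the form $v = (0,v'')$, so that $u \cdot v = u'' \cdot v''$ depends only on $u''$ and on the one-dimensional subspace $\langle v'' \rangle$ of $\mathbb{F}_q^{m-k}$ that $v''$ represents.

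The argument then splits according to whether $u'' = 0$. The case $u'' = 0$ is exactly the case where $u\,C_{m,m-k}(q)$ is a nonzero linear combination of the first $k$ rows; here $u \cdot v = 0$ for every deleted column, so none of the deleted positions were in the support and the weight is unchanged at $q^{m-1}$. When $u'' \neq 0$, the deleted columns with $u \cdot v \neq 0$ correspond to the one-dimensional subspaces of $\mathbb{F}_q^{m-k}$ not contained in the hyperplane $(u'')^{\perp}$, of which there are $\frac{q^{m-k}-1}{q-1} - \frac{q^{m-k-1}-1}{q-1} = q^{m-k-1}$; subtracting from the baseline yields weight $q^{m-1} - q^{m-k-1}$, as claimed.

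I do not expect a genuine obstacle, since the whole argument reduces to hyperplane counting. The only point requiring care is the bookkeeping: verifying that the columns removed to form the MacDonald code are precisely those indexed by the one-dimensional subspaces supported on the last $m-k$ coordinates (the representatives with $v' = 0$), and invoking the standard fact that a hyperplane of $\mathbb{F}_q^{\ell}$ contains $\frac{q^{\ell-1}-1}{q-1}$ one-dimensional subspaces. Once these two ingredients are in place, both weight values follow by direct subtraction from $q^{m-1}$.
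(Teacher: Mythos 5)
Your argument is correct. Note that the paper does not prove this lemma at all --- it is imported as a cited result from the reference on generalized MacDonald codes --- so there is no in-paper proof to compare against. Your hyperplane-counting derivation is a clean, self-contained justification: the entry of $u\,S(q,m)$ at the column indexed by $v$ is $u\cdot v$, the deleted columns are exactly those of the form $(0,v'')$, and the dichotomy $u''=0$ versus $u''\neq 0$ matches precisely the dichotomy in the statement between nonzero combinations of the first $k$ rows and all other nonzero codewords; the counts $\frac{q^{m-1}-1}{q-1}$ and $q^{m-k-1}$ are right. The one bookkeeping point you flag --- that the punctured positions are exactly the representatives with vanishing first $k$ coordinates --- is how the paper itself defines the puncturing, so nothing further is needed there.
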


Note that for $k=1$, MacDonald codes are equal to \textbf{first order Reed Muller codes} of dimension $m$, denoted by $\mathcal{R}(q,m-1)$ with generator matrix denoted by $R(q,m-1)$, which can be chosen such that its first row is equal to the all-one vector. 
Moreover, for any $k\in\{1,\hdots,m-1\}$, there is a generator matrix of a MacDonald code of the form
\begin{align}\label{eq:MacDonald_RM}
C_{m,m-k}(q)=\left(
    \begin{array}{c|c|c|c|c}
        R(q,m-1)   & 0 \ldots 0            & 0 \ldots 0 & \ldots & \\
                                & R(q,m-2)  & 0 \ldots 0 & \ldots &  0_{(k-1)\times q^{m-k}} \\
                                &                       & R(q,m-3) &\ldots &   \\
        \vdots                  & \vdots                & \vdots    &\vdots     & R(q,m-k) \\
                                &                       &           &        & \vdots 
    \end{array}
    \right).
\end{align}
Throughout this paper, we will always assume that generator matrices $C_{m,m-k}(q)$ of MacDonald codes are of the form above where all involved generator matrices of first order Reed Muller codes have the first row equal to the all-one vector.
Note that in
\cite{isit23}, MacDonald codes $\mathcal{C}_{m,m-k}(q)$ were also called
 \textbf{$k$-partial simplex codes} of dimension $m$.

Due to the close relation of MacDonald codes and simplex codes two important properties of simplex codes will be crucial for our convolutional code construction.
Firstly, all non-zero codewords of $\mathcal{S}(q,m)$ have weight $q^{m-1}$, i.e., simplex codes are one-weight codes, and secondly, simplex codes reach equality in the Plotkin bound. 
The result of the following lemma can be seen as an intermediate step in the proof of the classical Plotkin bound (see e.g. \cite{van1998introduction}). Even if this means that it is a long known result, we will provide the proof so that the paper is more self-contained.

\begin{lemma}\label{lemma:plotkin}
For a linear block code $\mathcal{C}\subset\mathbb F_q^n$, one has
$$\sum_{c\in\mathcal{C}}\operatorname{wt}(c)\leq n\cdot  |\mathcal{C}|\cdot \left(1-\frac{1}{q}\right).$$
\end{lemma}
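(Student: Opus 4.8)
The plan is to prove the bound by counting weights column by column. Fix a linear block code $\mathcal{C}\subset\mathbb F_q^n$ and consider the sum $\sum_{c\in\mathcal{C}}\operatorname{wt}(c)$. Since the weight of a codeword is the number of nonzero entries, I would rewrite this double sum by exchanging the order of summation: $\sum_{c\in\mathcal{C}}\operatorname{wt}(c)=\sum_{j=1}^{n}\,\#\{c\in\mathcal{C}:c_j\neq 0\}$, where $c_j$ denotes the $j$-th coordinate of $c$. Thus the global weight sum decomposes into a contribution from each of the $n$ coordinate positions, and it suffices to bound the number of codewords that are nonzero in a single fixed position $j$.

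The key step is to analyze, for a fixed column index $j$, the evaluation map $\varphi_j:\mathcal{C}\to\mathbb F_q$ sending $c\mapsto c_j$. Because $\mathcal{C}$ is a linear subspace and $\varphi_j$ is $\mathbb F_q$-linear, its image is an $\mathbb F_q$-subspace of $\mathbb F_q$, hence either $\{0\}$ or all of $\mathbb F_q$. If the image is $\{0\}$, then no codeword is nonzero at position $j$, so that column contributes $0\leq |\mathcal{C}|(1-\tfrac1q)$. If the image is all of $\mathbb F_q$, then $\varphi_j$ is surjective, so each fiber $\varphi_j^{-1}(a)$ for $a\in\mathbb F_q$ has the same cardinality $|\mathcal{C}|/q$; in particular the number of codewords with $c_j=0$ is exactly $|\mathcal{C}|/q$, and therefore the number with $c_j\neq 0$ is $|\mathcal{C}|-|\mathcal{C}|/q=|\mathcal{C}|(1-\tfrac1q)$. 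In either case, the $j$-th column contributes at most $|\mathcal{C}|(1-\tfrac1q)$ to the weight sum.

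Summing this bound over the $n$ columns yields $\sum_{c\in\mathcal{C}}\operatorname{wt}(c)\leq n\cdot|\mathcal{C}|\cdot(1-\tfrac1q)$, which is exactly the claim.

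I do not expect a genuine obstacle here: the argument is the standard first half of the Plotkin-bound proof, and the only point requiring a little care is justifying that each nonzero fiber of the linear map $\varphi_j$ has size $|\mathcal{C}|/q$. This follows immediately from the rank–nullity / homomorphism theorem, since all cosets of the kernel $\ker\varphi_j$ have equal cardinality and $\varphi_j$ has exactly $q$ fibers when surjective. The linearity of $\mathcal{C}$ is essential: it is what forces the column images to be subspaces and the fibers to be equinumerous, and it is why the bound is attained with equality precisely when every column map is surjective — which is exactly the situation for simplex codes, as will be used later.
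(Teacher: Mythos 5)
Your proof is correct, and it takes a genuinely different route from the paper's. The paper follows the classical Plotkin-bound argument: it bounds the pairwise distance sum $\sum_{u,v\in\mathcal{C}}d(u,v)$ coordinate-wise via the counts $m_{i,j}=|\{c\in\mathcal{C}:c_i=a_j\}|$ and the inequality of arithmetic and quadratic means, obtaining $\sum_{j}m_{i,j}(|\mathcal{C}|-m_{i,j})\leq|\mathcal{C}|^2(1-\tfrac1q)$, and only then invokes linearity to convert the distance sum into $|\mathcal{C}|\sum_{c}\operatorname{wt}(c)$. You instead attack $\sum_{c}\operatorname{wt}(c)$ directly, decomposing it over coordinates and using linearity up front: each coordinate functional $\varphi_j$ is either zero or surjective, and in the surjective case its fibers are equinumerous cosets of $\ker\varphi_j$, so exactly $|\mathcal{C}|(1-\tfrac1q)$ codewords are nonzero in that position. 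Your argument is shorter, avoids any analytic inequality, and immediately identifies the equality case (every $\varphi_j$ surjective), which is the relevant situation for the simplex codes used later. The paper's averaging argument has the advantage that its core step (the bound on the distance sum) does not require linearity at all and is the form needed for the general Plotkin bound quoted just before the lemma; the authors likely chose it to keep the exposition aligned with the cited textbook treatment. Both proofs are complete and the fiber-size claim you flag is indeed the only point needing justification, which the homomorphism theorem supplies.
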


\begin{proof}
We write $\mathbb F_q=\{a_1,\hdots, a_q\}$ and define $m_{i,j}:=|\{c\in\mathcal{C}\ |\ c_i=a_j\}|$ for $i\in\{1,\hdots,n\}$, $j\in\{1,\hdots,q\}$. Note that $\sum_{j=1}^qm_{i,j}=|\mathcal{C}|$ for all $i\in\{1,\hdots,n\}$. Furthermore, we have
$$\sum_{u,v\in\mathcal{C}}d(u,v)=\sum_{i=1}^n|\{u,v\in\mathcal{C}\ |\ u_i\neq v_i\}|=\sum_{i=1}^n\sum_{j=1}^q m_{i,j}(|\mathcal{C}|-m_{i,j}).$$
As in \cite[p. 67]{van1998introduction}, we obtain the upper bound
$$\sum_{j=1}^q m_{i,j}(|\mathcal{C}|-m_{i,j})=|\mathcal{C}|^2-\sum_{j=1}^qm_{i,j}^2\leq |\mathcal{C}|^2\left(1-\frac{1}{q}\right),$$
where in the last step the inequality of arithmetic and quadratic means is used
in the form $\sum_{j=1}^qm_{i,j}^2\geq \frac{1}{q}\left(\sum_{j=1}^q m_{i,j}\right)^2=\frac{1}{q}|\mathcal{C}|^2$.
Moreover, for linear codes,
$$\sum_{u,v\in\mathcal{C}}d(u,v)=\sum_{u,v\in\mathcal{C}}\operatorname{wt}(u-v)=|\mathcal{C}|\sum_{c\in\mathcal{C}}\operatorname{wt}(c)$$
since for each $c,v\in\mathcal{C}$ there exists $u\in\mathcal{C}$ with $c=u-v$ and hence, for each $c\in\mathcal{C}$, there are $|\mathcal{C}|$ ways to write is as difference of two codewords. 
Finally, this yields
$$|\mathcal{C}|\sum_{c\in\mathcal{C}}\operatorname{wt}(c)\leq n \cdot  |\mathcal{C}|^2\cdot \left(1-\frac{1}{q}\right),$$
and dividing by $|\mathcal{C}|$ proves the lemma.
\end{proof}

One can see that simplex codes $\mathcal{S}(q,m)$ achieve equality in the bound of the previous lemma as both sides of the inequality are equal to $q^{m-1}(q^m-1)$.

\subsection{Convolutional codes}
In this subsection we introduce convolutional codes and some of their basic properties.

\begin{definition}
An $(n,k)$ \textbf{convolutional code} $\mathcal{C}$ is defined as an $\mathbb{F}_{q}[z]$-submodule of $\mathbb{F}_{q}[z]^n$ of rank $k$. A matrix $G(z)\in \mathbb{F}_{q}[z]^{k \times n}$ whose rows form a basis of $\mathcal{C}$ is called a \textbf{generator matrix} for $\mathcal{C}$, i.e.,
{\begin{eqnarray} 
\mathcal{C} &\hspace{-1.5mm}
=\hspace{-1.5mm}& \{{v(z) \in \mathbb{F}_{q}[z]^{n}: v(z) = u(z)G(z) \text{ with } u(z) \in \mathbb{F}_{q}[z]^{k}\}.}\nonumber
\end{eqnarray}}
\end{definition}

Linear block codes can be seen as convolutional codes with constant generator matrix. There are several notions of degree related to a convolutional code, which we will introduce in the following.

\begin{definition}
Let $G(z)=\sum_{i=0}^{\mu}G_i z^{i}\in\mathbb F_q[z]^{k\times n}$ with $G_{\mu} \neq 0$ and $k\leq n$.
For each $i\in\{1,\hdots,k\}$, the $i$-th \textbf{row degree} $\nu_i$ of $G(z)$ is defined as the largest degree of any entry in row $i$ of $G(z)$, and the largest row degree $\mu=\max_{i=1,\dots,k}\nu_i$ is called the \textbf{memory} of $G(z)$.
The \textbf{external degree} of $G(z)$ is defined as the sum of the row degrees of $G(z)$. The \textbf{internal degree} of $G(z)$ is defined as the maximal degree of the  $k\times k$ minors of $G(z)$.
\end{definition}

\begin{definition}
A matrix $G(z)\in \mathbb F_q[z]^{k\times n}$ is said to be
\textbf{row reduced}\index{row reduced} if its internal and external degrees are equal. In this case, $G(z)$ is called a \textbf{minimal} generator matrix of the convolutional code it generates. The \textbf{degree} $\delta$ of a code $\mathcal{C}$ is defined as
the external degree of a minimal generator matrix of $\mathcal{C}$. An $(n,k)$ convolutional code with degree $\delta$ is called an $(n,k,\delta)$ convolutional code.

A minimal generator matrix $G(z)$ is said to have \textbf{generic row degrees} if  
$\mu=\left\lceil\frac{\delta}{k} \right\rceil$ and exactly $k\left\lceil\frac{\delta}{k}\right\rceil-\delta$ rows of $G_{\mu}$ are zero.
We denote by $\Tilde{G}_{\mu}\in\mathbb F_q^{(\delta+k-k\left\lceil\frac{\delta}{k}\right\rceil)\times n}$ the matrix consisting of the $\delta+k-k\left\lceil\frac{\delta}{k}\right\rceil$ nonzero rows of  $G_{\mu}$.
In addition, we denote by $\tilde{u}_i\in\mathbb F_q^{\delta+k-k\left\lceil\frac{\delta}{k}\right\rceil}$ the vector consisting of the corresponding $\delta+k-k\left\lceil\frac{\delta}{k}\right\rceil$ components of $u_i\in\mathbb F_q^k$.
\end{definition}

\begin{definition}
    Let $\mathcal{C}$ be an $(n, k, \delta)$ convolutional code.
    Then we say that $\mathcal{C}$ admits a \textbf{parity-check matrix}, if there exists $H(z)\in\mathbb F_q[z]^{(n-k)\times n}$ of full rank such that
  $$        \mathcal{C} = \{ v(z) \in \mathbb F_q[z]^n \ |\ H(z)v(z)^\top = 0 \}.$$
  If such a parity-check matrix exists for $\mathcal{C}$, then $\mathcal{C}$ is called \textbf{non-catastrophic}.
\end{definition}

Not every convolutional code admits a representation via a parity-check matrix but only those with generator matrices of a special form as described in the following theorem.

\begin{theorem}\cite{von1997algebraic}
    Let $\mathcal{C}$ be an $(n, k, \delta)$ convolutional code with generator matrix $G(z)\in\mathbb F_q[z]^{k\times n}$. 
    Then, $\mathcal{C}$ admits a parity-check matrix if and only if $G(z)$ is left-prime, i.e., if
    \[
        \operatorname{rk}( G(\lambda) )= k
    \]
     for all $\lambda \in \overline{\mathbb F}_q$, where $\overline{\mathbb F}_q$ denotes the algebraic closure of $\mathbb F_q$.
\end{theorem}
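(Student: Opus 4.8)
The plan is to reduce everything to the Smith form of $G(z)$ over the principal ideal domain $\mathbb F_q[z]$. Write $G(z)=U(z)\,[\,\Gamma(z)\mid 0\,]\,V(z)$ with $U(z)\in\mathbb F_q[z]^{k\times k}$ and $V(z)\in\mathbb F_q[z]^{n\times n}$ unimodular (i.e.\ with polynomial inverses) and $\Gamma(z)=\operatorname{diag}(\gamma_1(z),\dots,\gamma_k(z))$ the diagonal matrix of invariant factors, $\gamma_i\mid\gamma_{i+1}$. The first thing I would record is the reformulation of the rank condition: since the greatest common divisor of the $k\times k$ minors of $G(z)$ equals $\gamma_1\cdots\gamma_k$ up to a unit, the condition $\operatorname{rk}(G(\lambda))=k$ for all $\lambda\in\overline{\mathbb F}_q$ is equivalent to all $\gamma_i$ being nonzero constants, hence to $\Gamma(z)$ being unimodular and to $G(z)$ possessing a polynomial right inverse. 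This single equivalence does most of the work and lets me treat left-primeness as the statement that $G(z)=U(z)\,[\,I_k\mid 0\,]\,V(z)$ after absorbing the (unit) invariant factors into $U(z)$.

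For the direction ``left-prime $\Rightarrow$ parity-check matrix'', I would partition $V(z)^{-1}=[\,W_1(z)\mid W_2(z)\,]$ into an $n\times k$ block and an $n\times(n-k)$ block and set $H(z):=W_2(z)^\top\in\mathbb F_q[z]^{(n-k)\times n}$. Writing $V(z)=\begin{pmatrix}V_1(z)\\ V_2(z)\end{pmatrix}$ conformally, the identity $V(z)V(z)^{-1}=I_n$ gives $V_1W_2=0$, whence $G\,H^\top=U\,V_1\,W_2=0$, and $H(z)$ has full rank $n-k$ because its columns form a sub-block of a unimodular matrix. It then remains to check $\ker H=\mathcal C$ as $\mathbb F_q[z]$-modules: the inclusion $\mathcal C\subseteq\ker H$ is immediate, and for the reverse I would take $v$ with $vW_2=0$, expand $v=v\,V^{-1}V=(vW_1)\,V_1+(vW_2)\,V_2=(vW_1)V_1$, and note that $V_1=U^{-1}G$ has the same polynomial row module as $G$, so $v\in\mathcal C$.

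For the converse I would argue by contraposition: if some invariant factor $\gamma_k(z)$ is non-constant, I construct a rational vector witnessing a contradiction with the existence of $H$. Taking $w(z):=\gamma_k(z)^{-1}\,e_k\,U(z)^{-1}$ (with $e_k$ the $k$-th standard basis row vector), one computes $c(z):=w(z)G(z)=e_k V(z)$, which is polynomial, while $w(z)$ itself is not polynomial because the entries of the $k$-th row of the unimodular matrix $U^{-1}$ have greatest common divisor $1$ and therefore are not all divisible by $\gamma_k$. Since $\mathcal C=\ker H$ forces $G\,H^\top=0$, we get $H\,c^\top=(G H^\top)^\top w^\top=0$, so the polynomial vector $c$ lies in $\mathcal C$ and hence equals $p(z)G(z)$ for some polynomial $p$; but $G$ has full row rank over $\mathbb F_q(z)$, so $p=w$, contradicting that $w$ is non-polynomial. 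I expect the main obstacle to be precisely this last direction, specifically the clean production of a non-polynomial $w$ with $wG$ polynomial together with the verification that $wG$ is genuinely captured by the parity-check condition; the forward direction and the rank-to-invariant-factor reformulation are comparatively routine once the Smith form is in hand.
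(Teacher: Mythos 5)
The paper does not prove this statement; it is imported verbatim from the cited reference \cite{von1997algebraic}, so there is no in-paper proof to compare against. Judged on its own, your Smith-form argument is correct and complete: the identification of left-primeness with the invariant factors $\gamma_1,\dots,\gamma_k$ all being units (via the $k$-th determinantal divisor) is right; the forward direction correctly produces $H=W_2^\top$ from the block $[\,W_1\mid W_2\,]=V^{-1}$, with $GH^\top=UV_1W_2=0$, full rank of $H$ from unimodularity of $V$, and the reverse inclusion via $v=(vW_1)V_1=(vW_1)U^{-1}G$; and the contrapositive for the converse is sound, since the gcd of the entries of a row of the unimodular matrix $U^{-1}$ divides its (constant, nonzero) determinant and hence is a unit, so $w=\gamma_k^{-1}e_kU^{-1}$ is genuinely non-polynomial while $wG=e_kV$ is polynomial and is forced into $\ker H=\mathcal C$ by $GH^\top=0$, contradicting uniqueness of the representation over $\mathbb F_q(z)$. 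The only points worth making explicit in a written version are that all $\gamma_i$ are nonzero because the rows of $G(z)$ form a basis of a rank-$k$ module (so $\gamma_k^{-1}$ exists), and that $\mathcal C\subseteq\ker H$ for an arbitrary parity-check matrix $H$ already forces $GH^\top=0$ by testing against the standard basis inputs; both are one-line observations you essentially use implicitly.
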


Note that if $G(z),\tilde{G}(z)\in\mathbb F_q[z]^{k\times n}$ are generator matrices of the same convolutional code, then $G(z)=U(z)\tilde{G}(z)$ for some $U(z)\in\mathbb F_q[z]^{k\times k}$ with $\det(U(z))=1$. Hence, $G(z)$ is left prime if and only if $\tilde{G}(z)$ is left prime.

\begin{definition}
A generator matrix $G(z)\in {\mathbb{F}_{q}[z]}^{k\times n}$ with $G_0=G(0)$ of full row rank is called \textbf{delay-free}. 
\end{definition} 

Note that all non-catastrophic convolutional codes are delay-free.
Furthermore, note that if a convolutional code is not delay-free, then it is not possible to uniquely recover a message from its corresponding codeword, even if no errors occurred during transmission, since $v_0=u_0G_0$.
Therefore, we will only work with delay-free convolutional codes in this paper.

Similarly as for block codes there exist notions of distance to measure the error-correcting capability of convolutional codes.
In the following, we present the two most important distance notions, the free distance and the column distance.

\begin{definition}
The (Hamming) weight
of $v(z)=\sum_{t\in\mathbb N_0}v_tz^t \in \mathbb{F}_q[z]^n$ is defined as $\operatorname{wt}(v(z))=\sum_{t\in\mathbb N_0}\operatorname{wt}(v_t)$, where $\operatorname{wt}(v_t)$ is the (Hamming) weight of $v_t\in\mathbb F_q^n$.
\end{definition}

\begin{definition}
{\em
The \textbf{free distance}\index{code!minimum distance}\index{distance!free} of a convolutional code
  $\mathcal{C}$ is given by
  \[d_{free}(\mathcal{C}):=\min_{v(z)\in\mathcal{C}}\left\{\operatorname{wt}(v(z))\
    |\ v(z) \neq 0\right\}.\]
    }
\end{definition}

\begin{definition}
Let $G(z)=\sum_{i=0}^{\mu}G_iz^i\in \mathbb{F}_q[z]^{k\times n}$ be a generator matrix of a convolutional code $\mathcal{C}$. For $j\in\mathbb N_0$, we define the \textbf{truncated sliding
generator matrices} as
\begin{align*}
  G_j^c:=\arraycolsep=5pt
  \left[
  \begin{array}{ccc}
    \hspace{-1.5mm} G_0 \hspace{-1.5mm} & \hspace{-1.5mm}  \hspace{-1.5mm} \cdots \hspace{-1.5mm}& \hspace{-1.5mm} G_j \hspace{-1.5mm}\\
         & \hspace{-1.5mm} \ddots \hspace{-1.5mm} \hspace{-1.5mm}& \hspace{-1.5mm} \vdots \hspace{-1.5mm} \hspace{-1.5mm}\\
       & \hspace{-1.5mm} & \hspace{-1.5mm} G_0 \hspace{-1.5mm}
  \end{array}
              \right]\in \mathbb F_q^{(j+1)k\times (j+1)n}
              \end{align*}
where we set $G_i=0$ for $i>\mu$.
\end{definition}

\begin{definition}

For $j\in\mathbb N_0$, the \textbf{j-th column distance}\index{column distance} of a convolutional code $\mathcal{C}$ is defined as
\[
d_j^c(\mathcal{C}):=\min\left\{\operatorname{wt}(v_0,\hdots,v_j)\ |\ {v}(z)\in\mathcal{C} \text{ and }{v}_0 \neq 0\right\}.
\]
\end{definition}
Since the convolutional codes which we will construct in this paper will all be delay-free, we can use that in this case 
\begin{align*}
d_j^c(\mathcal{C})&=\min\left\{\operatorname{wt}(u_0,\hdots,u_j)G_j^c\ |\
{u}_0 \neq 0\right\}\\
&= \min_{u_0\neq 0}\sum_{i=0}^j \operatorname{wt}\left((u_0, \ldots, u_i)\begin{pmatrix}
    G_i\\ \vdots\\ G_0 \nonumber
\end{pmatrix}\right).
\end{align*}
Codewords of convolutional codes are often understood as sequences of coefficient vectors $(v_i)_{i\in\mathbb N_0}$, where $i$ represents the time instant at which $v_i$ is sent and received (possibly with errors).
In this sense, the $j$-th column distance is a measure for the number of errors that can be corrected with time delay $j$ and hence, the column distances are the crucial measure for low-delay applications.
The next theorem describes the relation between the free distance and the column distances of a convolutional code.

\begin{theorem}\cite{jo99}
For an $(n,k,\delta)$ convolutional code $\mathcal{C}$, one has
 $$0\leq d_0^c(\mathcal{C})\leq d_1^c(\mathcal{C})\leq\cdots\leq d_{free}(\mathcal{C})$$ and if $\mathcal{C}$ is non-catastrophic, then
$\displaystyle d_{free}(\mathcal{C})= \lim_{j\rightarrow\infty} d_j^c (\mathcal{C})$.
\end{theorem}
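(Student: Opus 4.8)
The plan is to prove the statement in three separate pieces: the monotonicity $d_j^c(\mathcal{C})\le d_{j+1}^c(\mathcal{C})$, the uniform upper bound $d_j^c(\mathcal{C})\le d_{free}(\mathcal{C})$, and finally the convergence $\lim_{j\to\infty}d_j^c(\mathcal{C})=d_{free}(\mathcal{C})$ under the non-catastrophic hypothesis. The first two pieces together yield the full chain of inequalities, while I expect the convergence to be the only genuinely delicate point.

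The two inequalities are essentially formal. For monotonicity, fix a codeword $v(z)\in\mathcal{C}$ with $v_0\neq 0$: the block $(v_0,\dots,v_j)$ is a prefix of $(v_0,\dots,v_{j+1})$, so $\operatorname{wt}(v_0,\dots,v_j)\le \operatorname{wt}(v_0,\dots,v_{j+1})$. Since both column distances minimize over the same set $\{v(z)\in\mathcal{C}:v_0\neq 0\}$ and the minimand can only increase when passing from $j$ to $j+1$, taking minima gives $d_j^c(\mathcal{C})\le d_{j+1}^c(\mathcal{C})$; non-negativity of $d_0^c$ is immediate since weights are non-negative. For the bound $d_j^c(\mathcal{C})\le d_{free}(\mathcal{C})$ I would use delay-freeness. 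Pick a non-zero codeword $w(z)=u(z)G(z)$ of minimal weight, so $\operatorname{wt}(w(z))=d_{free}(\mathcal{C})$. If $w_0=0$, then $u_0G_0=0$, and as $G_0$ has full row rank this forces $u_0=0$, hence $z\mid u(z)$ and $w(z)=z\cdot\big(z^{-1}u(z)\big)G(z)$ with $z^{-1}u(z)G(z)\in\mathcal{C}$. Iterating, I obtain $\tilde w(z)=z^{-s}w(z)\in\mathcal{C}$ with $\tilde w_0\neq 0$ and $\operatorname{wt}(\tilde w(z))=\operatorname{wt}(w(z))$, since multiplication by a power of $z$ preserves weight. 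Truncating yields $d_j^c(\mathcal{C})\le \operatorname{wt}(\tilde w_0,\dots,\tilde w_j)\le \operatorname{wt}(\tilde w(z))=d_{free}(\mathcal{C})$ for every $j$.

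It remains to prove convergence for non-catastrophic $\mathcal{C}$. From the previous parts the sequence $(d_j^c(\mathcal{C}))_j$ is non-decreasing, integer-valued and bounded above by $d_{free}(\mathcal{C})$, so it stabilizes at some $L\le d_{free}(\mathcal{C})$, and the task reduces to showing $L\ge d_{free}(\mathcal{C})$. I would argue through the state diagram (trellis) of the code, in which the $j$-th column distance is the minimal output weight, over $j+1$ steps, of a path starting at the zero state and having non-zero first output block $v_0\neq 0$, whereas $d_{free}(\mathcal{C})$ is the minimal weight of such a path that in addition eventually returns to the zero state. The crucial input is the characterization of non-catastrophicity: the only closed loop of zero output weight is the self-loop at the zero state. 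Consequently a path can stay among the finitely many non-zero states for only boundedly many steps before its weight exceeds any prescribed bound. Hence, for $j$ large enough, a minimizing path of weight $L$ cannot keep avoiding the zero state and must return to it; completing it with zeros then produces a genuine codeword of finite weight equal to $L$, so $d_{free}(\mathcal{C})\le L$ and therefore $L=d_{free}(\mathcal{C})$.

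The main obstacle is precisely this last step: converting the stabilized low-weight prefixes into an honest finite-weight codeword. The subtlety is to exclude low-weight paths that wander indefinitely through non-zero states, which is exactly what the non-catastrophic hypothesis rules out, and without which the limit can fail. A fully rigorous version can be carried out either via the positive-weight bound on non-zero-state cycles sketched above, or, equivalently, via a compactness argument (König's lemma) that extracts from the prefixes a weight-$L$ power series lying in the kernel of a parity-check matrix $H(z)$, which non-catastrophicity then forces to be a polynomial codeword of weight $L$.
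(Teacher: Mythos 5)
The paper does not prove this theorem; it is imported verbatim from the cited reference \cite{jo99}, so there is no in-paper argument to compare yours against. Judged on its own, your proof is essentially the standard one and is correct. The monotonicity step is immediate as you say, and your reduction of $d_j^c(\mathcal{C})\leq d_{free}(\mathcal{C})$ to finding a minimum-weight codeword with nonzero constant term is right; note that the appeal to delay-freeness there is genuinely needed (for a non-delay-free code one can have $d_0^c>d_{free}$, e.g.\ $G(z)=\left(\begin{smallmatrix}1&1&1&1\\0&0&z&z\end{smallmatrix}\right)$), and although the theorem as stated omits this hypothesis, it is a standing assumption in the paper, so you are correct to use it. For the convergence step, your state-diagram argument is sound, but be aware that the characterization you lean on --- ``non-catastrophic iff the only zero-output-weight cycle is the self-loop at the zero state'' --- is not the paper's definition (which is existence of a polynomial parity-check matrix, equivalently left-primeness of $G(z)$); invoking it imports the Massey--Sain criterion without proof. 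The cleaner way to close the argument inside this paper's framework is exactly the alternative you sketch in your last sentence: the weight-$\leq L$ prefixes form an infinite finitely-branching tree, K\"onig's lemma yields a power series $v(z)$ with $v_0\neq 0$ and $\operatorname{wt}(v)\leq L$ all of whose truncations are truncations of codewords, the condition $H(z)v(z)^\top=0$ is checkable coefficientwise from truncations, and finite weight forces $v(z)$ to be a polynomial codeword, giving $d_{free}(\mathcal{C})\leq L$. Either route works; I would just promote the K\"onig/parity-check version from a remark to the main argument, since it uses only the definitions actually given in the paper.
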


In the following, we give upper bounds for the column distances of a convolutional code and consider convolutional codes with maximal column distances.

\begin{theorem}\cite{GRS:sMDS}
Let $\mathcal{C}$ be an $(n,k,\delta)$ delay-free convolutional code. For $j \in \mathbb N_0$,
$$
d^c_j ({\mathcal{C}})\leq (n-k)(j+1)+1.
$$
Moreover, if $d^c_j ({\mathcal{C}})= (n-k)(j+1)+1$ for some $j \in \mathbb N_0$, then $d^c_i({\mathcal{C}}) = (n-k)(i+1)+1$, for $i \leq j$.
\end{theorem}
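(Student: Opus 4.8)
The plan is to prove both statements by an explicit greedy construction of low-weight codewords, using only that a delay-free generator matrix has $G_0$ of full row rank. First I would normalize the situation: since left multiplication of $G(z)$ by a constant invertible matrix in $\mathbb F_q^{k\times k}$ yields another generator matrix of the \emph{same} code, and applying a fixed column permutation to all blocks $G_i$ replaces $\mathcal{C}$ by an equivalent code with identical block weights and hence identical column distances, I may assume without loss of generality that $G_0=[\,I_k\mid B_0\,]$ for some $B_0\in\mathbb F_q^{k\times(n-k)}$. I will work throughout with the convolution identity $v_t=u_tG_0+u_{t-1}G_1+\cdots+u_0G_t$ (setting $G_i=0$ for $i>\mu$) coming from $v(z)=u(z)G(z)$.

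The core of the argument is a \emph{forward cancellation} step. Suppose $u_0,\dots,u_{t-1}$ have already been fixed, determining the known vector $w_t\in\mathbb F_q^k$ consisting of the first $k$ coordinates of $u_{t-1}G_1+\cdots+u_0G_t$. Because $u_tG_0=[\,u_t\mid u_tB_0\,]$, the first $k$ coordinates of $v_t$ equal $u_t+w_t$, so the choice $u_t:=-w_t$ forces $v_t$ to be supported on its last $n-k$ coordinates, whence $\operatorname{wt}(v_t)\le n-k$. For the upper bound I would initialize this process with $u_0=e_1$, the first standard basis vector; then $v_0=[\,e_1\mid(\text{first row of }B_0)\,]$ has weight at most $1+(n-k)$, and applying the cancellation step for $t=1,\dots,j$ gives $\operatorname{wt}(v_t)\le n-k$ for each such $t$. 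Summing, the codeword $(v_0,\dots,v_j)$ satisfies $\operatorname{wt}(v_0,\dots,v_j)\le (1+(n-k))+j(n-k)=(n-k)(j+1)+1$, and since $u_0\neq 0$ this is an admissible competitor in the definition of $d_j^c(\mathcal{C})$, proving the bound.

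For the ``moreover'' statement I would argue by contradiction, isolating a single key inequality. Assume $d_j^c(\mathcal{C})=(n-k)(j+1)+1$ and suppose, for some $i<j$, that $d_i^c(\mathcal{C})<(n-k)(i+1)+1$, i.e. $d_i^c(\mathcal{C})\le (n-k)(i+1)$. Pick a codeword with $v_0\neq 0$ and $\operatorname{wt}(v_0,\dots,v_i)\le(n-k)(i+1)$, realized by inputs $u_0,\dots,u_i$ with $u_0\neq0$; note $v_0=u_0G_0\neq0$ automatically because $G_0$ has full row rank. Now extend these inputs by applying the same forward cancellation step for $t=i+1,\dots,j$, which contributes weight at most $n-k$ per block without affecting $v_0,\dots,v_i$. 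The resulting codeword has $v_0\neq0$ and total weight at most $(n-k)(i+1)+(j-i)(n-k)=(n-k)(j+1)$, contradicting $d_j^c(\mathcal{C})=(n-k)(j+1)+1$. Hence $d_i^c(\mathcal{C})\ge(n-k)(i+1)+1$ for every $i\le j$, and together with the upper bound already proved this yields $d_i^c(\mathcal{C})=(n-k)(i+1)+1$.

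The step I expect to be the genuine obstacle --- and the reason a naive linear-algebra count does not suffice --- is maintaining the constraint $u_0\neq0$ (equivalently $v_0\neq0$) while still driving the weight down to the bound. A dimension argument that merely selects $k(j+1)-1$ coordinates to annihilate produces some nonzero input, but it need not have $v_0\neq0$ and therefore need not be counted by $d_j^c$. The forward-cancellation construction sidesteps this precisely by fixing $u_0$ at the outset and only then cancelling, so I would take particular care to justify that (i) the normalization $G_0=[\,I_k\mid B_0\,]$ preserves $d_j^c(\mathcal{C})$, and (ii) the full row rank of $G_0$ is exactly what makes each cancellation step solvable and what forces $v_0\neq0$ in the contradiction argument.
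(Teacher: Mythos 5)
Your proof is correct and complete. Note that the paper does not prove this statement at all --- it is quoted from the cited reference \cite{GRS:sMDS} --- so there is no in-paper argument to compare against; your forward-cancellation construction (normalize to $G_0=[\,I_k\mid B_0\,]$, fix $u_0=e_1$, then greedily annihilate the first $k$ coordinates of each subsequent block, and run the same extension to get the ``moreover'' part by contradiction) is the standard proof of this bound, and you correctly identify the one genuinely delicate point, namely that the competitor must keep $v_0\neq 0$, which your construction guarantees since $G_0$ has full row rank.
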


\begin{definition} An $(n,k,\delta)$ delay-free convolutional code with 
$$d^c_j(\mathcal{C})= (n-k)(j+1)+1\quad\text{for}\quad j \leq L:=\left\lfloor\frac{\delta}{k}\right\rfloor+\left\lfloor\frac{\delta}{n-k}\right\rfloor$$ is called \textbf{Maximum Distance Profile (MDP)} convolutional code.
\end{definition}

Since MDP codes are hard to construct and their construction requires very large finite fields, our aim is to construct convolutional codes that are optimal in the sense of the following definition. Having in mind that the $j$-th column distance is the measure for the number of errors that can be corrected with time delay $j$, we want to maximize the column distances for small values of $j$ first.

\begin{definition}\label{defopt}
   We say that a delay-free $(n,k,\delta)$ convolutional code $\mathcal{C}\subset\mathbb F_q[z]^n$ has \textbf{optimal column distances} if there exists no delay-free $(n,k,\delta)$ convolutional code $\hat{\mathcal{C}}\subset\mathbb F_q[z]^n$ such that $d^c_j(\hat{\mathcal{C}})>d^c_j(\mathcal{C})$ for some $j\in\mathbb N_0$ and $d^c_i(\hat{\mathcal{C}})=d^c_i(\mathcal{C})$ for all $0\leq i<j$.
\end{definition}

\section{The Viterbi algorithm for convolutional codes}
\label{sec:viterbi}

In this section, we will describe the general idea of the Viterbi algorithm.
For ease of notation and since we will later use the Viterbi algorithm (in an improved version) only for row-reduced generator matrices with generic row degrees, we will present the Viterbi algorithm only for such generator matrices.

A convolutional encoder can be represented by a  trellis diagram with the \textbf{states} $S_t=(u_{t-1},\dots, \tilde{u}_{t-\mu})\in\mathbb F_q^{\delta}$ at each time instant $t$ representing the previous $\mu$ inputs, see Figure \ref{trellis}.
We do not consider the components of $u_{t-\mu}$ which are multiplied with zero rows of $G_{\mu}$ when calculating $c_t$.
Transitions occur only between states $S_t$ at time instance $t$ and states $S_{t+1}$ at time instance $t + 1$.
The Viterbi algorithm for convolutional codes can be seen as ``dynamic programming'' applied to a trellis diagram \cite{forney2005}. 
In the following we will present this algorithm and illustrate it with an example.

Inputs for the algorithm are a row-reduced generator matrix $G(z)$ with memory $\mu$ and generic row degrees and the received polynomial $r(z) = \sum_{i=0}^{N-1}r_iz^i \in \mathbb{F}_q[z]^n$, where $N$ is called the \textbf{length} of the sequence $(r_0,\hdots,r_{N-1})$.
If we have a message $u(z)$, we will send $c(z) = u(z)G(z)$ and assume we know that $\deg(c(z))=N-1.$
This condition will be used as a stopping criterion for the Viterbi algorithm since $$\deg(u(z)) \leq N-1 - \mu
.$$
The algorithm is as follows:

    For $t\in\{1,\hdots,N-\mu\}$, for every $u_0,\hdots, u_{t-1}\in \mathbb{F}_q^k$, compute $$c_{t-1} = \sum_{j=0}^{t-1} u_j G_{t-1-j} \quad \text{and} \quad \sum_{j=0}^{t-1} d(c_j, r_j).$$
    If the same value $\sum_{j=0}^{t-1} d(c_j, r_j)$ for the state $(u_{t-1}, \dots , \tilde{u}_{t-\mu})$ appears several times, which can happen only if $t\geq \mu+1$, we only keep (one of) the corresponding codewords with the smallest distance $\sum_{j=0}^{t-1} d(c_j, r_j)$.
    Note that for $t< \mu$, one reaches $q^{tk}$ different states and for $t\geq\mu$ one reaches $q^{\delta}$ different states.

    Termination: For $t\in\{N-\mu+1, \dots , N\}$, do the same but with fixing $$u_{N-\mu} = \dots = u_{N-1} = 0.$$
    This implies that for $t\in\{N-\mu+1, \dots , N-1\}$, the number of possible states is equal to $q^{\delta-(t-N+\mu)k}$ and for $t = N$ it is equal to $1$. Therefore, after step $N$, we obtain a unique codeword as the result of the decoding.

\begin{figure}
    \centering
    \tikzstyle{state}=[shape=circle,draw=black,scale=0.7]
\tikzstyle{lightedge}=[<-, dashed,>=stealth, scale=0.7]
\tikzstyle{mainstate}=[state,very thick]
\tikzstyle{mainedge}=[<-,very thick,>=stealth, scale=0.7]
\begin{tikzpicture}[]
\node               at (0,4.3) {$\scriptstyle{t=0}$};
\node[mainstate] (s1_1) at (0,3.8) {$00$};
\node               at (2,4.3) {$\scriptstyle{ t=1}$};
\node[state] (s1_2) at (2,3.8) {$00$}
    edge[lightedge] (s1_1);
\node[mainstate] (s3_2) at (2,2.4) {$10$}
     edge[mainedge] node[midway, above, yshift = -1mm, sloped] {\bf{\tiny{1/11}}}(s1_1);
\node               at (4,4.3) {$\scriptstyle{ t=2}$};
\node[state] (s1_3) at (4,3.8) {$00$}
    edge[lightedge]  (s1_2);
\node[mainstate] (s2_3) at (4,3.1) {$01$}
    edge[mainedge] node[midway, above, xshift=-1mm, yshift = -1mm, sloped] {\bf{\tiny{0/01}}}(s3_2);
\node[state] (s3_3) at (4,2.4) {$10$}
    edge[lightedge] (s1_2);   
\node[state] (s4_3) at (4,1.7) {$11$}
    edge[lightedge] (s3_2);
\node               at (6,4.3) {$\scriptstyle{ t=3}$};
\node[state] (s1_4) at (6,3.8) {$00$}
    edge[lightedge]  (s1_3)
    edge[lightedge]  (s2_3);
\node[state] (s2_4) at (6,3.1) {$01$}
    edge[lightedge] (s3_3)
    edge[lightedge] (s4_3);
\node[mainstate] (s3_4) at (6,2.4) {$10$}
    edge[lightedge] (s1_3)
    edge[mainedge] node[midway, above, yshift = -1mm, xshift=-1mm, sloped] {\bf{\tiny{1/00}}}(s2_3);
\node[state] (s4_4) at (6,1.7) {$11$}
    edge[lightedge] (s3_3)
    edge[lightedge] (s4_3);
\node               at (8,4.3) {$\scriptstyle{ t=4}$};
\node[state] (s1_5) at (8,3.8) {$00$}
    edge[lightedge]  (s1_4)
    edge[lightedge]  (s2_4);
\node[state] (s2_5) at (8,3.1) {$01$}
   edge[mainedge] node[midway, above, yshift = -1mm, sloped] {\bf{\tiny{0/01}}} (s3_4)
    edge[lightedge] (s4_4);
\node               at (10,4.3) {$\scriptstyle{ t=5}$};
\node[state] (s1_6) at (10,3.8) {$00$}
    edge[lightedge]  (s1_5)
    edge[mainedge] node[midway, above, yshift = -1mm, xshift=-1mm, sloped] {\bf{\tiny{0/11}}}  (s2_5);
\end{tikzpicture}
    \caption{Trellis for the code of Example \ref{example:viterbi} with the sent codeword highlighted} 
    \label{trellis}
\end{figure}
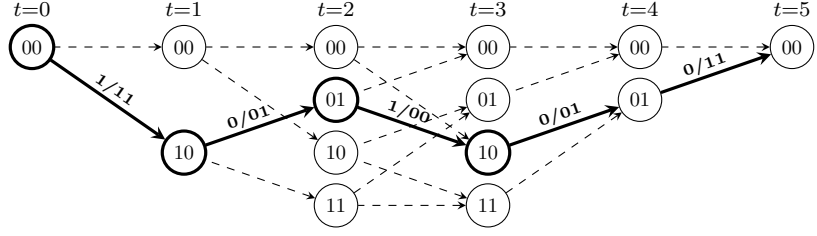

\begin{example}
\label{example:viterbi}
    Consider the $(2, 1, 2)$ convolutional code $\mathcal{C}$ with generator matrix $G(z) = \begin{pmatrix}1+z^2 & 1+z+z^2\end{pmatrix} \in \mathbb{F}_2[z]^{2}$, i.e., $\mu = 2$ and
    \begin{align}
        G_0 = \begin{pmatrix}1 & 1\end{pmatrix}, \quad G_1 = \begin{pmatrix}0 & 1\end{pmatrix}, \quad G_2 = \begin{pmatrix}1 & 1\end{pmatrix}.
    \end{align}
        Assume, we received $r(z) = \begin{pmatrix}1+z^3+z^4  & 1+z+z^3+z^4\end{pmatrix}$, i.e.,
    \begin{align*}
        \begin{array}{c|c|c|c|c}
            r_0 &  r_1 & r_2 & r_3 & r_4  \\
            \hline
            1 \; 1 & 0\;1 & 0\;0 & 1\;1 & 1\;1
        \end{array}
    \end{align*}
    We get $N-1 -\mu = 4-2 = 2$.
    The possible states are given by all tuples of the form $(u_{t-1}, u_{t-2}) \in \mathbb{F}_2^2$, i.e.,
    $        \begin{pmatrix}0  0\end{pmatrix}, 
        \begin{pmatrix}0  1\end{pmatrix}, 
        \begin{pmatrix}1  0\end{pmatrix}, 
        \begin{pmatrix}1  1\end{pmatrix}
    $.\\
    We go through the algorithm step by step for $t = 1, \dots, 5$. 
    \begin{itemize}
        \item $t = 1$: 
            For all $u_0 \in \mathbb{F}_2$ compute $c_0 = u_0G_0$ and $d(c_0, r_0)$.
            \begin{itemize}
                \item $u_0 = 0:$ $c_0 = (0 \; 0)$ and $d( (0\; 0) , (1\; 1)) = 2$
                \item $u_0 = 1:$ $c_0 = (1 \; 1)$ and $d( (1\; 1) , (1\; 1)) = 0$
            \end{itemize}
        \item $t = 2$:
            For all $u_0, u_1 \in \mathbb{F}_2$ compute $c_1 = u_0 G_1 + u_1 G_0$ and \\$d=d((c_0\ c_1), (r_0\ r_1)) = d(c_0, r_0) + d(c_1, r_1)$.
            Hence,
            \begin{itemize}
                \item $(u_1\ u_0) = (0 \ 0):$ $c_1 = (0\; 0)$ and $d = 2 + 1 = 3$
                \item $(u_1\ u_0) = (0 \ 1):$ $c_1 = (0\; 1)$ and $d = 0 + 0 = 0$
                \item $(u_1\ u_0) = (1 \ 0):$ 
                $c_1 = (1\; 1)$ and $d = 2 + 1 = 3$
                \item $(u_1\ u_0) = (1 \ 1):$ $c_1 = (1\; 0)$ and $d = 0 + 2 = 2$
            \end{itemize}
        \item $t = 3$: 
            For all $u_0, u_1, u_2 \in \mathbb{F}_2$ compute $c_2 = u_0 G_2 + u_1 G_1  +u_2 G_0$ and $d=d((c_0\ c_1\ c_2), (r_0\ r_1\ r_2)) = d(c_0, r_0) + d(c_1, r_1)+ d(c_2, r_2)$.
            The first two components of $(u_2\ u_1\ u_0)$ denote now the state.
            \begin{itemize}
                \item $(u_2\ u_1\ u_0)=(0\ {0\ 0}):$ $c_2 = (0 \; 0)$ and $d = 3 + 0 = 3$
                \item $(u_2\ u_1\ u_0)=(0\ {0\ 1}):$ $c_2 = (1 \; 1)$ and $d = 0 + 2 = \mathbf{2}$
                \item $(u_2\ u_1\ u_0)=(0\ {1\ 0}):$ $c_2 = (0 \; 1)$ and $d = 3 + 1 = 4$
                  \item $(u_2\ u_1\ u_0)=(0\ {1\ 1}):$ $c_2 = (1 \; 0)$ and $d = 2 + 1 = \mathbf{3}$
                \item $(u_2\ u_1\ u_0)=(1\ {0\ 0}):$ $c_2 = (1 \; 1)$ and $d = 3 + 2 = 5$     \item $(u_2\ u_1\ u_0)=(1\ {0\ 1}):$ $c_2 = (0 \; 0)$ and $d = 0 + 0 = \mathbf{0}$
               \item $(u_2\ u_1\ u_0)=(1\ {1\ 0}):$ $c_2 = (1 \; 0)$ and $d = 3 + 1 = 4$
                \item $(u_2\ u_1\ u_0)=(1\ {1\ 1}):$ $c_2 = (0 \; 1)$ and $d = 2+ 1 = \mathbf{3} $
            \end{itemize}
           For each state we marked the smallest value, i.e., the value corresponding to the surviving path, in bold.
        \item For $t = 4$, we set $u_3 = 0$ (termination)  and have $c_3 = u_1G_2 + u_2G_1$ and proceed as before to obtain: 
            \begin{itemize}
                \item $(u_3\ u_2\ u_1\ u_0)=(0\ 0 \ 0 \ 1):$ $c_3= (0 \; 0)$ and $d = 2 + 2 = 4$
                \item $(u_3\ u_2\ u_1\ u_0)=(0\ 0 \ 1 \ 1):$ $c_3 = (1\; 1)$ and $d = 3 + 0 = \mathbf{3}$
                \item $(u_3\ u_2\ u_1\ u_0)=(0\ 1 \ 0 \ 1):$ $c_3 = (0\; 1)$ and $d = 0 + 1 = \mathbf{1}$
                \item $(u_3\ u_2\ u_1\ u_0)=(0\ 1 \ 1 \ 1):$ $c_3 = (1\; 0)$ and $d = 3 + 1 = 4$
            \end{itemize}
            Again, for each state, we only keep the path with the smallest distance marked in bold.
        \item  For $t = 5$, we set $u_4 = 0$ (termination) and obtain:
            \begin{itemize}
               \item $(u_4\ u_3\ u_2\ u_1\ u_0)=(0\ 0 \ 0 \ 1 \ 1)$: $c_4 = (0\; 0)$ and $d = 3 + 2 = 5$
                \item $(u_4\ u_3\ u_2\ u_1\ u_0)=(0\ 0 \ 1 \ 0 \ 1)$: $c_4 = (1\; 1)$ and $d = 1 + 0 = 1$
                \end{itemize}
            Hence, the final surviving path is $(0,\; 0, \; 1, \; 0, \; 1)$.
    \end{itemize}
    \textbf{Decoding result:}  $$u(z) = \sum_{i=0}^4 u_i z^i = 1 + z^2.$$
    \textbf{Corresponding codeword:} $c(z) = \begin{pmatrix}1+z^4 & 1 + z + z^3 + z^4\end{pmatrix}$ which differs from $r(z)$ in one position, namely in the first entry of the coefficient of $z^3$:
    \begin{align*}
        \begin{array}{c|c|c|c|c}
            r_0 &  r_1 & r_2 & r_3 & r_4  \\
            \hline
            1 \; 1 & 0\;1 & 0\;0 & 1\;1 & 1\;1
        \end{array} \qquad 
        \begin{array}{c|c|c|c|c}
            c_0 &  c_1 & c_2 & c_3 & c_4  \\
            \hline
            1 \; 1 & 0\;1 & 0\;0 & 0\;1 & 1\;1
        \end{array}
    \end{align*}
\end{example}

\section{Construction of convolutional codes with optimal column distances}
\label{sec:construction}

In this section, we fix an arbitrary finite field $\mathbb F_q$, and present a construction of convolutional codes over $\mathbb F_q$ with optimal column distances;
this generalizes the construction for $q = 2$ from \cite{isit23}.

For our construction, we do not only choose the field size $q$ arbitrarily but also the dimension $k$ and the degree $\delta$ of the convolutional code; the length $n$ of the code will then be determined by the other parameters.

Building blocks for the construction of $(n,k,\delta)$ convolutional codes with optimal column distances will be generator matrices of $q$-ary MacDonald block codes $\mathcal{C}_{\delta+k,\delta}(q)$.
For ease of notation, we choose the representatives for the columns in the generator matrices of $\mathcal{S}(q,\delta+k)$ and $\mathcal{C}_{\delta+k,\delta}(q)$ such that the first nonzero entry of each column is equal to $1$.
This implies that (for fixed $q,\delta,k$) all generator matrices $S(q,\delta+k)$ (and also $C_{\delta+k,\delta}$) differ only by column permutations.
Via suitable column permutations one can achieve that

\begin{align*}
   S(q,\delta+k)=\begin{pmatrix}
    S(q,k) & \cdots & S(q,k) & 0\\
   X_1 & \cdots & X_{q^{\delta}} & S(q,\delta)
\end{pmatrix}
\end{align*}
where $X_i=(x_i, \ldots, x_i)\in\mathbb F_q^{\delta\times \frac{q^k-1}{q-1}}$ with $x_i\in\mathbb F_q^{\delta}$ and $\{x_1,\hdots,x_{q^{\delta}}\}=\mathbb F_q^{\delta}$.
Therefore we have
\begin{align*}
C_{\delta+k,\delta}(q) = 
    \begin{pmatrix}
    S(q,k) & \cdots & S(q,k)\\
   X_1 & \cdots & X_{q^{\delta}} 
\end{pmatrix}.
\end{align*}

\begin{example}
Choosing $q=3$, $\delta=1$ and $k=2$, one obtains
    $$S(3,3)=\left(\begin{array}{cccccccccccccccc}
      1 & 1 & 1 & 0 & \vline & 1 & 1 & 1 & 0 & \vline & 1 & 1 & 1 & 0 & \vline & 0\\
      0 & 1 & 2 & 1 & \vline & 0 & 1 & 2 & 1 & \vline & 0 & 1 & 2 & 1  & \vline & 0\\
      \hline 
    0 & 0 & 0 & 0 & \vline & 1 & 1 & 1 & 1 & \vline & 2 & 2 & 2 & 2 & \vline & 1
  \end{array}\right)$$
\end{example}

We are now ready to present our main construction of convolutional codes and prove that it has optimal column distances.
We will also call it \textbf{Construction 1} in the following.

\begin{theorem}[Construction 1]
\label{thm:construction_1}
The $\left(\frac{q^{\delta}(q^k-1)}{q-1},k,\delta\right)$ convolutional code $\mathcal{C}$ over $\mathbb F_q$ with generator matrix $G(z)=\sum_{i=0}^{\mu}G_iz^i$ where $\mu=\left\lceil\frac{\delta}{k}\right\rceil$ and

$$\begin{pmatrix}
    G_0\\ \vdots\\
    \tilde{G}_{\mu}
    \end{pmatrix}=
    \begin{pmatrix}
    S(q,k) & \cdots & S(q,k) \\
   X_1 & \cdots & X_{q^{\delta}} 
\end{pmatrix}=C_{\delta+k,\delta}(q)$$ 
is non-catastrophic and has optimal column distances

$$d_j^c(\mathcal{C})=\begin{cases}
q^{\delta+k-1}+j\cdot (q^{\delta+k-1}-q^{\delta-1}) & \text{for}\quad j\leq \left\lfloor\frac{\delta}{k}\right\rfloor\\
q^{\delta+k-1}+\left\lfloor\frac{\delta}{k}\right\rfloor\cdot (q^{\delta+k-1}-q^{\delta-1})  & \text{for}\quad j\geq \left\lfloor\frac{\delta}{k}\right\rfloor
\end{cases}$$
Moreover, 
$$d_{free}(\mathcal{C})=q^{\delta+k-1}+\left\lfloor\frac{\delta}{k}\right\rfloor\cdot (q^{\delta+k-1}-q^{\delta-1}).$$

\end{theorem}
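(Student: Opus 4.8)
My plan is to reduce everything to the block structure of $G(z)$ coming from the $q^\delta$ copies of $S:=S(q,k)$ in $G_0$. Writing $\mathbf 1$ for the all-ones row of length $\frac{q^k-1}{q-1}$, the defining shape of the generator matrix shows $G_0=(S\mid\cdots\mid S)$, while for $i\ge 1$ the restriction of $G_i$ to the $m$-th block is column-constant, its columns being the $k$-segment of $x_m$ that indexes $G_i$. Hence, setting $s_i:=(u_{i-1},\dots,u_{i-\mu})\in\mathbb F_q^{\delta}$ (with $u_\ell:=0$ for $\ell<0$), the $m$-th block of $c_i=\sum_{\ell}u_{i-\ell}G_\ell$ equals $u_iS+\langle s_i,x_m\rangle\mathbf 1$, where $\langle\,\cdot\,,\cdot\,\rangle$ is the standard inner product on $\mathbb F_q^{\delta}$. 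For non-catastrophicity I would evaluate $G(\lambda)$ for $\lambda\in\overline{\mathbb F}_q$ on the block with $x_{m_0}=0$ (which exists since the $x_m$ exhaust $\mathbb F_q^{\delta}$): there $G(\lambda)$ restricts to $S$, so $\operatorname{rk}(G(\lambda))\ge\operatorname{rk}(S)=k$ for every $\lambda$, whence $G(z)$ is left-prime and $\mathcal C$ is non-catastrophic.

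Next I would compute $\operatorname{wt}(c_i)$ from this block formula. Since $\{x_1,\dots,x_{q^\delta}\}=\mathbb F_q^{\delta}$, if $s_i\ne 0$ then $\langle s_i,x_m\rangle$ attains every value of $\mathbb F_q$ exactly $q^{\delta-1}$ times as $m$ ranges over the blocks; combining this with the identity $\sum_{\beta\in\mathbb F_q}\operatorname{wt}(u_iS+\beta\mathbf 1)=q^k-1$ (each of the $\frac{q^k-1}{q-1}$ coordinates is nonzero for exactly $q-1$ shifts $\beta$) yields $\operatorname{wt}(c_i)=q^{\delta-1}(q^k-1)=q^{\delta+k-1}-q^{\delta-1}$, independently of $u_i$. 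If instead $s_i=0$, all blocks reduce to $u_iS$, so $\operatorname{wt}(c_i)=q^{\delta+k-1}$ when $u_i\ne 0$ and $0$ when $u_i=0$. This trichotomy is the engine of the theorem, and it mirrors the two nonzero weights of the MacDonald code in Lemma~\ref{lemma:weight_Mac_Donald_Codes}.

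For the column distances I would use that for $1\le i\le j\le\lfloor\delta/k\rfloor$ the nonzero $u_0$ sits in an untruncated entry of $s_i$ (the coordinate multiplying $G_i$, which is a full block in this range), so $s_i\ne 0$; therefore $\sum_{i=0}^j\operatorname{wt}(c_i)=q^{\delta+k-1}+j\,(q^{\delta+k-1}-q^{\delta-1})$ for \emph{every} choice of $u_1,\dots,u_j$, which is precisely the claimed $d_j^c$. For the free distance I would exhibit a finite-weight codeword meeting this value: if $k\mid\delta$ the constant input $u(z)=u_0$ (any $u_0\ne 0$) gives weight $q^{\delta+k-1}+\lfloor\delta/k\rfloor(q^{\delta+k-1}-q^{\delta-1})$; if $k\nmid\delta$ I would take $u(z)=u_0$ with $u_0\ne 0$ supported on the coordinates multiplying the zero rows of $G_\mu$, so that $c_\mu=u_0G_\mu=0$ and only $\mu-1=\lfloor\delta/k\rfloor$ increments survive. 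Since the column distances are nondecreasing and $d_j^c\le d_{free}$, this codeword forces $d_{free}=q^{\delta+k-1}+\lfloor\delta/k\rfloor(q^{\delta+k-1}-q^{\delta-1})$ and, by squeezing, $d_j^c=d^c_{\lfloor\delta/k\rfloor}$ for all $j\ge\lfloor\delta/k\rfloor$.

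Finally, for optimality in the sense of Definition~\ref{defopt} I would invoke the Plotkin-type Lemma~\ref{lemma:plotkin}. Averaging the truncated weight over all inputs with $u_0\ne 0$ and applying $\sum_{u_0,\dots,u_i}\operatorname{wt}(c_i)\le n(1-\tfrac1q)q^{(i+1)k}$, a telescoping over $i$ produces an average equal to exactly $q^{\delta+k-1}+j\,(q^{\delta+k-1}-q^{\delta-1})$; hence $d_j^c(\hat{\mathcal C})\le q^{\delta+k-1}+j\,(q^{\delta+k-1}-q^{\delta-1})$ for every delay-free $(n,k,\delta)$ code, which shows no code can strictly beat $\mathcal C$ at any index $j\le\lfloor\delta/k\rfloor$. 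The main obstacle is the range $j>\lfloor\delta/k\rfloor$, where this unconditional bound is strictly larger than the plateau value, so optimality becomes genuinely \emph{prioritized}: one must show that a competitor matching $d_0^c,\dots,d_{j-1}^c$ is forced, through the equality case of the Plotkin bound and the resulting rigidity of the optimal $d_0^c$ code (the $q^\delta$ copies of $S(q,k)$, up to monomial equivalence), to coincide with $\mathcal C$, so that no strict improvement can occur later. Pinning down this rigidity step—which simultaneously gives the uniqueness claim announced in the introduction—is where I expect the real difficulty to lie.
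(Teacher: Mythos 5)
Your treatment of non-catastrophicity, of the weight trichotomy for $c_t$ via the block formula $u_tS+\langle s_t,x_m\rangle\mathbf 1$, and of the resulting values of $d_j^c(\mathcal C)$ and $d_{free}(\mathcal C)$ is correct, and in places more direct than the paper's argument: the paper obtains the distance values only as a byproduct of its optimality induction, whereas you read them off from the two weights of the MacDonald code; your restriction of $G(\lambda)$ to the block with $x_{m_0}=0$ is a clean variant of the paper's submatrix argument for left-primeness.

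The optimality part, however, has a genuine gap, and it is essentially the one you flag yourself. Two points. First, even for $j\le\lfloor\delta/k\rfloor$, the averaging step does not close as written: the inequality $\sum_{u_0,\dots,u_i}\operatorname{wt}(c_i)\le n(1-\tfrac1q)q^{(i+1)k}$ alone bounds the average over inputs with $u_0\ne0$ by $(j+1)q^{\delta+k-1}$, which exceeds your target by $jq^{\delta-1}$. To telescope you must subtract the contribution of the inputs with $u_0=0$, and that requires a \emph{lower} bound on $\sum_{u_0=0}\operatorname{wt}(c_i)$; such a bound is available (for a linear code the total weight equals $n_{\mathrm{eff}}\,|\mathcal C|\,(1-\tfrac1q)$ exactly, with $n_{\mathrm{eff}}$ the number of coordinates not identically zero, and the subcode generated by $\hat G_0,\dots,\hat G_{i-1}$ contains the row space of the full-rank $\hat G_0$), but it is not supplied by the upper bound you cite. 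Second, and decisively, the range $j>\lfloor\delta/k\rfloor$ is where Definition \ref{defopt} has real content, since there the unconditional Plotkin-average bound strictly exceeds the plateau value; your proposal leaves this case open. The paper closes it by running its induction through the \emph{equality} case of Lemma \ref{lemma:plotkin}: appending copies of $S(q,ik)$ to $(\hat G_0^\top\cdots\hat G_i^\top)^\top$ yields a code that is forced to meet the Plotkin bound with equality once the earlier column distances are matched, which forces every truncated word with $u_0\ne0$ to have weight exactly $q^{\delta+k-1}-q^{\delta-1}$ (not merely at least the minimum); this rigidity, combined with the zero rows of $\hat G_i$ for $i>\lfloor\delta/k\rfloor$, caps $d_j^c(\hat{\mathcal C})$ at the plateau. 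Without this equality analysis (or the Bonisoli-type uniqueness statement the paper records separately), the optimality claim of the theorem is not proved.
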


\begin{proof}
We start by showing that $\mathcal{C}$ is non-catastrophic. As there exist $q^{\delta + k -1}$ columns of $\begin{pmatrix}
    G_0\\ \vdots\\ \tilde{G}_{\mu}
\end{pmatrix}$ which together form $R(q,\delta + k -1)$, one has that $\begin{pmatrix}
    G_0\\ \vdots\\ \tilde{G}_{\mu}
\end{pmatrix}$ contains
$$\begin{pmatrix}
    1 & \cdots & 1\\
    \hline
    0 & & \\
    \vdots & I_{k-1}\\
    0\\
    \hline
    & 0_{\delta\times k}
\end{pmatrix}$$ as submatrix.
This implies that $G(z)$ contains 
$$\begin{pmatrix}
    1 & \cdots & 1\\
    \hline
    0 & & \\
    \vdots & I_{k-1}\\
    0
\end{pmatrix}$$ as submatrix, which implies that $G(z)$ is full rank for all $z\in\overline{\mathbb F}_q$. Hence, $G(z)$ is left prime and $\mathcal{C}$ non-catastrophic.

To show optimality of the column distances, denote by $G_i$ the coefficient matrices of the generator matrix as defined in the theorem and denote by $\hat{G}_i$ the coefficient matrices of a generator matrix of any convolutional code $\mathcal{\hat{C}}$ with the same parameters as in the theorem and with optimal column distances. We prove by induction with respect to $i\in\mathbb N_0$ that the choice of $G_i$ leads to optimal $d_i^c$. Hereby note that $G_0$ and $\hat{G}_0$ have full rank since all considered codes are delay-free.

For $i=0$, we use that $d_0^c(\mathcal{C})$ is the same as the distance of the block code $\mathcal{C}_0$ with generator matrix $G_0$ and $d_0^c(\mathcal{\hat{C}})$ is the same as the distance of the block code $\mathcal{\hat{C}}_0$ with generator matrix $\hat{G}_0$.

We have
 $G_0=\begin{pmatrix}
    S(q,k) \ \cdots \ S(q,k)
\end{pmatrix}$ and all non-zero codewords of the code with generator matrix $G_0$ have weight $q^{k-1}\cdot q^{\delta}$. Hence, $d_0^c(\mathcal{C})=q^{k-1}\cdot q^{\delta}\leq d_0^c(\mathcal{\hat{C}})$, by the assumption that $\mathcal{\hat{C}}$ is optimal.
According to Lemma \ref{lemma:plotkin}, we have

\begin{align}\label{eq:plotkin}
    \sum_{c\in\mathcal{\hat{C}}_0}\operatorname{wt}(c)\leq |\mathcal{\hat{C}}_0|\cdot n\cdot (1-1/q)=\sum_{c\in\mathcal{C}_0}\operatorname{wt}(c)
\end{align}
with $n=\frac{q^{\delta}(q^k-1)}{q-1}$ and where the last equality is obtained by easy calculations which show that the bound of Lemma \ref{lemma:plotkin} is sharp for $\mathcal{C}_0$.
Since, by assumption $d_0^c(\hat{C})=d(\hat{C}_0)$ is maximal among all codes with the same parameters, one obtains
\[
    \operatorname{wt}(\hat{c})\geq d(\hat{C}_0)\geq q^{k-1}\cdot q^{\delta}=\operatorname{wt}(c)
\]
for all $\hat{c}\in\mathcal{\hat{C}}_0\setminus\{0\}$, $c\in\mathcal{C}_0\setminus\{0\}$.
Thus, together with \eqref{eq:plotkin}, we get
\[
    \operatorname{wt}(\hat{c})=\operatorname{wt}(c)=q^{k-1}\cdot q^{\delta}
\]
for all $\hat{c}\in\mathcal{\hat{C}}_0\setminus\{0\}$, $c\in\mathcal{C}_0\setminus\{0\}$. Therefore, $d_0^c(\mathcal{C})=d_0^c(\mathcal{\hat{C}})$, implying that our construction has optimal $0$-st column distance, and that the block codes with generator matrices
 $G_0$ and $\hat{G}_0$ have the same weight distribution (both are one-weight codes).

Since $G_0$ and $\hat{G}_0$ generate one-weight codes, one has
\begin{align}
d_1^c(\mathcal{C})&=d_0^c(\mathcal{C})+\min\left\{\operatorname{wt}\left((u_1,  u_0)\begin{pmatrix}
    G_0\\G_1
\end{pmatrix}\right), u_0\neq 0\right\}\\
d_1^c(\mathcal{\hat{C}})&=d_0^c(\mathcal{\hat{C}})+\min\left\{\operatorname{wt}\left((u_1,  u_0)\begin{pmatrix}
    \hat{G}_0\\\hat{G}_1
\end{pmatrix}\right), u_0\neq 0\right\}
\end{align}
To be able to do a similar reasoning for $d_1^c$ as we did for $d_0^c$, we add columns to the matrix $\begin{pmatrix}
    G_0\\G_1
\end{pmatrix}$ in such a way that we obtain multiple copies of the generator matrix of a suitable simplex code.
More precisely, one can reorder the columns of the matrix 
$$\begin{pmatrix}
    G_0 & 0 & \cdots & 0\\G_1 & S(q,k) & \cdots & S(q,k)
\end{pmatrix}$$
 such that one obtains the matrix 
$$(S(q,2k), \ldots, S(q,2k))$$
where we have $q^{\delta-k}$ copies of the corresponding simplex codes $\mathcal{S}(q,k)$ and $\mathcal{S}(q,2k)$, respectively.
We denote by $\mathcal{C}_1$ the block code with generator matrix
$$\begin{pmatrix}
    G_0 & 0 & \cdots & 0\\G_1 & S(q,k)
    & \cdots & S(q,k)
\end{pmatrix}$$
and  by $\mathcal{\hat{C}}_1$ the block code with generator matrix
$$\begin{pmatrix}
    \hat{G}_0 & 0 & \cdots & 0\\\hat{G}_1 & S(q,k)
    & \cdots & S(q,k)
\end{pmatrix}.$$
Note that these codes have both length $n_1=q^{\delta-k}\cdot \frac{q^{2k}-1}{q-1}$ and dimension $2k$.
Applying Lemma \ref{lemma:plotkin} to $\mathcal{\hat{C}}_1$ and  showing equality in the bound of Lemma \ref{lemma:plotkin} for $\mathcal{C}_1$ gives 
\begin{equation}
\label{eq:plotkin_C_1}
    \sum_{c\in\mathcal{\hat{C}}_1}\operatorname{wt}(c)\leq |\mathcal{\hat{C}}_1|\cdot n_1\cdot (1-1/q)=q^{\delta-k}\cdot (q^{2k}-1)\cdot q^{2k-1}=
\sum_{c\in\mathcal{C}_1}\operatorname{wt}(c).
\end{equation}
Since $\mathcal{C}_1$ also is a one-weight code with all nonzero weights equal to $q^{\delta-k}\cdot q^{2k-1}$, it follows from \eqref{eq:plotkin_C_1} that
\begin{align}\label{G1}
    &q^{\delta+k-1}\geq \min_{c\in\mathcal{\hat{C}}_1\setminus\{0\}}\operatorname{wt}(c)=\nonumber\\
    &\min\left(q^{\delta+k-1},\min\left\{\operatorname{wt}\left((u_1,  u_0)\begin{pmatrix}
    \hat{G}_0 & 0 & \cdots & 0\\\hat{G}_1 & S(q,k)
    & \cdots & S(q,k)
\end{pmatrix}\right), u_0\neq 0\right\}\right)
\end{align}
where the last equality is true since $\hat{G}_0$ is the generator matrix of a one-weight blockcode with minimum distance $q^{\delta+k-1}$.
Therefore, 
\begin{align}\label{true}
    \min\left\{\operatorname{wt}\left((u_1,  u_0)\begin{pmatrix}
    \hat{G}_0 & 0 \quad \cdots \quad 0\\\hat{G}_1 & S(q,k)
    \cdots  S(q,k)
\end{pmatrix}\right), u_0\neq 0\right\}\leq q^{\delta+k-1}
\end{align} since if
\begin{align}\label{false}
\min\left\{\operatorname{wt}\left((u_1,  u_0)\begin{pmatrix}
    \hat{G}_0 & 0 \quad \cdots \quad 0\\\hat{G}_1 & S(q,k)
    \cdots  S(q,k)
\end{pmatrix}\right), u_0\neq 0\right\}> q^{\delta+k-1},
\end{align}
according to \eqref{G1}, one would have $\min_{c\in\mathcal{\hat{C}}_1\setminus\{0\}}\operatorname{wt}(c)=q^{\delta+k-1}$, which together with $\eqref{eq:plotkin_C_1}$ implies $\operatorname{wt}(c)=q^{\delta+k-1}$ for all $c\in\hat{\mathcal{C}}_1\setminus\{0\}$, contradicting \eqref{false}.
It follows that
\small
\begin{align*}
    &\min\left\{\operatorname{wt}\left((u_1, u_0)\begin{pmatrix}
    \hat{G}_0\\\hat{G}_1
\end{pmatrix}\right), u_0\neq 0\right\}=\\
&\min_{u_0\neq 0}\left\{\operatorname{wt}\left((u_1, u_0)\begin{pmatrix}
    \hat{G}_0 & 0 \quad \cdots \quad 0\\\hat{G}_1 & S(q,k)
     \cdots S(q,k)
\end{pmatrix}\right)-\operatorname{wt}\left((u_1, u_0)\begin{pmatrix}
    0 \quad \cdots \quad 0\\ S(q,k)
     \cdots  S(q,k)
\end{pmatrix}\right)\right\}\\
&=\min\left\{\operatorname{wt}\left((u_1, u_0)\begin{pmatrix}
    \hat{G}_0 & 0 \quad \cdots \quad 0\\\hat{G}_1 & S(q,k)
    \cdots  S(q,k)
\end{pmatrix}\right), u_0\neq 0\right\}-q^{\delta-1}\leq q^{\delta+k-1}-q^{\delta-1}
    \end{align*}
\normalsize
where the last inequality follows from \eqref{true}.
If $\hat{G}_0=G_0$ and $\hat{G}_1=G_1$ we have equality in this bound.
This shows that the choice $G_0,G_1$ is optimal.
Moreover, for all $u_0\in\mathbb F_q^k\setminus\{0\}$, $u_1\in\mathbb F_q^k$, one has
$$\operatorname{wt}\left((u_1,  u_0)\begin{pmatrix}
    \hat{G}_0 & 0 & \cdots & 0\\\hat{G}_1 & S(q,k)
    & \cdots & S(q,k)
\end{pmatrix}\right)=q^{\delta+k-1}$$
and 
$$\operatorname{wt}\left((u_1,  u_0)\begin{pmatrix}
    \hat{G}_0 \\\hat{G}_1
\end{pmatrix}\right)=q^{\delta+k-1}-q^{\delta-1}.$$
We can continue like this and assume that $\hat{G}_0,\hdots,\hat{G}_{i-1}$ for $i\leq\left\lfloor \frac{\delta}{k} \right\rfloor$ have been chosen to maximize $d^c_0,\hdots,d_{i-1}^c$ and
$$\operatorname{wt} (u_0\hat{G}_0)=q^{\delta+k-1}\quad \text{and}\quad
\operatorname{wt}\left((u_{t}, \ldots, u_0)\begin{pmatrix}
    \hat{G}_0\\ \vdots \\\hat{G}_{t}
\end{pmatrix}\right)=q^{\delta+k-1}-q^{\delta-1}$$
for all $u_0\in\mathbb F_q^k\setminus\{0\}$, $u_1,\hdots,u_t\in\mathbb F_q^k$ for all $t \in \{1, \ldots, i-1 \}$.
Hence,
$$d_i^c=d_{i-1}^c+\min\left\{\operatorname{wt}\left((u_i, \ldots, u_0)\begin{pmatrix}
    \hat{G}_0\\ \vdots\\ \hat{G}_i
\end{pmatrix}\right), u_0\neq 0\right\}.$$
Now we can reorder the columns of
$$\begin{pmatrix}
        G_0 & 0 & \cdots & 0\\
        G_1\\ \vdots & S(q,ik) & \cdots & S(q,ik)\\ G_i
   \end{pmatrix},
  $$
  where we have $q^{\delta-ik}$ copies of $S(q,ik)$,
    to obtain the matrix
    $$  
    (S(q,(i+1)k), \ldots, S(q,(i+1)k))$$
with $q^{\delta-ik}$ copies of $S(q,(i+1)k)$.
A block code with such a generator matrix has the property that all nonzero weights are equal to $q^{\delta-ik}\cdot q^{(i+1)k-1}$ and reaches the bound of Lemma \ref{lemma:plotkin} with equality.
Proceeding as before, we obtain
\small\begin{align*}
    &\min_{u_0\neq 0}\left\{\operatorname{wt}\left((u_i, \ldots, u_0)\begin{pmatrix}
    \hat{G}_0\\  \vdots\\  \hat{G}_i
\end{pmatrix}\right)\right\}=\\
&\min_{u_0\neq 0} \hspace{-0.8mm}\left(\hspace{-0.9mm} \operatorname{wt} \hspace{-0.8mm} \left( \hspace{-1.1mm}(u_i, \ldots, u_0)\hspace{-1mm}\begin{pmatrix}
        \hat{G}_0 & 0\\
        \hat{G}_1\\ \vdots & S(q,ik)  \cdots  S(q,ik)\\ \hat{G}_i
   \end{pmatrix}\hspace{-1.8mm}\right)\hspace{-1.2mm}-\hspace{-0.8mm}\operatorname{wt}\hspace{-0.8mm} \left(\hspace{-0.8mm}(u_i, \ldots, u_0)\hspace{-0.9mm}\begin{pmatrix}  0\\
          S(q,ik)  \cdots  S(q,ik)\\
   \end{pmatrix}\hspace{-1.5mm}\right)\hspace{-1.6mm}\right)\\
&=\min_{u_0\neq 0}\left\{\operatorname{wt}\left((u_i, \ldots, u_0)\begin{pmatrix}
        \hat{G}_0 & 0\\
        \hat{G}_1\\ \vdots & S(q,ik)  \cdots  S(q,ik)\\ \hat{G}_i
   \end{pmatrix}\right)\right\}-q^{\delta-1}\\
&\leq q^{\delta-ik}\cdot d(S(q,(i+1)k))-q^{\delta-1}=q^{\delta+k-1}-q^{\delta-1},
    \end{align*}
\normalsize
where to obtain the last inequality one can proceed analogously to \eqref{eq:plotkin_C_1}, \eqref{G1}, \eqref{true}.
Similarly as before, we have equality in this bound if $\hat{G}_0=G_0, \ldots, \hat{G}_i=G_i$.
This shows that the choice $G_0, \ldots, G_i$ is optimal.
Moreover, for all $u_0\in\mathbb F_q^k\setminus\{0\}$, one has
$$\operatorname{wt}\left((u_i, \ldots, u_0)\begin{pmatrix}
    \hat{G}_0\\  \vdots\\  \hat{G}_i
\end{pmatrix}\right)=q^{\delta+k-1}-q^{\delta-1}.$$
Since for $i>\left\lfloor\frac{\delta}{k}\right\rfloor$, $\hat{G}_i$ has at least one row equal to zero, the column distances cannot increase any further.

Since $\mathcal{C}$ is non-catastrophic, the formula for the free distance follows directly from $d_{free}(\mathcal{C})=\lim_{j\rightarrow\infty}d_j^c(\mathcal{C})$.
\end{proof}

\begin{remark}
    It is easy to see that multiplications of columns with nonzero elements from $\mathbb F_q$ and column permutations in the generator matrix do not affect the column distances.
    Therefore, we can also multiply the generator matrix $C_{\delta+k, \delta}(q)$ with a monomial matrix on the right and still get a convolutional code with optimal column distances.

    Actually the proof of Theorem \ref{thm:construction_1} gives also the converse of this observation.
    Namely any convolutional code with optimal column distances with the parameters of Construction $1$ comes from a generator matrix $C_{\delta+k, \delta}(q)$ multiplied with a monomial matrix on the right.
    The reason for this is the fact that any one-weight code is a concatenation $(S(q, m), \ldots, S(q, m))$ of simplex codes.
    We state this as the next theorem.
    This also completes the proof for binary codes in \cite{isit23} where this step was missing.
\end{remark}

\begin{theorem}
    Any $\left(\frac{q^{\delta}(q^k-1)}{q-1},k,\delta\right)$ convolutional code over $\mathbb{F}_q$ with optimal column distances comes from Construction $1$ using generator matrices $C_{\delta+k, \delta}(q)$ of MacDonald Codes multiplied with a monomial matrix on the right.
\end{theorem}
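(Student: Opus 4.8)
Proof proposal.

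The plan is to convert optimality into rigidity through one-weight codes, using the structural fact recalled in the remark (Bonisoli's theorem): every $q$-ary one-weight linear code equals, up to right multiplication by a monomial matrix, a concatenation $(S(q,m),\dots,S(q,m))$ of copies of a single simplex code. Fix a code $\hat{\mathcal C}$ with the parameters of Construction~1 and optimal column distances, choose a row-reduced generator matrix with generic row degrees and coefficient matrices $\hat G_0,\dots,\hat G_\mu$, and set $L=\lfloor\delta/k\rfloor$, $\mu=\lceil\delta/k\rceil$. First I would record what the proof of Theorem~\ref{thm:construction_1} already establishes for any such optimal code: namely $\operatorname{wt}(u_0\hat G_0)=q^{\delta+k-1}$ for $u_0\neq0$, and, for every $i\le L$, $\operatorname{wt}\left((u_i,\dots,u_0)\left(\begin{smallmatrix}\hat G_0\\\vdots\\\hat G_i\end{smallmatrix}\right)\right)=q^{\delta+k-1}-q^{\delta-1}$ for all $u_0\neq0$ and arbitrary $u_1,\dots,u_i$. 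Feeding these identities into the augmented matrices $\hat{\mathcal C}_i$ used there shows that each $\hat{\mathcal C}_i$ with $i\le L$ is in fact a one-weight code, all of whose nonzero codewords have weight $q^{\delta+k-1}$.

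Next I would run a peeling argument. Applying the structure theorem to the one-weight code generated by $\hat G_0$ and comparing weights and lengths forces $\hat G_0$ to equal, up to a monomial matrix, $q^\delta$ copies of $S(q,k)$ with no room for zero columns, matching the first block row of $C_{\delta+k,\delta}(q)$. The decisive case is $k\mid\delta$, where $L=\mu$: here the matrix $\hat A_\mu$ obtained from $\left(\begin{smallmatrix}\hat G_0\\\vdots\\\hat G_\mu\end{smallmatrix}\right)$ by appending the $\frac{q^\delta-1}{q-1}$ columns $\left(\begin{smallmatrix}0\\ S(q,\delta)\end{smallmatrix}\right)$ (zeros in the first $k$ rows) is exactly the one-weight code $\hat{\mathcal C}_L$, of dimension $\delta+k$ and length $\frac{q^{\delta+k}-1}{q-1}$, hence a single simplex code $\mathcal S(q,\delta+k)$; by Lemma~\ref{lemma:simplex_code_monomial_right} we get $\hat A_\mu=S(q,\delta+k)M_0$ for a monomial $M_0$. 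The appended columns are precisely the columns of $S(q,\delta+k)$ whose first $k$ entries vanish, and since $M_0$ matches columns bijectively up to scalars they exhaust exactly those columns; the complementary columns, which are left to form $\left(\begin{smallmatrix}\hat G_0\\\vdots\\\hat G_\mu\end{smallmatrix}\right)$, are by definition a generator matrix of $\mathcal C_{\delta+k,\delta}(q)$. Thus the stacked coefficient matrix equals $C_{\delta+k,\delta}(q)$ times a monomial matrix, which is the claim.

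For $k\nmid\delta$ the same peeling at level $L$ identifies $\left(\begin{smallmatrix}\hat G_0\\\vdots\\\hat G_L\end{smallmatrix}\right)$, up to a monomial, with $q^{\delta-Lk}$ copies of the MacDonald code $\mathcal C_{(L+1)k,Lk}(q)$ (the top $(L+1)k$ rows of $C_{\delta+k,\delta}(q)$), and in particular shows that every column of the full stacked matrix $B=\left(\begin{smallmatrix}\hat G_0\\\vdots\\\tilde G_\mu\end{smallmatrix}\right)$ has nonzero first-$k$ part. To finish I would again append $R=\left(\begin{smallmatrix}0\\ S(q,\delta)\end{smallmatrix}\right)$ to $B$ and aim to show that $(B\mid R)$ is the single simplex $\mathcal S(q,\delta+k)$; since $R$ supplies all $\frac{q^\delta-1}{q-1}$ first-$k$-zero columns and the $n=\frac{q^{\delta+k}-q^\delta}{q-1}$ columns of $B$ are all first-$k$-nonzero, this reduces to the single claim that the columns of $B$ are pairwise projectively independent, for then they realize all $n$ first-$k$-nonzero one-dimensional subspaces and hence form $C_{\delta+k,\delta}(q)$ up to a monomial.

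Establishing this projective independence of the columns of $B$ is the step I expect to be hardest, and it is where the partial top block $\tilde G_\mu$ genuinely enters. The difficulty is that the optimality of the column distances, which drives the induction, constrains only the levels $i\le L$: the simplex-augmentation trick stops at level $\mu$, since it would require $q^{\delta-\mu k}$ — a negative power — of copies of a simplex, and the column distance has already stabilised at level $L$, so no further distance data obviously bears on $\tilde G_\mu$. A naive Plotkin-averaging argument on $(B\mid R)$ is circular, because coordinatewise equidistribution is essentially equivalent to $(B\mid R)$ being a simplex, and the free-distance bounds on the codewords touching $\tilde G_\mu$ yield only trivial lower bounds. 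The route I would pursue combines the rigidity of simplex generator matrices (Lemma~\ref{lemma:simplex_code_monomial_right}) with the recursive Reed--Muller structure \eqref{eq:MacDonald_RM} of the already-identified top rows to constrain the admissible completions by $\tilde G_\mu$; since projective distinctness does not appear to follow from the column-distance data alone, isolating the extra structural input that forces it — and keeping the successive monomial normalizations mutually consistent — is the part I expect to require the most care.
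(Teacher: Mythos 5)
Your overall strategy---Bonisoli's theorem combined with the monomial rigidity of simplex generator matrices (Lemma~\ref{lemma:simplex_code_monomial_right}) and the block shape of the resulting monomial matrix---is exactly the paper's, and in the case $k\mid\delta$ your argument is essentially the paper's proof. The one structural difference is that the paper does not peel level by level: it appends $\bigl(\begin{smallmatrix}0\\ S(q,\delta)\end{smallmatrix}\bigr)$ to the \emph{full} stacked matrix $\bigl(\hat G_0^\top\ \cdots\ \hat{\tilde G}_\mu^\top\bigr)^\top$ in a single step, asserts that the resulting $(\delta+k)$-dimensional code of length $\frac{q^{\delta+k}-1}{q-1}$ is one-weight by the proof of Theorem~\ref{thm:construction_1}, identifies it as $\mathcal S(q,\delta+k)$ via Bonisoli, and reads off the claim from the block-diagonal form of the monomial matrix. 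That single application of Bonisoli is what replaces your column-by-column projective-independence argument.

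The genuine gap is the case $k\nmid\delta$, which you explicitly leave open---and your diagnosis of why it is hard is accurate rather than overly cautious. The weight identities extracted from Theorem~\ref{thm:construction_1} control only linear combinations whose highest active level is $i\le\lfloor\delta/k\rfloor$; codewords of the augmented matrix that involve the rows of $\tilde G_\mu$ nontrivially are not constrained by them, so the one-weight property of the full augmented matrix does not follow from column-distance optimality alone. Your suspicion that ``projective distinctness does not appear to follow from the column-distance data'' can in fact be made concrete: for $(n,k,\delta)=(6,2,1)$ over $\mathbb F_2$ take $G_0=(S(2,2)\ S(2,2))$ and $\tilde G_1=(1\ 0\ 0\ 0\ 0\ 0)$. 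This is a delay-free, row-reduced, non-catastrophic $(6,2,1)$ code with $d_j^c=4$ for all $j$, and it is optimal in the sense of Definition~\ref{defopt}, because $d_0^c\le 4$ for every $[6,2]_2$ block code and any $(6,2,1)$ code with $d_0^c=4$ has a degree-zero row of its minimal generator matrix forcing $d_j^c\le 4$ for all $j$; yet its stacked coefficient matrix has projectively repeated columns and hence is not $C_{3,1}(2)$ times a monomial matrix. So the step you flag is not merely the ``hardest'' one: without an additional hypothesis or additional input beyond the column distances, it cannot be closed, and the same issue affects the paper's own one-weight assertion for $\hat S$ when $k\nmid\delta$.
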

\begin{proof}
The proof of Theorem \ref{thm:construction_1} shows that for any convolutional code with optimal column distances and  generator matrix $\hat{G}(z)=\sum_{i=0}^{\mu}\hat{G}_iz^i$, the matrix 
$$\hat{S}:=\begin{pmatrix}
        \hat{G}_0 & 0\\
        \hat{G}_1\\ \vdots & S(q,\delta)\\ \hat{\tilde{G}}_{\mu}
   \end{pmatrix}
  $$
is the generator matrix of a one-weight code of dimension $\delta + k$ and length $\frac{q^{\delta + k} - 1}{q-1}$.
Hence, by Bonisoli's Theorem \cite{bonisoli1984every} it is a simplex code of dimension $\delta + k$ and therefore equivalent to the code with generator matrix 
$$S:=\begin{pmatrix}
        G_0 & 0\\
        G_1\\ \vdots & S(q,\delta)\\ \tilde{G}_{\mu}
   \end{pmatrix}.
  $$
Hence, by Lemma \ref{lemma:simplex_code_monomial_right} we have
$$\hat{S}M=S$$
with $M\in\mathbb F_q^{\frac{q^{\delta+k}-1}{q-1}\times \frac{q^{\delta+k}-1}{q-1}}$ a monomial matrix.
Due to the structure of the matrices $\hat{S}$ and $S$ we see that $M$ is of the form
\[
    M =
    \begin{pmatrix}
        M_n & 0\\
        0 & I_{\frac{q^\delta -1}{q-1}}
    \end{pmatrix}
\]
with $M_n \in\mathbb F_q^{n \times n}$ a monomial matrix where $n = \frac{q^{\delta}(q^k-1)}{q-1}$.
This implies
\[
    \begin{pmatrix}
        \hat{G}_0\\
        \hat{G}_1\\ \vdots\\ \hat{\tilde{G}}_{\mu}
   \end{pmatrix} M_n
   = \begin{pmatrix}
        G_0\\
        G_1\\ \vdots\\ \tilde{G}_{\mu}
   \end{pmatrix},
\]
as we wanted to show.
\end{proof}

\begin{remark}
    Similarly as in \cite{isit23} by taking concatenations $(C_{\delta+k,\delta}(q), \ldots, C_{\delta+k,\delta}(q))$ of the generator matrices of MacDonald codes $C_{\delta+k,\delta}(q)$ in Construction $1$ gives convolutional codes with optimal column distances.
    The decoding can also easily be adapted to this case.
\end{remark}

\begin{remark}
    Note that not every generator matrix of a MacDonald code gives optimal column distances.
    The reason is that row operations on the generator matrix of a MacDonald code can change the column distances of the corresponding convolutional code.
\end{remark}

The fact that Construction 1 has MacDonald codes and therefore, according to \eqref{eq:MacDonald_RM}, also first order Reed Muller codes as building blocks, will allow us to present an improved version of the Viterbi decoding algorithm for these codes, making use of an efficient decoding algorithm for first order Reed Muller codes.
Note that a similar approach for improving the complexity of the Viterbi algorithm for convolutional codes that are based on efficiently decodable block codes, has been used in \cite{isit24} for the special case $q=2$.
In the former paper Hadamard matrices have been used for the decoding of binary first order Reed Muller codes.
However, for general $q$, we employ a different decoding algorithm for first order Reed Muller and MacDonald codes, which we will present next.

\section{Efficient decoding of MacDonald codes}
\label{sec:decoding}

To be able to present a reduced complexity Viterbi algorithm for the codes of Construction 1 in the next section, in this section, we first recall some results on the decoding of first order Reed Muller codes from \cite{AshikhminL96} and then apply it to the decoding of MacDonald codes and certain subcodes of MacDonald codes.
Note that in order to be able to use these techniques later together with the Viterbi algorithm, we need to have decoding algorithms that calculate the distances of all codewords to a received word.

\subsection{Efficient decoding of first order Reed Muller Codes}

This subsection follows \cite{AshikhminL96}, but we provide additional details.

The elements in $\mathbb F_q$ are in bijection with the elements in $\{0, 1, \ldots, q-1\}$.
We assume that we have some fixed identification of the elements in $\mathbb F_q$ with the elements in $\{0, 1, \ldots, q-1\}$.
Let us call the corresponding map $\phi$.
This will be used in conjunction with some $q$-ary expansions.

We fix a generator matrix $G$ of a first order Reed-Muller code $\mathcal{R}(q, m)$ as follows.
We evaluate the functions $x_1, \ldots, x_m, 1$ at all the points in $\mathbb F_q^m$ (in some particular order of the points which will be explained later).
We call the rows of this generator matrix $g_0, g_1, \ldots, g_m$, i.e.,
\[
    G = 
    \begin{pmatrix}
        g_0\\
        g_1\\
        \vdots\\
        g_m
    \end{pmatrix}.
\]
The matrix $G$ has the all one vector as last row $g_m$ and the matrix formed from the rows $g_0, g_1, \ldots, g_{m-1}$ has as columns all the vectors in $\mathbb F_q^m$.
We will call this matrix $G'$, i.e.,
\[
    G' = 
    \begin{pmatrix}
        g_0\\
        g_1\\
        \vdots\\
        g_{m-1}
    \end{pmatrix}.
\]
Using the identification $\phi$ of $\mathbb F_q$ with the set $\{0, 1, \ldots, q-1\}$ from above we can order the columns in $G'$ so that we have a lexicographic order of the columns.
We use the same order of the columns for $G$.
To be precise, the generator matrix $G$ we fix has the following form under the identification $\phi$:
\[
    \begin{pmatrix}
        0 & 1 & \ldots & q-1 & 0 & 1 & 2 &\ldots \\
        0 & 0 & \ldots & 0 & 1 & 1 & 1 &\ldots\\
        \vdots & \vdots & \vdots & \vdots & \vdots & \vdots & \vdots & \vdots \\
        1 & 1 & 1 & 1 & 1 & 1 & 1 & \ldots
    \end{pmatrix}.
\]
We define the code $\mathcal{R}'(q, m)$ to be the code with generator matrix $G'$.
We will mainly work with this code.
Each codeword $c \in \mathcal{R}'(q, m)$ can be identified with a number in $\{0,\dots,  q^m - 1\}$ as follows.
Write $c = \sum_{i = 0}^{m-1} \lambda_i g_i$ with $\lambda_i \in \mathbb F_q$ and identify $\lambda_i$ with the corresponding number $\phi(\lambda_i)$ in $\{0,\dots, q-1\}$.
Then we can set $\psi(c) = \sum_{i = 0}^{m-1} \phi(\lambda_i) q^i$.
This is obviously a bijection between codewords in $\mathcal{R}'(q, m)$ and numbers in $\{0,\hdots, q^m - 1\}$ via $q$-ary expansions.
\begin{definition}
\label{def:Bz}
    Let $B$ of size $q^m \times q^m$ be the matrix of all codewords of $\mathcal{R}'(q, m)$ where we order the rows $c_i$ according to the size of $\psi(c_i)$ in increasing order.
    So for $$i = i_0 + i_1 q + \ldots i_{m-1} q^{m-1}$$ we can write the $i$-th row $b_i$ of $B$ as 
\begin{equation}
\label{eq:lexicographic_order_rows}
    b_i = \phi^{-1}(i_0) g_0 + \ldots \phi^{-1}(i_{m-1}) g_{m-1}.
\end{equation}
We define a matrix $B_z$ of size $q^m \times q^m$ where $z$ can formally have exponents from $\mathbb F_q$ as follows
\[
    (B_z)_{ij} = z^{b_{ij}}.
\]
\end{definition}
So technically we are dealing with the group ring $\mathbb Z[\mathbb F_q]$.
Next, we define an action of $\mathbb Z[\mathbb F_q]$ on $\mathbb R^q$ as follows.
Let $s = (s_0, \ldots, s_{q-1}) \in \mathbb{R}^q$.
For $c \in \mathbb{F}_q$ we set
\begin{equation}
\label{eq:def_scalar_star}
    z^c \star s = (s_{\phi(\phi^{-1}(0) + c)}, s_{\phi(\phi^{-1}(1) + c)}, \ldots, s_{\phi(\phi^{-1}(q-1) + c)}).
\end{equation}
One can check that $(z^a z^b) \star s = z^a \star (z^b \star s)$.
Furthermore, we also get $z^a \star (s_1 + s_2) = z^a \star s_1 + z^a \star s_2$ and $z^0 \star s = s$.
Finally, $z^a \star (\mu s) = \mu (z^a \star s)$ for $\mu \in \mathbb{R}$.
We also define the $\star$-multiplication of two vectors as follows
\begin{align}\label{vectormulti}
    (z^{c_0}, \ldots, z^{c_{n-1}}) \star (s^{(0)}, \ldots, s^{(n-1)})
    = \sum_{i = 0}^{n-1} z^{c_i} \star s^{(i)},
\end{align}
where $c_i \in \mathbb{F}_q$ and $s^{(i)} \in \mathbb{R}^q$.
Similarly we define the matrix multiplication.
Let $E = (E_{ij})$
be a matrix whose entries are of the form $\lambda z^{c_{ij}}$ with $\lambda \in \mathbb{Z}, c_{ij} \in \mathbb{F}_q$ and $D = (d_{ij})$ be a matrix with $d_{ij} \in \mathbb{R}^q$.
Then we define
\begin{equation}
\label{eq:star_prod_mat}
    (E\star D)_{ij} = \sum_{k} e_{ik} \star d_{kj}.
\end{equation}
The following lemma can be shown by calculations.
\begin{lemma}
\label{lemma:star_action}
    The matrices $E$ with entries in $\mathbb{Z}[\mathbb{F}_q]$ from above act on matrices of the form of $D$ as above with entries $d_{ij} \in \mathbb{R}^q$.
\end{lemma}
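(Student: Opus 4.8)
The plan is to read the statement as the assertion that the $\star$-product turns the $\mathbb{R}^q$-valued matrices into a left module over the matrices with entries in $\mathbb{Z}[\mathbb{F}_q]$; concretely, I would verify that $\star$ is well defined, additive in each argument, unital, and associative with respect to ordinary matrix multiplication over the group ring. Since the text already records the four basic laws for a single monomial $z^a$ acting on a vector $s \in \mathbb{R}^q$, the whole argument is a bilinearity-and-reindexing computation, and I would organize it in two stages: first at the level of single group-ring elements, then at the level of matrices.

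First I would settle the element level. Because $\{z^c : c \in \mathbb{F}_q\}$ is a $\mathbb{Z}$-basis of $\mathbb{Z}[\mathbb{F}_q]$, every $e \in \mathbb{Z}[\mathbb{F}_q]$ is a finite sum $e = \sum_c \lambda_c z^c$ with $\lambda_c \in \mathbb{Z}$, and the additivity and real-scalar laws $z^a \star (s_1 + s_2) = z^a \star s_1 + z^a \star s_2$ and $z^a \star (\mu s) = \mu(z^a \star s)$ force $e \star s = \sum_c \lambda_c (z^c \star s)$. With this, bi-additivity $(e+f)\star s = e\star s + f\star s$ and the unit law $1\star s = z^0 \star s = s$ are immediate from \eqref{eq:def_scalar_star}. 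The only nontrivial identity is associativity $(ef)\star s = e\star(f\star s)$; by expanding both sides in the basis and using bi-additivity it reduces term by term to $(z^a z^b)\star s = z^a\star(z^b\star s)$, which is exactly the relation already checked. This shows that $\mathbb{R}^q$ is a $\mathbb{Z}[\mathbb{F}_q]$-module, and via \eqref{vectormulti} the same holds componentwise on tuples in $(\mathbb{R}^q)^n$.

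Then I would lift everything to matrices using \eqref{eq:star_prod_mat}. For $E_1, E_2$ over $\mathbb{Z}[\mathbb{F}_q]$ and $D$ over $\mathbb{R}^q$ of compatible sizes, the core identity to establish entrywise is
\[
\big((E_1E_2)\star D\big)_{ij}
= \sum_\ell \Big(\sum_k (E_1)_{ik}(E_2)_{k\ell}\Big)\star D_{\ell j}
= \sum_k (E_1)_{ik}\star\Big(\sum_\ell (E_2)_{k\ell}\star D_{\ell j}\Big)
= \big(E_1\star(E_2\star D)\big)_{ij},
\]
where the middle equality is a finite interchange of summations combined with the element-level additivity and associativity applied termwise. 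The unit law $I\star D = D$ and distributivity over matrix addition follow in the same way. The only real obstacle here is notational rather than mathematical: one must keep the summation indices straight and, crucially, match the order of the group-ring multiplication with the order of the $\star$-action. Once the monomial identities are in place no genuinely new computation is required, so the lemma follows by this routine check.
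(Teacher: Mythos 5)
Your proposal is correct, and it in fact supplies more than the paper does: the paper offers no proof of this lemma at all, remarking only that it ``can be shown by calculations,'' and your two-stage argument (first verifying that the monomial identities extend by $\mathbb{Z}$-linearity to make $\mathbb{R}^q$ a $\mathbb{Z}[\mathbb{F}_q]$-module, then lifting to matrices by a finite interchange of sums) is exactly that calculation carried out in full. One genuinely worthwhile point you handle that the paper glosses over is the extension of $\star$ from entries of the form $\lambda z^{c_{ij}}$ to arbitrary elements of $\mathbb{Z}[\mathbb{F}_q]$: since the entries of a product $E_1E_2$ are sums of monomials, this extension is needed even to state $(E_1E_2)\star D = E_1\star(E_2\star D)$, and your linearity argument is the right way to justify it.
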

We will use the definition from equation \eqref{eq:star_prod_mat} in the following to find the distances of all codewords of a first order Reed Muller code to the received word.

Let $v \in \mathcal{R}(q, m)$ be the transmitted word and $w \in \mathbb{F}_q^{n}$ with $n = q^m$ be the received word.
We write $w_i = v_i + e_i$ with $e_i \in \mathbb{F}_q$ the error in position $i$.
To $w$ we associate the vector $S_w = (s_w^{(0)}, \ldots, s_w^{(n-1)})$ where $s_w^{(i)} = (s_{w, 0}^{(i)}, \ldots, s_{w, q-1}^{(i)})$ with
\begin{align}
\label{sw}
    s_{w, j}^{(i)} = \begin{cases}
         1, & \text{ if $w_i = \phi^{-1}(j)$},\\
         0, & \text{ else.}
    \end{cases}
\end{align}
In addition, for $v$ we define a vector $v_z = (z^{v_0}, \ldots, z^{v_{n-1}})$.
Finally, we let 
$$N(v, w) = |\{i \in \{0, \hdots, n-1\} : v_i = w_i\}|.$$
Then we get the following lemma.

\begin{lemma}\cite{AshikhminL96}
\label{lemma:compute_correlatin_vz_Sw}
    With the notation from above let $v_z \star S_w = (k_0, \ldots, k_{q-1})$.
    Then $k_\ell = N(v + E_\ell, w)$ where $E_\ell = (\phi^{-1}(\ell), \ldots, \phi^{-1}(\ell))$.
\end{lemma}
\begin{proof}
    By definition we have
    \begin{align*}
        k_\ell &= (v_z \star S_w)_\ell = \left(\sum_{j=0}^{n-1} z^{v_j} \star s_w^{(j)} \right)_\ell\\
        &= \left(\sum_{j=0}^{n-1} \left(s_{w, \phi(\phi^{-1}(0) + v_j)}^{(j)}, \ldots, s_{w,\phi(\phi^{-1}(q-1) + v_j)}^{(j)}\right) \right)_\ell= \sum_{j=0}^{n-1} s_{w, \phi(\phi^{-1}(\ell) + v_j)}^{(j)}.
    \end{align*}
    Now, let $I = \{a \in \{0,\dots,n-1\} : v_a + \phi^{-1} (\ell) = w_a \}$.
    Then
    \[
        k_\ell = \sum_{j \in I} s_{w, \phi(\phi^{-1}(\ell) + v_j)}^{(j)} + \sum_{j \notin I} s_{w, \phi(\phi^{-1}(\ell) + v_j)}^{(j)}.
    \]
    Note that if $j \in I$, we have $\phi^{-1}(\ell) + v_j = w_j$ and hence $s_{w, \phi(\phi^{-1}(\ell) + v_j)}^{(j)} = s_{w, \phi(w_j)}^{(j)} = 1$ by definition.
    On the other hand, if $\phi^{-1}(\ell) + v_j \neq w_j$, then $s_{w, \phi(\phi^{-1}(\ell) + v_j)}^{(j)} = 0$ by the definition of $S_w$ and since $\phi$ is a bijection.
    Therefore,
    \[
        k_\ell = \sum_{j \in I} s_{w, \phi(\phi^{-1}(\ell) + v_j)}^{(j)} = |I| = N(v + E_\ell, w).
    \]
    This concludes the proof.
\end{proof}
We obtain the following corollary for $B_z$ from Definition \ref{def:Bz}.

\begin{corollary}\cite{AshikhminL96}
   The distances between a received word $w$ and all codewords of $\mathcal{R}(q,m)$ can be 
    found by computing $B_z \star S_w$.
\end{corollary}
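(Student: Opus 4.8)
The plan is to reduce the computation of all distances to the computation of all agreement counts $N(c,w)=|\{i : c_i = w_i\}|$, since for the code $\mathcal{R}(q,m)$ of length $n=q^m$ one has $d(c,w)=q^m-N(c,w)$. Thus it suffices to show that the entries of $B_z \star S_w$ record $N(c,w)$ for every codeword $c$ of $\mathcal{R}(q,m)$, and the corollary follows from Lemma \ref{lemma:compute_correlatin_vz_Sw} essentially by applying it simultaneously to all rows of $B_z$.

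First I would observe that, by Definition \ref{def:Bz}, the $i$-th row of $B_z$ is precisely $(z^{b_{i0}},\ldots,z^{b_{i,n-1}})$, where $b_i$ is the $i$-th codeword of $\mathcal{R}'(q,m)$; in the notation preceding Lemma \ref{lemma:compute_correlatin_vz_Sw} this row is exactly the vector $(b_i)_z$ associated to the word $b_i$. Applying the definition \eqref{eq:star_prod_mat} of the matrix $\star$-product to the column vector $S_w$ then gives $(B_z \star S_w)_i = \sum_{j=0}^{n-1} z^{b_{ij}} \star s_w^{(j)} = (b_i)_z \star S_w$. Hence each $\mathbb{R}^q$-valued entry of $B_z \star S_w$ is obtained by applying Lemma \ref{lemma:compute_correlatin_vz_Sw} to the codeword $b_i$, so that $(B_z \star S_w)_i = (k_0^{(i)},\ldots,k_{q-1}^{(i)})$ with $k_\ell^{(i)} = N(b_i+E_\ell,w)$ and $E_\ell=(\phi^{-1}(\ell),\ldots,\phi^{-1}(\ell))$.

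The remaining, and really the only delicate, step is the bookkeeping that shows these numbers exhaust the full code exactly once. Since $E_\ell = \phi^{-1}(\ell)\,g_m$ is the constant vector $\phi^{-1}(\ell)$ times the all-one row $g_m$, and since $\mathcal{R}(q,m)$ is generated by $g_0,\ldots,g_{m-1}$ (whose span is $\mathcal{R}'(q,m)$) together with $g_m$, every codeword $c$ of $\mathcal{R}(q,m)$ can be written uniquely as $c = b_i + E_\ell$ with $i\in\{0,\ldots,q^m-1\}$ and $\ell\in\{0,\ldots,q-1\}$. Therefore, as $(i,\ell)$ ranges over all such pairs, the scalar $k_\ell^{(i)}=N(b_i+E_\ell,w)$ runs over $N(c,w)$ for each $c\in\mathcal{R}(q,m)$ exactly once, and converting via $d(c,w)=q^m-N(c,w)$ yields all the required distances. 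The main obstacle is thus not analytic but organizational: one must keep the lexicographic indexing of $B$ consistent with the identification $\psi$ used in Definition \ref{def:Bz} and verify the claimed bijection between index pairs $(i,\ell)$ and codewords of $\mathcal{R}(q,m)$, after which the statement is an immediate consequence of Lemma \ref{lemma:compute_correlatin_vz_Sw}.
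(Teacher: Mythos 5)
Your proposal is correct and is exactly the argument the paper leaves implicit: apply Lemma \ref{lemma:compute_correlatin_vz_Sw} row by row to $B_z$ (each row being $(b_i)_z$ for a codeword $b_i$ of $\mathcal{R}'(q,m)$), observe that the translates $b_i+E_\ell$ range bijectively over $\mathcal{R}(q,m)=\mathcal{R}'(q,m)\oplus\langle g_m\rangle$, and convert agreement counts to distances via $d(c,w)=q^m-N(c,w)$. No gaps.
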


We have seen that $B_z$ plays a crucial role in decoding.
Now, we will decompose $B_z$ as in \cite{AshikhminL96} in a product of sparse matrices in order to get a fast decoding algorithm.
The following results are well known, but we provide them in order for the paper to be self-contained.
\begin{definition}
    We define the Kronecker product of two matrices $A, B$, where $A$ is an $m \times n$ matrix, as follows.
\[
    A \otimes B = \begin{pmatrix}
        A_{0,0} B & \ldots & A_{0,n-1} B \\
        \vdots & \ddots & \vdots\\
        A_{m-1,0} B & \ldots & A_{m-1,n-1} B
    \end{pmatrix}.
\]
\end{definition}

From this definition we get the following lemma.

\begin{lemma}
    Let $A$ be an $m_1 \times n_1$ matrix and $B$ be an $m_2 \times n_2$ matrix.
    Write $i = i_1 m_2 + i_2$ with $0 \leq i_2 < m_2$ and $j = j_1 n_2 + j_2$ with $0 \leq j_2 < n_2$.
    Then $(A \otimes B)_{ij} = A_{i_1,j_1} B_{i_2,j_2}$.
\end{lemma}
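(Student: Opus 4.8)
The plan is to read the claim directly off the block form of the Kronecker product given in the preceding definition, so that the whole argument reduces to translating a block index together with an inner index into a single global index. First I would recall that, by definition, $A\otimes B$ is an $m_1 m_2 \times n_1 n_2$ matrix partitioned into $m_1\times n_1$ blocks, each of size $m_2\times n_2$, in which the block occupying block-row $i_1\in\{0,\dots,m_1-1\}$ and block-column $j_1\in\{0,\dots,n_1-1\}$ equals the scalar multiple $A_{i_1,j_1}B$.

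Next I would observe that the stated decompositions $i = i_1 m_2 + i_2$ with $0\le i_2 < m_2$ and $j = j_1 n_2 + j_2$ with $0\le j_2 < n_2$ are precisely the (unique) Euclidean divisions of $i$ by $m_2$ and of $j$ by $n_2$: the indices $i_1,j_1$ are the quotients and $i_2,j_2$ the remainders. The point of this reformulation is that block-row $i_1$ occupies exactly the global rows $i_1 m_2, i_1 m_2 + 1, \dots, i_1 m_2 + (m_2-1)$, so the global row $i$ is the $i_2$-th row inside block-row $i_1$; the analogous statement for columns places the global column $j$ as the $j_2$-th column inside block-column $j_1$.

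It then remains to read off the relevant entry. The global position $(i,j)$ lies in the block $A_{i_1,j_1}B$ at local position $(i_2,j_2)$, and since scalar multiplication of a matrix acts entrywise, the $(i_2,j_2)$-entry of $A_{i_1,j_1}B$ is $A_{i_1,j_1}B_{i_2,j_2}$. Hence $(A\otimes B)_{ij} = A_{i_1,j_1}B_{i_2,j_2}$, as claimed.

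There is no genuine obstacle here beyond index bookkeeping: the only points requiring care are to fix the $0$-based indexing convention consistently, matching the ranges $0,\dots,m-1$ and $0,\dots,n-1$ used in the definition, and to confirm that the offsets $i_1 m_2$ and $j_1 n_2$ align with where block $(i_1,j_1)$ begins. Once the Euclidean-division interpretation of the decompositions is made explicit, the identity follows immediately, which is why I would keep the write-up short rather than belabor the arithmetic.
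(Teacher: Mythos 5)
Your proof is correct and matches the paper's treatment: the paper states this lemma as an immediate consequence of the block definition of the Kronecker product, and your argument simply makes that read-off explicit via the Euclidean-division interpretation of the index decompositions. Nothing is missing.
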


We can specialize the previous lemma to the following setting:
\begin{lemma}
\label{lemma:tensor_power}
    Let $H$ be a $q \times q$ matrix.
    Let $i = i_0 + i_1 q + \ldots i_{m-1} q^{m-1}$ and $j = j_0 + j_1 q + \ldots j_{m-1} q^{m-1}$ be $q$-ary expansions.
    Then
    \[
        (H^{\otimes m})_{i,j} = \prod_{\ell=0}^{m-1} H_{i_\ell,j_\ell}.
    \]
\end{lemma}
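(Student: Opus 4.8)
The plan is to proceed by induction on $m$, using the preceding lemma on the index structure of a single Kronecker product as the engine of the inductive step. For the base case $m=1$ there is nothing to do: $H^{\otimes 1}=H$, the $q$-ary expansions reduce to $i=i_0$ and $j=j_0$, and the claimed identity is just $(H)_{i,j}=H_{i_0,j_0}$.

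For the inductive step I would first invoke associativity of the Kronecker product to write $H^{\otimes m}=H\otimes H^{\otimes(m-1)}$, so that the two factors are $A=H$ of size $q\times q$ and $B=H^{\otimes(m-1)}$ of size $q^{m-1}\times q^{m-1}$. The only real care needed is to line up the digit conventions: in the $q$-ary expansion $i=\sum_{\ell=0}^{m-1}i_\ell q^\ell$ the most significant digit is $i_{m-1}$, and splitting off this top digit gives exactly the decomposition $i=i_{m-1}\,q^{m-1}+r$ with $r=\sum_{\ell=0}^{m-2}i_\ell q^\ell$ and $0\le r<q^{m-1}$ that is required to apply the single-product lemma (and likewise $j=j_{m-1}\,q^{m-1}+s$ with $s=\sum_{\ell=0}^{m-2}j_\ell q^\ell$).

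Applying the single-product lemma then yields
\[
    (H^{\otimes m})_{i,j}=H_{i_{m-1},j_{m-1}}\cdot(H^{\otimes(m-1)})_{r,s}.
\]
Since $r=\sum_{\ell=0}^{m-2}i_\ell q^\ell$ and $s=\sum_{\ell=0}^{m-2}j_\ell q^\ell$ are precisely the $q$-ary expansions of $r$ and $s$ using the digits $i_0,\dots,i_{m-2}$ and $j_0,\dots,j_{m-2}$, the induction hypothesis applied to $H^{\otimes(m-1)}$ gives $(H^{\otimes(m-1)})_{r,s}=\prod_{\ell=0}^{m-2}H_{i_\ell,j_\ell}$, and multiplying by the leading factor $H_{i_{m-1},j_{m-1}}$ produces $\prod_{\ell=0}^{m-1}H_{i_\ell,j_\ell}$, as desired.

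There is no genuine difficulty here; the statement is a standard fact about iterated Kronecker products and the argument is essentially a one-line induction. The only point that deserves attention is the bookkeeping of the two indexing schemes --- the high/low split used in the single-product lemma versus the ordering of the $q$-ary digits --- and the observation that peeling off the \emph{most} significant digit (rather than the least significant one) is what makes the remaining digits $i_0,\dots,i_{m-2}$ read off directly as the expansion of $r$, so that the induction hypothesis applies verbatim without any index shift.
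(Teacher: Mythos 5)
Your proof is correct and follows essentially the same route as the paper: induction on $m$ with the single Kronecker product index lemma supplying the inductive step. The only (immaterial) difference is that you split off the leftmost factor $H^{\otimes m}=H\otimes H^{\otimes(m-1)}$ and hence the most significant digit, whereas the paper splits off the rightmost factor $H^{\otimes(m-1)}\otimes H$ and the least significant digit; your variant has the small bookkeeping advantage you note, that the remaining digits form the expansion of the remainder with no index shift.
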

\begin{proof}
    We prove the claim by induction.
    It is obvious for $m = 1$.
    Assume the claim is proven for $H^{\otimes m-1}$.
    Then, $H^{\otimes m} = H^{\otimes m-1} \otimes H$.
    Applying the previous lemma we write $i = i'_1 q + i'_2$ with $0 \leq i'_2 < q$ and $j = j'_1 q + j'_2$ with $0 \leq j'_2 < q$ to get
    \[
        (H^{\otimes m})_{i,j} = (H^{\otimes m-1})_{i'_1,j'_1} H_{i'_2,j'_2}.
    \]
    Note that $i'_2 = i_0, j'_2 = j_0$.
    Furthermore, $i'_1 = i_1 + i_2 q + \ldots i_{m-1} q^{m-2}$ and $j'_1 = j_1 + j_2 q + \ldots j_{m-1} q^{m-2}$.
    Applying the induction hypothesis yields the result.
\end{proof}

This allows us to write the matrix $B_z$ as $H^{\otimes m}$ for a suitable $H$ as we show in the next lemma following \cite{AshikhminL96}.
Recall that $\phi$ was the bijection we fixed between $\mathbb{F}_q$ and $\{0, \ldots, q-1\}$.
\begin{lemma}
\label{lemma:tensor_power_H_Bz}
    Let $n = q^m$.
    We define the matrix $H$ of size $q \times q$ by
    \[
        H_{ij} = z^{\phi^{-1}(i) \phi^{-1}(j)}
    \]
    for $i, j \in \{0, \ldots, q-1\}$.
    Then,
    \[
        B_z = H^{\otimes m}.
    \]
\end{lemma}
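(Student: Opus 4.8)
The plan is to verify the identity $B_z = H^{\otimes m}$ entry by entry, reducing everything to the explicit description of the entries of $B$ together with Lemma \ref{lemma:tensor_power}. Since Lemma \ref{lemma:tensor_power} already gives $(H^{\otimes m})_{i,j} = \prod_{\ell=0}^{m-1} H_{i_\ell,j_\ell}$ for the $q$-ary expansions $i = \sum_\ell i_\ell q^\ell$ and $j = \sum_\ell j_\ell q^\ell$, it suffices to show that the exponent $b_{ij}$ appearing in $(B_z)_{ij} = z^{b_{ij}}$ equals $\sum_{\ell=0}^{m-1} \phi^{-1}(i_\ell)\,\phi^{-1}(j_\ell)$, because then $(B_z)_{ij} = \prod_{\ell=0}^{m-1} z^{\phi^{-1}(i_\ell)\phi^{-1}(j_\ell)} = \prod_{\ell=0}^{m-1} H_{i_\ell,j_\ell}$.

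First I would pin down the $j$-th column of $G'$. By the chosen lexicographic ordering displayed above, the row $g_0$ varies fastest and $g_{m-1}$ slowest, so under the identification $\phi$ the $j$-th column of $G'$ is exactly $(\phi^{-1}(j_0), \ldots, \phi^{-1}(j_{m-1}))^\top$; equivalently $(g_\ell)_j = \phi^{-1}(j_\ell)$ for each $\ell$. Next I would invoke the description \eqref{eq:lexicographic_order_rows} of the rows of $B$, namely $b_i = \sum_{\ell=0}^{m-1}\phi^{-1}(i_\ell)\, g_\ell$. Taking the $j$-th coordinate and substituting the column formula gives
\[
    b_{ij} = \sum_{\ell=0}^{m-1}\phi^{-1}(i_\ell)\,(g_\ell)_j = \sum_{\ell=0}^{m-1}\phi^{-1}(i_\ell)\,\phi^{-1}(j_\ell).
\]
Exponentiating turns this sum into a product, yielding $(B_z)_{ij} = z^{b_{ij}} = \prod_{\ell=0}^{m-1} H_{i_\ell,j_\ell}$, which by Lemma \ref{lemma:tensor_power} is precisely $(H^{\otimes m})_{ij}$, completing the argument.

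The only genuinely delicate point is the bookkeeping in the first step: one must check that the digit of $j$ selecting the entry of row $g_\ell$ is the coefficient $j_\ell$ of $q^\ell$, so that the indexing of the columns of $G'$ matches the digit convention used in Lemma \ref{lemma:tensor_power}, where both $i$ and $j$ are expanded with the same ordering of digits. This is forced by the explicit form of $G$ displayed above, in which $g_0$ is the fastest-varying row. Once this identification is fixed, the remaining computation is a direct substitution, and both the exponentiation and the appeal to Lemma \ref{lemma:tensor_power} are routine.
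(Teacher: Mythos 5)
Your proof is correct and follows essentially the same route as the paper: compute $(H^{\otimes m})_{i,j}$ via Lemma \ref{lemma:tensor_power}, use $b_i = \sum_{\ell}\phi^{-1}(i_\ell)g_\ell$ together with $g_{\ell,j}=\phi^{-1}(j_\ell)$ (from the lexicographic column ordering) to get $B_{i,j}=\sum_{\ell}\phi^{-1}(i_\ell)\phi^{-1}(j_\ell)$, and match exponents. Your explicit remark that $g_0$ is the fastest-varying row is exactly the bookkeeping the paper's proof relies on implicitly.
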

\begin{proof}
    To rewrite $H^{\otimes m}$ we use Lemma \ref{lemma:tensor_power}.
    Writing $i = i_0 + i_1 q + \ldots + i_{m-1} q^{m-1}$ and $j = j_0 + j_1 q + \ldots + j_{m-1} q^{m-1}$ we get
    \[
        (H^{\otimes m})_{i, j} = z^{\phi^{-1}(i_0) \phi^{-1}(j_0) + \ldots +\phi^{-1}(i_{m-1}) \phi^{-1}(j_{m-1})}.
    \]
    Now we have to determine how we can express $B_z$.
    The $i$-th row $b_i$ of $B$ is $b_i = \phi^{-1}(i_0) g_0 + \ldots \phi^{-1}(i_{m-1}) g_{m-1}$.
    Since the columns of $G'$ are ordered lexicographically, we get that $g_{\ell, j} = \phi^{-1}(j_\ell)$.
    Thus,
    \[
        B_{i,j} = \phi^{-1}(i_0) \phi^{-1}(j_0) + \ldots \phi^{-1}(i_{m-1}) \phi^{-1}(j_{m-1}),
    \]
    and the claim follows.
\end{proof}
Note also the similarity to the discrete Fourier transform.
From now on when we write $H$, we use the matrix $H$ from Lemma \ref{lemma:tensor_power_H_Bz}.
We can write $H^{\otimes m}$ as
\begin{align*}
    H^{\otimes m} &= H \otimes H \otimes \cdots \otimes H \\
    &= (H \otimes I \otimes I \otimes \cdots \otimes I) (I \otimes H \otimes I \otimes \cdots \otimes I) \cdots (I \otimes I \otimes \cdots \otimes I \otimes H).
\end{align*}

We want to rewrite this equation.
In order to do so, we will use the results from Davio \cite{davio2012kronecker}, see also \cite{diaconis1983mathematics}, and introduce generalized perfect shuffles.

\begin{definition}
    A \textbf{generalized perfect out shuffle} is defined as follows.
    Suppose we have $\sigma \tau$ cards.
    The \textbf{out-$\sigma$-shuffle} is defined by taking $\sigma$ stacks of $\tau$ cards and putting them next to each other.
    Then we pick up the cards from left to right.
\end{definition}
For $\sigma = 2$ this is the classical Faro out-shuffle.

\begin{example}
    Let $\sigma = 3, \tau = 4$.
    Then we number the cards with the numbers $\{0, \ldots, 11\}$.
    The stacks are $\{0, 1, 2, 3\}, \{4, 5, 6, 7\}, \{8, 9, 10, 11\}$.
    Then the resulting permutation is given by
    \[
        \begin{bmatrix}
            0 & 1 & 2 & 3 & 4 & 5 & 6 & 7 & 8 & 9 & 10 & 11\\
            0 & 4 & 8 & 1 & 5 & 9 & 2 & 6 & 10 & 3 & 7 & 11
        \end{bmatrix}.
    \]
\end{example}
Davio \cite{davio2012kronecker}, see also \cite{diaconis1983mathematics}, proved a more general result, but we get in the special case of square matrices $A, B$ of sizes $\sigma \times \sigma$ and $\tau \times \tau$ respectively, that
\[
    B \otimes A = P_{\tau, \sigma} (A \otimes B) P_{\sigma, \tau}
\]
where $P_{\sigma, \tau}$ is the permutation matrix corresponding to the out-$\sigma$-shuffle on $\sigma \tau$ elements.
It is not too hard to see that $P_{\tau,\sigma} = P_{\sigma, \tau}^T$, see for example \cite{medvedoff1987groups}, \cite[Lemma 3]{johnson2022look}.
From now on we will only write $P$ for $P_{\sigma,\tau}$ because for the complexity of decoding it only matters that $P$ is a permutation.

Specializing this, we have the permutation matrix $P$ such that for all $q^{m-1} \times q^{m-1}$ matrices $A$ and for all $q \times q$ matrices $B$, we have $B \otimes A = P (A \otimes B) P^T$.
Thus,
\[
    I \otimes H \otimes I \otimes \cdots \otimes I = P (H \otimes I \otimes I \otimes \cdots \otimes I) P^T,
\]
where we used for $B$ the rightmost $I$ in $H \otimes I \otimes I \otimes \cdots \otimes I \otimes I = (H \otimes I \otimes I \otimes \cdots \otimes I) \otimes I$.
Iteratively, using the rightmost matrix in the Kronecker product, we get
\begin{align*}
    I \otimes I \otimes H \otimes I \otimes \cdots \otimes I &= P^2 (H \otimes I \otimes I \otimes \cdots \otimes I) (P^2)^T\\
    &\vdots\\
    I \otimes I \otimes I \cdots \otimes I \otimes H &= P^{m-1} (H \otimes I \otimes I \otimes \cdots \otimes I) (P^{m-1})^T.
\end{align*}
In particular, returning to $H^{\otimes m}$ we get
\begin{equation}
\label{eq:factoriazion_kronecker_perm}
    H^{\otimes m} = ((H \otimes I \otimes \cdots \otimes I) P)^m.
\end{equation}
Since $H$ is a $q \times q$ matrix and $m = \log_q(n)$, we get the following corollary.

\begin{corollary}\label{decoding}\cite{AshikhminL96}
\label{cor:complexity_decoding_RM}
    The distances between a received word $w$ and all codewords of $\mathcal{R}(q,m)$ can be 
    found with at most $q(q-1) n \log_q(n)$ additions.
\end{corollary}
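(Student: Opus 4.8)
The plan is to turn the factorization \eqref{eq:factoriazion_kronecker_perm} into an algorithm and then count the additions it performs. By the previous corollary, all the distances are obtained from the single $\star$-product $B_z \star S_w$, and since $B_z = ((H \otimes I \otimes \cdots \otimes I)P)^m$, I would compute this product by applying the map $MP$ to the input vector $S_w$ exactly $m$ times, where $M := H \otimes I_{q^{m-1}}$. The associativity of the $\star$-action (Lemma \ref{lemma:star_action}) guarantees that iterating $MP$ on a vector yields the same result as forming $B_z$ first and multiplying afterwards; the crucial point is that we never materialize the dense matrix $B_z$, but only apply the sparse factor $M$ and the permutation $P$ repeatedly.

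Next I would account for the cost of the permutation. Each application of $P$ merely reindexes the $q^m$ vector entries (each lying in $\mathbb{R}^q$) and therefore contributes no additions at all. Hence the entire additive cost is carried by the $m$ applications of $M$.

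Then I would count the cost of a single application of $M = H \otimes I_{q^{m-1}}$. Writing the input as $q$ consecutive blocks $S_0, \ldots, S_{q-1}$, each consisting of $q^{m-1}$ entries from $\mathbb{R}^q$, the Kronecker structure gives output block $T_i = \sum_{j=0}^{q-1} H_{ij}\star S_j = \sum_{j=0}^{q-1} z^{\phi^{-1}(i)\phi^{-1}(j)}\star S_j$. Here lies the key cost observation: by \eqref{eq:def_scalar_star}, multiplying an entry of $\mathbb{R}^q$ by a scalar $z^c$ only permutes its $q$ coordinates and so costs no additions, while the only real work is summing the $q$ permuted vectors. Summing $q$ vectors in $\mathbb{R}^q$ takes $q-1$ vector additions, that is $q(q-1)$ scalar additions per output entry. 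Since there are $q^m = n$ output entries in total, one application of $M$ costs at most $n\,q(q-1)$ additions.

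Finally, multiplying by the number of iterations $m = \log_q(n)$ yields the total of $q(q-1)\,n\log_q(n)$ additions, as claimed. The only genuine subtlety — and the step I would treat most carefully — is the cost model itself: one must make explicit that in the $\star$-arithmetic the scalar multiplications $z^c \star (\cdot)$ are free (they are pure coordinate permutations of $\mathbb{R}^q$), whereas the actual additions come only from the coordinatewise summation in $\mathbb{R}^q$. With that accounting in place, the naive ``sum of $q$ terms'' bound of $q-1$ vector additions per output entry is exactly what produces the factor $q(q-1)$ and hence the stated bound.
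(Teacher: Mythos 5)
Your proposal is correct and follows essentially the same route as the paper: both exploit the factorization $B_z = ((H \otimes I \otimes \cdots \otimes I)P)^m$, apply the sparse factor $m=\log_q(n)$ times, and charge $q-1$ additions of vectors in $\mathbb{R}^q$ (hence $q(q-1)$ scalar additions) to each of the $n$ output entries. Your explicit remark that the scalar actions $z^c\star(\cdot)$ and the permutation $P$ cost no additions is a welcome clarification of the cost model that the paper leaves implicit, but it does not change the argument.
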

\begin{proof}
    Since $\star$ is an action by Lemma \ref{lemma:star_action}, we get
    \begin{align*}
        B_z \star S_w &= (H^{\otimes m}) \star S_w
        = ((H \otimes I \otimes \cdots \otimes I) P)^m \star S_w\\
        &= ((H \otimes I \otimes \cdots \otimes I) P) \star (((H \otimes I \otimes \cdots \otimes I) P)\star(\cdots \star S_w))
    \end{align*}
    where we have $m$ times the matrix $((H \otimes I \otimes \cdots \otimes I) P)$ acting on vectors in $(\mathbb{R}^q)^n$.
    The first action is on $S_w$ as defined in \eqref{sw}.
    Therefore, for the total complexity we get $m = \log_q(n)$ times the complexity of a single action with $\star$ by $(H \otimes I \otimes \cdots \otimes I) P$.
    Note that the matrix $(H \otimes I \otimes \cdots \otimes I)$ consists of blocks of identity matrices of size $q^{m-1}$ multiplied by entries of $H$.
    Therefore, it has only $q^{m-1} \times q^2 = n q$ nonzero entries and in each row and column there are exactly $q$ non-zero entries.
    Recall that $S_w = (s_w^{(0)}, \ldots, s_w^{(n-1)})$ with $s_w^{(i)} \in\ \mathbb R^q$ and by definition of $\star$ for matrix times vector multiplication we get in total $n$ times the complexity of computing a single entry of the form
    \begin{align*}
        (((H \otimes I \otimes \cdots \otimes I) P) \star S_w)_{i}
        = \sum_{k=0}^{n-1} ((H \otimes I \otimes \cdots \otimes I) P)_{ik} \star s_w^{(k)}.
    \end{align*}
    for $i \in \{1, \ldots, n\}$.
    But a row of the matrix $(H \otimes I \otimes \cdots \otimes I) P$ has only $q$ non-zero entries and hence we only have $q - 1$ instead of $n-1$ additions of star products by scalars as in \eqref{eq:def_scalar_star}.
    In order to add these $q-1$ vectors of length $q$, we need $(q-1)q$ additions.
    Hence, we get that overall we need at most $q (q-1) n \log_q(n)$ additions.
\end{proof}

\subsection{Comparing a received word to all codewords of a MacDonald code}    

In this section, we use the relation of first order Reed Muller codes and MacDonald codes together with the decoding algorithm from \cite{AshikhminL96} (see previous subsection) to present an efficient decoding algorithm for MacDonald codes. We assume these results to be known but could not find a reference where this decoding algorithm for MacDonald codes is explained.

According to Corollary \ref{decoding} (see \cite{AshikhminL96}), to compare all codewords of $\mathcal{R}(q,m)$ to a received word $w$, we need to calculate $B^{(m)}_z\star S_w$, where $B^{(m)}$ is the matrix of all codewords of $\mathcal{R}'(q,m)$ in a certain order.
To construct the matrices $B^{(m)}$ for certain values of $m$, we first construct a generator matrix $G'$ of $\mathcal{R}'(q,m)$ as above and then form linear combinations in lexicographical order, see  \eqref{eq:lexicographic_order_rows}.
For simplicity here we omit the bijection $\phi$ from the previous subsection.

We now consider a generator matrix of the MacDonald code $\mathcal{C}_{m,m-k}(q)$ in the following form (see \eqref{eq:MacDonald_RM}):
\begin{align}\label{eq:definition_R}
R:=
\left(
    \begin{array}{c|c|c|c|c}
        R(q,m-1)   & 0 \ldots 0            & 0 \ldots 0 & \ldots & 0 \ldots 0\\
                                & R(q,m-2)  & 0 \ldots 0 & \ldots & 0 \ldots 0\\
                                &                       & R(q,m-3) &\ldots & 0                                                  \ldots 0 \\
        \vdots                  & \vdots                & \vdots    &\vdots     & \vdots \\
                                &                       &           &        & R(q,m-k)
    \end{array}
    \right),
\end{align}    
where the first row of $R(q,m-i)$ is the all-one row for each $i\in\{1,\hdots,k\}$.
Moreover, we assume that the columns of $R'(q,m-i)$, i.e., the matrix $R(q,t-i)$ without the first row, are in lexicographical order.

For a codeword $v\in\mathcal{C}_{m,\delta}(q)$ with $m=\delta+k$, we have $v=(v_1, \ldots, v_k)$ with $v_i\in\mathbb F_q^{q^{m-i}}$, where $v_i$ is a codeword of $\mathcal{R}(q,m-i)$.
We partition a received word $w$ in the same way, i.e., we write it as $w=(w_1, \ldots, w_k)$ with $w_i\in\mathbb F_q^{q^{m-i}}$. 
We obtain $$N(v,w)=\sum_{i=1}^kN(v_i,w_i)$$ and to obtain $N(v_i,w_i)$ for all $v$, we need to calculate $B^{(m-i)}_z\star S_{w_i}$.

More precisely, if we write $\alpha^{(j)}=(\alpha^{(j)}_1, \ldots, \alpha^{(j)}_n)$ for the $j$-th row of $R$, then
\small
\begin{align*}
&B^{(m-i)}_z\star S_{w_i}=\\
&\begin{pmatrix}
    N(0,w_i) & N(\alpha^{(i)}_i,w_i) & \cdots & N((q-1)\alpha^{(i)}_i,w_i)\\
    N(\alpha^{(i+1)}_{i},w_i) &  N(\alpha^{(i+1)}_{i}+\alpha^{(i)}_i,w_i) & \cdots & N(\alpha^{(i+1)}_{i} \hspace{-1mm}+ \hspace{-0.5mm}(q-1)\alpha^{(i)}_i,w_i)\\
    \vdots & \vdots & \cdots & \vdots\\
   N((q-1) \hspace{-1.5mm}\displaystyle\sum_{s=i+1}^{m} \hspace{-1.5mm} \alpha_i^{(s)},w_i) &  N((q-1) \hspace{-1.5mm}\displaystyle\sum_{s=i+1}^{m} \hspace{-1.5mm} \alpha_i^{(s)} \hspace{-0.5mm}+ \hspace{-0.5mm}\alpha^{(i)}_i,w_i) & \cdots &  N((q-1)\displaystyle\sum_{s=i}^{m} \alpha_i^{(s)},w_i)
\end{pmatrix}
\end{align*}
\normalsize

Note that $\alpha^{(1)}_i=\ldots=\alpha^{(i-1)}_i=0$ and $\alpha_i^{(i)}$ is equal to the all one row $E_\ell$ of suitable length from Lemma \ref{lemma:compute_correlatin_vz_Sw} for some $\ell$. 
Hence, for all codewords $v\in\mathcal{C}_{m,\delta}(q)$, the value for $N(v_i,w_i)$ can be found as an entry of the matrix $B^{(m-i)}_z\star S_{w_i}$.

\begin{example}
    Take $q=3$, $k=2$, $\delta=1$. One has $$R=\begin{pmatrix}
        R(3,2) & 0_{1 \times 3} \\
       \vdots &  R(3,1)
    \end{pmatrix}=\left(\begin{array}{cccccccccccc}
       1 & 1 & 1 & 1 & 1 & 1 & 1 & 1 & 1 & 0 & 0 & 0\\
        0 & 1 & 2 & 0 & 1 & 2 & 0 & 1 & 2 & 1 & 1 & 1\\
        0 & 0 & 0 & 1 & 1 & 1 & 2 & 2 & 2 & 0 & 1 & 2
    \end{array}\right)$$
and
$$B^{(2)}=\left(\begin{array}{ccccccccc}
 0 & 0 & 0 & 0 & 0 & 0 & 0 & 0 & 0   \\
 0 & 1 & 2 & 0 & 1 & 2 & 0 & 1 & 2  \\
 0 & 2 & 1 & 0 & 2 & 1 & 0 & 2 & 1  \\
 0 & 0 & 0 & 1 & 1 & 1 & 2 & 2 & 2  \\
 0 & 1 & 2 & 1 & 2 & 0 & 2 & 0 & 1  \\
 0 & 2 & 1 & 1 & 0 & 2 & 2 & 1 & 0 \\
 0 & 0 & 0 & 2 & 2 & 2 & 1 & 1 & 1 \\
 0 & 1 & 2 & 2 & 0 & 1 & 1 & 2 & 0 \\
 0 & 2 & 1 & 2 & 1 & 0 & 1 & 0 & 2 
\end{array}\right),\quad B^{(1)}=\left(\begin{array}{ccc}
 0 & 0  & 0 \\
  0 & 1 & 2\\
 0 & 2 & 1 
\end{array}\right).$$
Assume we receive $w=\left(\begin{array}{cccccccccccc}
    1 & 1 & 1 & 1 & 1& 1& 1 & 1 & 0 & 1 & 0 & 0
\end{array}\right)$. Then,
\begin{align*}
    S_{w_1} &= ((0,1,0), (0,1,0), (0,1,0), (0,1,0), (0,1,0), (0,1,0), (0,1,0), (0,1,0), (1, 0, 0)),\\
    S_{w_2} &= ((0,1,0), (1, 0, 0), (1, 0, 0)),
\end{align*}
and
$$B_z^{(2)}\star S_{w_1}=\left(\begin{array}{ccc}
  1   & 8 & 0\\
   3  & 4 & 2\\
   2 & 3 & 4\\
   3 & 4 & \mathbf{2}\\
   2 & 3 & 4\\
   4 & 2 & 3\\
   2 & 3 & 4\\
   4 & 2 & 3\\
   3 & 4 & 2 
\end{array}\right),\quad B_z^{(1)}\star S_{w_2}=\left(\begin{array}{ccc}
  2   & 1 & 0\\
   \textbf{0}  & 2 & 1\\
   0 & 2 & 1
\end{array}\right).$$
Consider now for example the codeword $v=2\alpha^{(1)}+\alpha^{(3)}$. Then, $N(v,w)=N(v_1,w_1)+N(v_2,w_2)=2+0=2$; see the bold numbers in $B_z^{(2)}\star w_1$ and $B_z^{(1)}\star w_2$.
\end{example}

Recall from Corollary \ref{cor:complexity_decoding_RM} that
$B^{(m-i)}_z\star S_{w_i}$ can be calculated using at most $q(q-1)q^{m-i}(m-i)$ additions.

\begin{corollary}
\label{cor:complexity_comparing_received_MacDonald}
To calculate the distances between a received word and all codewords of $\mathcal{C}_{m,m-k}(q)$, at most $$q^2 n\log_q(n)$$ additions are required.
\end{corollary}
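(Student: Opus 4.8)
The plan is to build the answer blockwise from the Reed--Muller decoder of Corollary~\ref{cor:complexity_decoding_RM}. Writing the received word as $w = (w_1, \ldots, w_k)$ with $w_i \in \mathbb F_q^{q^{m-i}}$ and using the decomposition $N(v,w) = \sum_{i=1}^k N(v_i, w_i)$ established above, I would first compute, for each $i \in \{1, \ldots, k\}$, the full table $B^{(m-i)}_z\star S_{w_i}$. By Corollary~\ref{cor:complexity_decoding_RM} applied with block length $q^{m-i}$ (so that the logarithm of the length equals $m-i$), this costs at most $q(q-1)q^{m-i}(m-i)$ additions, and every value $N(v_i,w_i)$ occurs as an entry of this table.

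Next I would sum these costs over $i$. The decisive elementary identity is
$$\sum_{i=1}^k q^{m-i} = q^{m-k}\cdot\frac{q^k-1}{q-1} = n,$$
the length of $\mathcal{C}_{m,m-k}(q)$. Together with $m-i \le m-1$ this gives
$$\sum_{i=1}^k q(q-1)q^{m-i}(m-i) \le q(q-1)(m-1)\sum_{i=1}^k q^{m-i} = q(q-1)(m-1)\,n.$$
To turn the factor $m-1$ into $\log_q(n)$ I would record the two-sided estimate $m-1 \le \log_q(n) < m$, whose lower bound follows from $n \ge q^{m-k}q^{k-1} = q^{m-1}$ and whose upper bound follows from $\frac{q^k-1}{q-1} < q^k$. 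Hence the table-building phase costs at most $q(q-1)\,n\log_q(n)$ additions.

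Finally I would account for assembling the $q^m$ distances: forming each $N(v,w) = \sum_{i=1}^k N(v_i,w_i)$ by lookup into the $k$ tables needs $k-1$ additions, for a total of $(k-1)q^m$. Using $q^m = (q-1)n + q^{m-k} \le qn$ (since $q^{m-k} \le q^{m-1} \le n$) together with $k-1 \le \log_q(n)$ (which holds because $k \le m-1 \le \log_q(n)$), this phase contributes at most $q\,n\log_q(n)$ additions. Adding the two phases yields
$$q(q-1)\,n\log_q(n) + q\,n\log_q(n) = q^2\,n\log_q(n),$$
and passing from the correlations $N(v,w)$ to the distances $n - N(v,w)$ requires no further additions, which is exactly the claimed bound.

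The arithmetic of the geometric sum is routine; the only point requiring genuine care is the chain of elementary comparisons relating $m$ to $\log_q(n)$ and verifying that the combination cost $(k-1)q^m$ fits precisely inside the slack $q\,n\log_q(n)$ between the leading factors $q(q-1)$ and $q^2$. I would also sanity-check the boundary case $k=1$, where no combining step is needed and the bound is far from tight, as well as the smallest admissible $q$, to confirm that none of the inequalities $m-1\le\log_q(n)$, $q^m\le qn$, or $k-1\le\log_q(n)$ degenerates.
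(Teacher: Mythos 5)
Your proposal is correct and follows essentially the same route as the paper's proof: split the received word into $k$ blocks, invoke the Reed--Muller decoding complexity bound for each $B^{(m-i)}_z\star S_{w_i}$, use $\sum_{i=1}^k q^{m-i}=n$ and $m-1\le\log_q(n)$ to bound the table-building cost by $q(q-1)n\log_q(n)$, and bound the $(k-1)q^m$ combination additions by $qn\log_q(n)$. The extra details you supply (the lower bound $n\ge q^{m-1}$ and the justification of $k-1\le\log_q(n)$) are exactly the facts the paper uses implicitly, so there is nothing to add.
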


\begin{proof}
To calculate the distances between a received word and all codewords of the code $\mathcal{C}_{m,m-k}(q)$, we split the received word into $k$ parts of suitable sizes and calculate the distances between $R(q,m-i)$ and the $i$-th part of the received word for $i\in\{1,\hdots,k\}$ according to equation \eqref{eq:definition_R}.
This requires to calculate $B^{(m-i)}_z\star S_{w_i}$, for $i\in\{1, \ldots, k \}$.
According to the previous theorem, this has complexity equal to
$$\sum_{i=1}^kq(q-1)q^{m-i}(m-i)\leq (m-1)\cdot q(q-1)\sum_{i=1}^k q^{m-i}\leq\log_q(n)\cdot q(q-1)\cdot n
   $$
since $\sum_{i=1}^k q^{m-i}=n$ and $q^{m-1}\leq n$. Moreover, for each of the $q^{m}$ codewords, we have to add the corresponding entries
of $B^{(m-i)}_z\star S_{w_i}$ for $i\in\{1, \ldots, k \}$, requiring
$$q^{m}(k-1)\leq q\cdot n\cdot (k-1)\leq q\cdot n\cdot \log_q(n)$$additions.  
\end{proof}

\subsection{Comparing a received word to all codewords of certain subcodes of MacDonald codes}

\label{sec:complexity_subcodes}

In Section \ref{sec:NewDecodingAlgorithm} we will use the decoding of MacDonald codes in our improved version of the Viterbi algorithm.
In order to deal with the first and last steps of the Viterbi algorithm we also study the decoding of particular classes of subcodes of MacDonald codes.

Note that in each time-step $\mu+1\leq t\leq  N-\mu$ in the Viterbi algorithm one has to compute the distances between $r_{t-1}$ and all the codewords of the block code generated by $(G_0^\top\ \cdots\ \ G_{\mu-1}^\top \ \tilde{G}_{\mu}^\top)^\top$ (if this matrix has more columns than rows such that it is the generator matrix of a block code).
Assuming $N\geq 2\mu+1$, for $t\leq \mu$, one has to consider the matrix $(G_0^\top\ \cdots\ \  G_{t-1}^\top)^\top$ and for $ N-\mu+1\leq t\leq N-1$, the matrix $(G_{t-N+\mu}^\top\ \cdots\ \ G_{\mu-1}^\top \ \tilde{G}_{\mu}^\top)^\top$ and for $t=N$, the matrix $\tilde{G}_{\mu}$.
For
$$\begin{pmatrix}
    G_0\\ \vdots \\\tilde{G}_{\mu}
\end{pmatrix}=R$$ as in \eqref{eq:definition_R}, i.e., $n=\frac{q^{\delta}(q^k-1)}{q-1}$, one gets
for $0\leq j\leq \mu-1$:
\begin{align*}
 \hspace{-1mm}\begin{pmatrix} G_0\\ \vdots\\ G_j\end{pmatrix} \hspace{-0.5mm}= \hspace{-0.5mm}\left(
    \begin{array}{ c|c |c|c }
   R(q,(j+1)k-1)_{q^{\delta-jk}} & 0_{1\times q^{\delta+k-2}} & \cdots &   \\
        &   R(q,(j+1)k-2)_{q^{\delta-jk}}& \cdots &   0_{(k-1)\times q^{\delta}}\\ 
    \vdots  &    &                \cdots      & 
         \\ 
        &        \vdots       &   \vdots  &   R(q,jk)_{q^{\delta-jk}}   
        \end{array}
    \right)
\end{align*}  
where the subscript $q^{\delta-jk}$ indicates that there are $q^{\delta-jk}$ copies of the corresponding Reed-Muller codes.
We write $\mathcal{C}'_{k(j+1)}(q)$ for the (block) code with generator matrix  $\begin{pmatrix} G_0\\ \vdots\\ G_j\end{pmatrix}$.
Considering this generator matrix in the form above can be used to make also the first steps of the Viterbi algorithm more efficient.
The complexity for step $t=j+1\leq \mu$ in the Viterbi decoding, i.e., comparing a received word to all codewords of the block code with generator matrix $\begin{pmatrix}
    G_0\\ \vdots\\ G_j
\end{pmatrix}$, is upper bounded by
\begin{align*}
&q^{\delta-jk} q (q-1) \sum_{i=jk}^{(j+1)k-1}q^i \cdot i+ q^{k(j+1)}\cdot k\cdot q^{\delta-kj}\\
&\leq ((j+1)k-1) q(q-1) n+k\cdot q\cdot n\leq q^2\cdot n\cdot \log_q(n).
\end{align*}
Note for the second term that we have $q^{k(j+1)}$ codewords and $k\cdot q^{\delta-kj}$ copies of first order Reed Muller codes.
Hence, adding the results for each copy for each codeword requires at most
$q^{k(j+1)}\cdot k\cdot q^{\delta-kj}\leq q\cdot \log_q(n)\cdot n$ additions.

Also the last steps of the Viterbi algorithm can be considered in this way, here using the matrices $\begin{pmatrix}
    G_j\\ \vdots\\ \tilde{G}_{\mu}
\end{pmatrix}$. If we permute the columns of this matrix in a suitable way, we obtain that it consists of several copies of the generator matrix of a Reed Muller code where the all-one row is deleted, i.e.,

$$\begin{pmatrix}
    G_j\\ \vdots\\ \tilde{G}_{\mu}\end{pmatrix}=
        \underbrace{\begin{pmatrix}R'(q,\delta-(j-1)k) \cdots  R'(q,\delta-(j-1)k)\end{pmatrix}}_{\frac{q^{(j-1)k}(q^k-1)}{q-1}\ \text{times}}.
    $$
We denote the (block) code with such a generator matrix by $\mathcal{R}^*(q,\delta-(j-1)k)$.
Since we have $q^{\delta-(j-1)k}$ codewords and $\frac{q^{(j-1)k}(q^k-1)}{q-1}$ copies of $R'(q,\delta-(j-1)k)$, adding the results for each copy for each codeword requires at most $\frac{q^{\delta}(q^k-1)}{q-1}=n$ additions.
Therefore, the overall complexity of comparing all codewords of $\mathcal{R}^*(q,\delta-(j-1)k)$ to a received word is upper bounded by 
\begin{align*}
&q(q-1) (\delta-(j-1)k) q^{\delta-(j-1)k}\frac{q^{(j-1)k}(q^k-1)}{q-1}+n\\
&\leq q^2\cdot n\cdot \log_q(n).
\end{align*}

\section{Improved Viterbi algorithm for a class of optimal convolutional codes}
\label{sec:NewDecodingAlgorithm}

In this section, we combine the efficient decoding of the block codes of the previous section with the Viterbi algorithm to obtain a more efficient version of the Viterbi algorithm for the convolutional codes of Construction 1.
The resulting algorithm is given in Algorithm \ref{alg:cap}.

\begin{algorithm}
\caption{Improved Viterbi Algorithm}\label{alg:cap}
\flushleft Let $\mathcal{C}$ be as in Construction 1 with minimal generator matrix $G(z)$, memory $\mu$, received message $r = (r_0, \ldots, r_{N-1}) \in q^{n \cdot N}$ to be decoded, input length $N$.\\
\flushleft\textbf{Step 1:}
\begin{algorithmic}
\For {$t$ from $1$ to  $\mu$}\State \multiline{Compute the distances between $r_{t-1}$ and the codewords of $\mathcal{C}'_{kt}(q)$ and 
for each of the $q^{tk}$ reached states save its metric $d(r_{[0,t-1]},c_{[0,t-1]})$.
}
\EndFor
\end{algorithmic}
\flushleft\textbf{Step 2:}
\begin{algorithmic}
\For {$t$ from $\mu +1$ to  $N-\mu$}\State \multiline{Compute the distances between $r_{t-1}$ and the codewords of $\mathcal{C}_{\delta+k,\delta}(q)$.
Save the result and 
for each state and the corresponding $q^k$ incoming paths do the following:
}
\State \multiline{
1) Add the branch metric $d(r_{t-1},c_{t-1})$ entering the state to the partial path metric of the corresponding survivor at $t-1$.
\\
2) Compare the partial path metrics of all $q^{k}$ paths entering each state.\\
3) For each state, save the path with the smallest partial path metric (called the survivor) and its metric $d(r_{[0,t-1]},c_{[0,t-1]})$.
If there are multiple such paths take one randomly.
Eliminate all other paths in the trellis.
}
\EndFor
\end{algorithmic}
\flushleft\textbf{Step 3:} 
\begin{algorithmic}
\For {$t$ from $N-\mu+1$ to $N$}\State \multiline{Compute the distances between $r_{t-1}$ and the codewords of $\mathcal{R}^*(q,\delta-(t-N+\mu-1)k)$.
Save the result and 
for each state and the corresponding incoming paths do the following:
}
\State \multiline{
1) Add the branch metric $d(r_{t-1},c_{t-1})$ entering the state to the partial path metric of the corresponding survivor at $t-1$.
\\
2) Compare the partial path metrics of all paths entering each state.\\
3) For each state, save the path with the smallest partial path metric (called the survivor) and its metric $d(r_{[0,t-1]},c_{[0,t-1]})$.
If there are multiple such paths take one randomly.
Eliminate all other paths in the trellis.
}
\EndFor
\end{algorithmic}
\flushleft\textbf{Step 4:} Return the single survivor path and its metric.
\end{algorithm}

Next, we study the complexity of this algorithm and compare it to the classical Viterbi algorithm.
The operations we count are additions of integers and comparisons.
We bound the complexity at each time instance $t$ and add them up.
In Step $2$ the complexity of computing the distances between $r_{t-1}$ and the codewords of $\mathcal{C}_{\delta+k,\delta}(q)$ is by Corollary \ref{cor:complexity_comparing_received_MacDonald} bounded by $q^2 n \log_q(n)$.
In Step 1 and 3 the complexity of computing the distances is also bounded by $q^2 n \log_q(n)$ according to Section \ref{sec:complexity_subcodes}.
In all three steps the complexity of comparing the path metrics for each state is bounded by $q^\delta q^k = q^{\delta + k}$.
Since Construction $1$ satisfies
\[
    q^2 n \log_q(n) \geq q^{\delta + k},
\]
we can bound the number of additions needed per time unit $t$ by
\[
    2 q^2 n \log_q(n).
\]
We have $N$ time units $t$ overall and so we get as an upper bound for the complexity
\[
    2 q^2 n \log_q(n) N.
\]
To summarize, we have
\begin{theorem}
\label{thm:complexity_improved_viterbi}
The complexity of Algorithm \ref{alg:cap} for codes as in Construction 1 is
    $O(N\cdot q^2n \log_q(n))$.
\end{theorem}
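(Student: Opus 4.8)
The plan is to bound the total cost by summing the per-time-step cost of Algorithm \ref{alg:cap} over all $N$ steps, and at each step to separate the two kinds of work the algorithm performs: computing the correlations (distances) between the received block $r_{t-1}$ and all codewords of the relevant block code, and the trellis bookkeeping, i.e.\ adding branch metrics to the stored survivor metrics and comparing the incoming paths at each state to determine the survivors.

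First I would handle the distance computations. In Step 2 the relevant block code is the full MacDonald code $\mathcal{C}_{\delta+k,\delta}(q)$, so Corollary \ref{cor:complexity_comparing_received_MacDonald} bounds this cost by $q^2 n \log_q(n)$. In Step 1 the relevant code is the subcode $\mathcal{C}'_{kt}(q)$ and in Step 3 it is $\mathcal{R}^*(q,\delta-(t-N+\mu-1)k)$; the analysis in Section \ref{sec:complexity_subcodes} shows that both of these are likewise bounded by $q^2 n \log_q(n)$. Hence the distance-computation cost is at most $q^2 n \log_q(n)$ at every time step, independently of which of the three steps we are in.

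Next I would bound the trellis bookkeeping. At any time step there are at most $q^\delta$ states, each with at most $q^k$ incoming paths, so the number of branch-metric additions and of path comparisons is at most $q^\delta \cdot q^k = q^{\delta+k}$. The observation that lets these two contributions be absorbed into one is that the parameters of Construction 1 satisfy $q^2 n \log_q(n) \geq q^{\delta+k}$. I would verify this directly from $n = \frac{q^\delta(q^k-1)}{q-1} \geq q^{\delta+k-1}$, which, using $\log_q(n)\geq 1$, gives $q^2 n \log_q(n) \geq q^{\delta+k+1} \geq q^{\delta+k}$. Consequently the total cost at each time step is at most $2 q^2 n \log_q(n)$.

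Summing over the $N$ time steps then yields the bound $2 q^2 n \log_q(n)\, N$, which is $O(N q^2 n \log_q(n))$ as claimed. The main difficulty is not in this aggregation, which is routine per-step accounting, but rather lives in the earlier results (Corollary \ref{cor:complexity_comparing_received_MacDonald} and the estimates of Section \ref{sec:complexity_subcodes}) that establish the $q^2 n \log_q(n)$ bound on the distance computations via the fast transform factorization of $B_z$; once those are granted, the theorem follows immediately from the step-by-step count above together with the inequality $q^2 n \log_q(n) \geq q^{\delta+k}$.
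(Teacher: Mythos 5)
Your proposal is correct and follows essentially the same route as the paper: per-step separation of the distance computations (bounded by $q^2 n \log_q(n)$ via Corollary \ref{cor:complexity_comparing_received_MacDonald} and Section \ref{sec:complexity_subcodes}) from the trellis bookkeeping (bounded by $q^{\delta+k}$), absorbed using $q^2 n \log_q(n) \geq q^{\delta+k}$ and summed over $N$ steps. Your explicit verification of that inequality from $n \geq q^{\delta+k-1}$ is a small addition the paper leaves implicit, but the argument is otherwise identical.
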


Recall that we have by definition that $f = \Theta(g)$ if and only if $f = O(g)$ and $g = O(f)$.
Compared to our improved Algorithm \ref{alg:cap}, the Viterbi algorithm has complexity
\begin{align*}
& \Theta\left(\sum_{t=1}^{\mu}q^{tk}\cdot n+\sum_{t=\mu+1}^{N-\mu}q^{\delta}\cdot q^k\cdot n+\sum_{t=N-\mu+1}^Nq^k\cdot q^{\delta(t-N+\mu)k}\cdot n\right)\\&=\Theta\left((N-2\mu+2)\cdot q^{\delta+k}\cdot n\right)= \Theta\left(\left(N-2\left\lceil\frac{\delta}{k}\right\rceil+2\right) \cdot q\cdot n^2\right).   
\end{align*}
Note that $n = \frac{q^{\delta}(q^k-1)}{q-1}$ and therefore the complexity in Theorem \ref{thm:complexity_improved_viterbi} is significantly smaller.

\section{Alternative Constructions with good column distances and efficient decoding algorithm}
\label{sec:alternative_constructions}

In this section, we will present two alternative constructions for convolutional codes with good column distances.
These constructions do not yield convolutional codes with optimal column distances but we include them in order to cover code rates that cannot be covered by Construction 1.
Moreover, we will show that for fixed code parameters, the new constructions of this section have better column distances than Construction 1 has for the next smaller value of $n$ that it can cover.
Similar to Construction 1, we will use first order Reed Muller codes as building blocks for the alternative constructions to enable similar efficient decoding with the Viterbi algorithm.

For the following Construction 2, the order of the row degrees matters and for this construction we assume the row degrees to be in increasing order, i.e., $\tilde{G}_{\mu}$ consists of the last rows of $G_{\mu}$.

\begin{theorem}[Construction 2]
\label{thm:construction_2}
    The $(q^{\delta+k-1},k,\delta)$ convolutional code $\mathcal{C}$ over $\mathbb F_q$ with generator matrix $G(z)=\sum_{i=0}^{\mu}G_iz^i$ where $\mu=\left\lceil\frac{\delta}{k}\right\rceil$ and
$$\begin{pmatrix}
    G_0\\ \vdots\\
    \tilde{G}_{\mu}
    \end{pmatrix}=
   R(q,\delta+k-1),$$ 
where we assume the first row to be the all-one row, is non-catastrophic and has for $k > 1$ column distances equal to 
$$d_j^c(\mathcal{C})=\begin{cases}
(j+1)n\frac{q-1}{q} & \text{for}\quad j\leq\left\lfloor\frac{\delta}{k}\right\rfloor\\
\left(\left\lfloor\frac{\delta}{k}\right\rfloor+1\right)n\frac{q-1}{q}  & \text{for}\quad j\geq\left\lfloor\frac{\delta}{k}\right\rfloor
\end{cases} \quad \text{if}\ \delta\not\equiv k-1\mod k$$
and
$$d_j^c(\mathcal{C})=\begin{cases}
(j+1)n\frac{q-1}{q} & \text{for}\quad j<\left\lfloor\frac{\delta}{k}\right\rfloor+1=\mu\\
n\left(1+\left\lfloor\frac{\delta}{k}\right\rfloor\cdot \frac{q-1}{q}\right)  & \text{for}\quad j\geq\left\lfloor\frac{\delta}{k}\right\rfloor+1=\mu
\end{cases}
\quad\text{if}\
\delta\equiv k-1\mod k
.$$
For $k = 1$ this construction coincides with Construction $1$.
\end{theorem}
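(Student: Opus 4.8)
The plan is to handle non-catastrophicity by a rank argument and then to extract the column distances from the fact that $\mathcal{R}(q,\delta+k-1)$ is a two-weight code, with the single step $j=\mu$ carrying essentially all the difficulty. For non-catastrophicity I would invoke the left-prime criterion: it suffices to produce $k$ columns on which $G_0$ restricts to an invertible matrix, since such columns are $z$-independent and therefore force $\operatorname{rk}G(\lambda)=k$ for every $\lambda\in\overline{\mathbb{F}}_q$. As $G_0$ consists of the all-one row together with the evaluations of $x_1,\dots,x_{k-1}$, the columns indexed by the points $0,e_1,\dots,e_{k-1}\in\mathbb{F}_q^{\delta+k-1}$ restrict $G_0$ to an upper-triangular matrix with ones on the diagonal, giving the witness; in particular $G_0$ has full rank, so the code is delay-free and the formula $d_j^c(\mathcal C)=\min_{u_0\neq0}\sum_{i=0}^{j}\operatorname{wt}(c_i)$ with $c_i=\sum_{l=0}^{i}u_lG_{i-l}$ is available.

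The computation of the column distances rests on the observation that each $c_i$ is a codeword of $\mathcal R(q,\delta+k-1)$, hence an affine function on $\mathbb F_q^{\delta+k-1}$, whose weight is $0$, $n\tfrac{q-1}{q}$ (non-constant), or $n$ (non-zero constant). Writing $u_i=(u_{i,1},\dots,u_{i,k})$ and reading the coefficients block by block, the constant term of $c_i$ is $u_{i,1}$, the coefficients on the $G_0$-coordinates $x_1,\dots,x_{k-1}$ are $(u_{i,2},\dots,u_{i,k})$, and the coefficients on the coordinates carried by $G_j$ with $j\ge1$ are supplied by $u_{i-j}$. The key point is that for $1\le i\le\mu-1$ the block belonging to $G_i$ equals $u_0G_i$ and cannot be changed by any later input; since $G_i$ is then a full coordinate block, $u_0\neq0$ forces $c_i$ to be non-constant of weight exactly $n\tfrac{q-1}{q}$, regardless of $u_1,u_2,\dots$. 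Together with $c_0=u_0G_0\neq0$ this yields the universal bound $\sum_{i=0}^{j}\operatorname{wt}(c_i)\ge(j+1)n\tfrac{q-1}{q}$ for $j\le\mu-1$, attained by the free-running input $u_0=(0,1,0,\dots,0)$, $u_{\ge1}=0$; this establishes the first line of both formulas.

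The crux, and the step I expect to be the main obstacle, is $j=\mu$, where the truncated block $\tilde G_\mu$ appears. Because the row degrees increase, $\tilde G_\mu$ occupies the \emph{last} $s$ rows of $G_\mu$, where $s=k$, $s=r$, or $s=k-1$ according to whether $\delta\equiv0$, $\delta\equiv r$ with $1\le r\le k-2$, or $\delta\equiv k-1\pmod k$. Consequently the $G_\mu$-block of $c_\mu$ equals $u_0G_\mu$ and is controlled by the last $s$ entries $(u_{0,k-s+1},\dots,u_{0,k})$ of $u_0$, whereas non-constancy of $c_0$ is controlled by $(u_{0,2},\dots,u_{0,k})$; the whole case distinction is exactly whether these two index sets can be separated. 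If $s=k$ the block is non-zero for every $u_0\neq0$, so $c_\mu$ is non-constant and $d_\mu^c=(\mu+1)n\tfrac{q-1}{q}$. If $1\le s\le k-2$ one zeroes the last $s$ entries while keeping $u_{0,2}\neq0$, killing $c_\mu$ at no cost to $c_0$, so $d_\mu^c$ remains $\mu\, n\tfrac{q-1}{q}$. If $s=k-1$ both index sets equal $\{2,\dots,k\}$, so forcing $c_\mu=0$ forces $(u_{0,2},\dots,u_{0,k})=0$ and makes $c_0$ a non-zero constant of weight $n$; comparing this option $n+(\mu-1)n\tfrac{q-1}{q}$ against the all-non-constant value $(\mu+1)n\tfrac{q-1}{q}$ and using $q\ge2$ shows the former is the minimum, giving $d_\mu^c=n\bigl(1+\lfloor\delta/k\rfloor\tfrac{q-1}{q}\bigr)$. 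These three subcases collapse into exactly the two formula-cases $\delta\not\equiv k-1$ and $\delta\equiv k-1\pmod k$.

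For $j>\mu$ the monotonicity $d_\mu^c\le d_{\mu+1}^c\le\cdots$ gives $d_j^c\ge d_\mu^c$, while the minimizing inputs found above are all free-running, so $c_i=u_0G_i=0$ for $i>\mu$ and they extend with no extra weight; this produces the plateau. Finally, $k=1$ is genuinely degenerate: then $G_0$ is only the all-one row, $c_0$ is always constant, the dichotomy above disappears, and $R(q,\delta)$ is precisely the MacDonald/Reed--Muller block used in Construction 1 (with $n=q^\delta$), so both the code and its column distances coincide with those of Construction 1, which is why the two-weight formulas are stated only for $k>1$.
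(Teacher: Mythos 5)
Your proposal is correct and follows essentially the same route as the paper's proof: non-catastrophicity via a constant invertible $k\times k$ submatrix of $G(z)$ coming from Reed--Muller evaluation points, and the column distances via the three possible weights $0$, $n\frac{q-1}{q}$, $n$ of affine functions, with the same case split on $\delta\bmod k$ at the step $j=\mu$ and the same extremal inputs ($u_0=e_2$ versus $u_0=\lambda e_1$, free-running). Your write-up is if anything slightly more explicit than the paper's (e.g.\ the parametrization by the number $s$ of nonzero rows of $G_\mu$ and the observation that the minimizers extend with no extra weight for $j>\mu$), but there is no substantive difference.
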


\begin{proof}
Similar to the proof for Construction 1, the convolutional codes of Construction 2 are non-catastrophic because $R(q, \delta + k - 1)$ is a submatrix of $\begin{pmatrix}
    G_0\\ \vdots\\
    \tilde{G}_{\mu}
    \end{pmatrix}$, here these two matrices are actually identical.
    
For the calculation of the column distances, we can assume $k>1$ because otherwise Construction 2 is identical with Construction 1.
Then $G_0$ has at least two rows and hence, does not just consist of the all-one row.
Therefore, we have $d_0^c(\mathcal{C})=n\frac{q-1}{q}$.
Moreover, similar to Construction 1 for $k=1$, because of
$$\operatorname{wt}\left((u_0,\hdots,u_j)\begin{pmatrix}
    G_j\\ \vdots\\ G_0
\end{pmatrix}\right)=n\cdot \frac{q-1}{q}$$ for all $(u_0,\hdots,u_j)$ with $u_0\neq 0$, one gets 
\begin{align*}
d_j^c(\mathcal{C})=(j+1)n\frac{q-1}{q}\quad\text{for}\quad j\leq\left\lfloor\frac{\delta}{k}\right\rfloor.
\end{align*}
For $j>\left\lfloor\frac{\delta}{k}\right\rfloor$ we have to distinguish two cases.\\
If $\delta\not\equiv k-1\mod k$ and $k\mid\delta$, it is clear that the column distances cannot further increase for $j\geq\frac{\delta}{k}=\mu$.
If $\delta\not\equiv k-1\mod k$ and $k\nmid\delta$, then for $j \geq \mu = \left\lfloor \delta/k \right\rfloor + 1$ at least the first two rows of $G_{j}$ are zero and the weight of the second row of 
$$G_j^c=\begin{pmatrix}
    G_0 & \cdots & G_j\\
    & \ddots & \vdots\\
    & & G_0
\end{pmatrix}$$ 
is equal to $$\left(\left\lfloor\frac{\delta}{k}\right\rfloor+1\right)n\cdot \frac {q-1}{q},$$ and therefore the column distances can not increase any further than this value.\\
If $\delta\equiv k-1\mod k$, then only the first row of $G_{\mu}$ is zero. In this case, the weight of the first row of $G_{\mu}^c$ is equal to 
$$n+(\mu-1)n\cdot\frac{q-1}{q}=n\left(1+\left\lfloor\frac{\delta}{k}\right\rfloor\cdot \frac{q-1}{q}\right).$$
The weight of any other linear combination of rows of $G_{\mu}^c$ involving at least one of the first $k$ rows has
weight 
$$(\mu+1)n\cdot\frac{q-1}{q}=\left(\left\lfloor\frac{\delta}{k}\right\rfloor+2\right)n\cdot \frac{q-1}{q},$$ 
since all $\mu+1$ parts of length $n$ have weight $n\cdot\frac{q-1}{q}$ by Lemma \ref{lemma:weight_Mac_Donald_Codes}.
In sum, one obtains $$d_{\mu}^c(\mathcal{C})=n\left(1+\left\lfloor\frac{\delta}{k}\right\rfloor\cdot \frac{q-1}{q}\right).$$
Since $G_j=0$ for $j>\mu$, the column distances cannot increase any further.
\end{proof}

\begin{theorem}[Construction 3]
\label{thm:construction_3}
    The $(\frac{q^{\delta+k}-1}{q-1},k,\delta)$ convolutional code $\mathcal{C}$ over $\mathbb F_q$ with generator matrix $G(z)=\sum_{i=0}^{\mu}G_iz^i$ where $\mu=\left\lceil\frac{\delta}{k}\right\rceil$ and
$$\begin{pmatrix}
    G_0\\ \vdots\\
    \tilde{G}_{\mu}
    \end{pmatrix}=
   S(q,\delta+k)$$ 
with $m:=\delta+k$ is non-catastrophic and has column distances equal to
$$d_j^c(\mathcal{C})=\begin{cases}
(j+1) n\cdot\frac{q^{m-1}(q-1)}{q^m-1} & \text{for}\quad j\leq\left\lfloor\frac{\delta}{k}\right\rfloor\\
\left(\left\lfloor\frac{\delta}{k}\right\rfloor+1\right)n\cdot\frac{q^{m-1}(q-1)}{q^m-1} & \text{for}\quad j\geq\left\lfloor\frac{\delta}{k}\right\rfloor
\end{cases}.$$
Moreover,
$$d_{free}(\mathcal{C})=\left(\left\lfloor\frac{\delta}{k}\right\rfloor+1\right)n\cdot\frac{q^{m-1}(q-1)}{q^m-1}.$$
\end{theorem}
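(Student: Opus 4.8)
The plan is to follow the same template as the proofs of Theorem \ref{thm:construction_1} and Theorem \ref{thm:construction_2}, exploiting that here the \emph{entire} stacked matrix $\begin{pmatrix} G_0 \\ \vdots \\ \tilde G_\mu \end{pmatrix} = S(q,m)$ is a generator matrix of a genuine simplex code, so that the one-weight property of $\mathcal{S}(q,m)$ can be used directly, without first having to append extra columns as was needed for Construction 1. First I would record the arithmetic simplification $n\cdot\frac{q^{m-1}(q-1)}{q^m-1}=q^{m-1}$, which follows from $n=\frac{q^m-1}{q-1}$; thus the asserted formulas read $d_j^c(\mathcal{C})=(j+1)q^{m-1}$ for $j\le\lfloor\delta/k\rfloor$ and $d_j^c(\mathcal{C})=\left(\lfloor\delta/k\rfloor+1\right)q^{m-1}$ afterwards.

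For non-catastrophicity I would argue exactly as in Theorem \ref{thm:construction_1}: since the columns of $S(q,m)$ represent all one-dimensional subspaces of $\mathbb{F}_q^m$, among them occur (scalar multiples of) the standard basis vectors $e_1,\dots,e_k$. On these $k$ columns the top $k$ rows, i.e.\ $G_0$, restrict to an invertible diagonal $k\times k$ matrix, while rows $k+1,\dots,m$, i.e.\ $G_1,\dots,\tilde G_\mu$, are zero. Hence $G(z)$ contains a constant invertible $k\times k$ submatrix, so $\operatorname{rk}(G(\lambda))=k$ for every $\lambda\in\overline{\mathbb F}_q$, giving left-primeness and thus a parity-check matrix.

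The core of the argument is the column-distance computation, and the key point I would isolate is that for a delay-free code $v_i=\sum_{l=0}^i u_l G_{i-l}$ is, as a linear combination of the rows of $S(q,m)$, itself a codeword of $\mathcal{S}(q,m)$ whose coefficient vector is $(u_i,\dots,u_0,0,\dots,0)$. When $u_0\neq 0$ and $i\le\lfloor\delta/k\rfloor$, the coefficient $u_0$ sits on the $i$-th block $G_i$, which for such $i$ is a full block of $k$ linearly independent rows of $S(q,m)$ (here I would check that $(i+1)k\le m$ and that the truncated block $\tilde G_\mu$ can occur only when $k\mid\delta$, in which case it is still full); therefore the coefficient vector is nonzero, so $v_i\neq 0$ and by the one-weight property $\operatorname{wt}(v_i)=q^{m-1}$. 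Consequently $\operatorname{wt}(v_0,\dots,v_j)=\sum_{i=0}^j\operatorname{wt}(v_i)=(j+1)q^{m-1}$ for \emph{every} admissible $(u_0,\dots,u_j)$ with $u_0\neq 0$, and since this value is constant the minimum defining $d_j^c(\mathcal{C})$ equals $(j+1)q^{m-1}$ for $j\le\lfloor\delta/k\rfloor$.

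For $j\ge\lfloor\delta/k\rfloor$ I would use monotonicity $d_{\lfloor\delta/k\rfloor}^c(\mathcal{C})\le d_j^c(\mathcal{C})$ for the lower bound and produce one explicit codeword for the matching upper bound. Taking the constant message $u(z)=u_0$ gives $v_i=u_0G_i$; if $k\mid\delta$ one may take any $u_0\neq0$ and the word lives on layers $0,\dots,\mu$ with total weight $(\mu+1)q^{m-1}=\left(\lfloor\delta/k\rfloor+1\right)q^{m-1}$, while if $k\nmid\delta$ one picks $u_0\neq0$ supported on a zero row of $G_\mu$ (which exists since then $G_\mu$ has $k\lceil\delta/k\rceil-\delta\ge1$ zero rows), forcing $v_\mu=0$ so that the word lives on layers $0,\dots,\mu-1$ with total weight $\mu q^{m-1}=\left(\lfloor\delta/k\rfloor+1\right)q^{m-1}$. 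This simultaneously caps all later column distances, and since $\mathcal{C}$ is non-catastrophic it yields $d_{free}(\mathcal{C})=\lim_{j\to\infty}d_j^c(\mathcal{C})=\left(\lfloor\delta/k\rfloor+1\right)q^{m-1}$. The only genuine bookkeeping obstacle I anticipate is the careful treatment of the truncated final block $\tilde G_\mu$ together with the split into the cases $k\mid\delta$ and $k\nmid\delta$, ensuring that $u_0$ always activates linearly independent rows below index $\lfloor\delta/k\rfloor$ and can be made to vanish on $G_\mu$ beyond it.
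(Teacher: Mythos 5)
Your proposal is correct and follows essentially the same route as the paper: the paper's own proof is a one-line appeal to the fact that every nonzero codeword of $\mathcal{S}(q,\delta+k)$ has weight $q^{m-1}$ (plus the same submatrix argument for left-primeness and the same limit argument for $d_{free}$), and your write-up simply fills in the details of that appeal, including the correct case split on $k\mid\delta$ for the block $\tilde G_\mu$ and the terminating codeword giving the upper bound for $j\ge\lfloor\delta/k\rfloor$. No gaps.
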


\begin{proof}
Similar to the proof for Construction 1, the convolutional codes of Construction 3 are non-catastrophic because $R(q,m-1)$ is a submatrix of $\begin{pmatrix}
    G_0\\ \vdots\\
    \tilde{G}_{\mu}
    \end{pmatrix}$.
 The formula for the column distances follows from the fact that all nonzero codewords of $\mathcal{S}(q,\delta+k)$ have weight $q^{\delta+k-1}$. 
 Since, $\mathcal{C}$ is non-catastrophic, the formula for the free distance follows directly from $d_{free}(\mathcal{C})=\lim_{j\rightarrow\infty}d_j^c(\mathcal{C})$.
\end{proof}

To compare the three constructions of this paper, we denote by $d_{j,\ell}$ the $j$-th column distance of Construction $\ell$ and with $n_\ell$ its length. With this notation one obtains from Theorems \ref{thm:construction_1}, \ref{thm:construction_2} and \ref{thm:construction_3}
that for  $j\leq\left\lfloor\frac{\delta}{k}\right\rfloor$,
$$
\frac{d_{j,1}}{n_1}=j\frac{q-1}{q}+\frac{q^{k-1}(q-1)}{q^k-1},\quad
\frac{d_{j,2}}{n_2}=(j+1)\frac{q-1}{q},\quad \frac{d_{j,3}}{n_3}=(j+1)\frac{q-1}{q}\cdot\frac{q^{\delta+k}}{q^{\delta+k}-1} 
$$
Comparing these quantities, one gets:
\begin{align}\label{comp}
&\frac{d_{j,1}}{n_1}> \frac{d_{j,3}}{n_3}>\frac{d_{j,2}}{n_2}
\end{align}
where the last inequality is obvious.
The following calculations show
$\frac{d_{j,1}}{n_1}> \frac{d_{j,3}}{n_3}$.
\begin{align*}
&j\frac{q-1}{q}+\frac{q^{k-1}(q-1)}{q^k-1}>(j+1)\frac{q-1}{q}\cdot\frac{q^{\delta+k}}{q^{\delta+k}-1}\\
&\Leftrightarrow j(q^k-1)(q^{\delta+k}-1)+q^{k}(q^{\delta+k}-1)>(j+1)q^{\delta+k}(q^k-1)\\
&\Leftrightarrow -j(q^{k}-1)-q^k>-q^{\delta+k}\\
&\Leftrightarrow j<\frac{q^{\delta+k}-q^k}{q^k-1}=\frac{q^k(q^{\delta}-1)}{q^k-1},
\end{align*}
which is true for $j\leq \mu\leq\delta$.

The inequalities in \eqref{comp} are actually true for all $j\in\mathbb N_0$ if $\delta\not\equiv k-1\mod k$ and for $j<\mu$ if $\delta\equiv k-1\mod k$. 
For the exceptional case $j\geq \mu$ and $\delta\equiv k-1\mod k$, it depends on the concrete code parameters how Construction 2 performs with respect to the other constructions.

The inequalities in \eqref{comp} show that the relative error-correcting capability of Construction 1 is largest, what corresponds to the fact that Construction 1 has optimal column distances. 
However, for the same values of $\delta$ and $k$, one has $d_{0,3}=d_{0,1}$ and $d_{j,3}\geq d_{j,1}$ for $j\geq 1$, i.e., the absolute error-correcting capability of Construction 3 is better than the one of Construction 1.
This is possible since $n_3\geq n_1$, that is, Construction 3 has smaller rate than Construction 1.
Also note that Construction 2 has the highest rate among these constructions.

\begin{theorem}
For $j\in\mathbb N_0$ and $\ell\in\{2,3\}$ and $n_\ell\geq n_1$, one has $d_{j,\ell}\geq d_{j,1}$.
\end{theorem}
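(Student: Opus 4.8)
The plan is to compare the closed-form column distances supplied by Theorems \ref{thm:construction_1}, \ref{thm:construction_2} and \ref{thm:construction_3} directly, working with the \emph{absolute} values rather than the relative ones. This is forced by the direction of \eqref{comp}: since $d_{j,1}/n_1$ is the \emph{largest} relative distance, the relative comparison cannot by itself yield $d_{j,\ell}\geq d_{j,1}$, and the hypothesis $n_\ell\geq n_1$ is exactly what must compensate for this. I would also exploit that both Construction 1 and Construction 3 saturate their column distances at the same index $j=\lfloor\delta/k\rfloor$, so that it suffices to compare the two regimes $j\leq\lfloor\delta/k\rfloor$ and $j\geq\lfloor\delta/k\rfloor$.

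For $\ell=3$ I would first simplify Theorem \ref{thm:construction_3}. Substituting $n_3=\frac{q^m-1}{q-1}$ with $m=\delta+k$ cancels the factor $\frac{q^{m-1}(q-1)}{q^m-1}$ and leaves the clean expression $d_{j,3}=(j+1)q^{\delta+k-1}$ for $j\leq\lfloor\delta/k\rfloor$, and $d_{j,3}=\left(\lfloor\delta/k\rfloor+1\right)q^{\delta+k-1}$ beyond the threshold. Rewriting the Construction 1 value from Theorem \ref{thm:construction_1} as $d_{j,1}=(j+1)q^{\delta+k-1}-jq^{\delta-1}$ (with $j$ capped at $\lfloor\delta/k\rfloor$ past the threshold), subtraction gives
\[
    d_{j,3}-d_{j,1}=\min\left(j,\left\lfloor\tfrac{\delta}{k}\right\rfloor\right)q^{\delta-1}\geq 0,
\]
which is the claim. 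Here $n_3\geq n_1$ holds automatically, since $q^{\delta+k}-1\geq q^{\delta+k}-q^\delta=q^\delta(q^k-1)$.

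For $\ell=2$ I would show that the hypothesis $n_2\geq n_1$ is already very restrictive. With $n_2=q^{\delta+k-1}$ and $n_1=\frac{q^\delta(q^k-1)}{q-1}$, the inequality $n_2\geq n_1$ is equivalent to $q^{k-1}(q-1)\geq q^k-1$; factoring $q^k-1=(q-1)(q^{k-1}+\dots+1)$ shows this can hold only for $k=1$. For $k=1$, Theorem \ref{thm:construction_2} states that Construction 2 coincides with Construction 1, so $d_{j,2}=d_{j,1}$ for every $j$ and the inequality holds with equality.

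I expect the only genuine subtlety to be conceptual rather than computational: the result is true for the absolute distances precisely because of the length gap $n_\ell\geq n_1$, whereas the analogous statement for relative distances is false by \eqref{comp}. Once the two saturation thresholds are checked to coincide and the algebraic simplification of $d_{j,3}$ is carried out, no further obstacle remains.
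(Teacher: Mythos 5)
Your treatment of $\ell=3$ is correct and is essentially the paper's own argument: with the same parameters $(k,\delta)$ one automatically has $n_3\geq n_1$, and your simplification $d_{j,3}-d_{j,1}=\min\left(j,\left\lfloor\delta/k\right\rfloor\right)q^{\delta-1}\geq 0$ is exactly the comparison the paper makes.

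The $\ell=2$ case, however, does not prove what the theorem is meant to assert. You read the hypothesis $n_2\geq n_1$ as constraining a single parameter set $(k,\delta)$ shared by both constructions, and you correctly observe that this forces $k=1$, where the two constructions coincide --- but that reduces the $\ell=2$ statement to a tautology with no content. The intended reading, made explicit by the earlier sentence ``for fixed code parameters, the new constructions of this section have better column distances than Construction 1 has for the next smaller value of $n$ that it can cover'' and by the paper's proof, is that the two constructions may carry \emph{different} parameters: Construction 2 with parameters $(q^{\delta+k-1},k,\delta)$ is compared against Construction 1 with $k$ replaced by $k-1$ (same $\delta$), whose length $\frac{q^{\delta+k-1}-q^{\delta}}{q-1}$ is the largest value of $n_1$ not exceeding $n_2$. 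The substantive content of the $\ell=2$ case is then the inequality
\[
d_{j,2}=(j+1)q^{\delta+k-2}(q-1)\;\geq\;(j+1)q^{\delta+k-2}-jq^{\delta-1}=d_{j,1},
\]
which holds since $q\geq 2$ and $jq^{\delta-1}\geq 0$. Your proposal never establishes this comparison, so the $\ell=2$ part has a genuine gap relative to the claim the paper is making.
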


 \begin{proof}
We first compare Construction 2 with Construction 1. For the same fixed values of $\delta$ and $k>1$, one has $$n_2=q^{\delta+k-1}<\frac{q^{\delta+k}-q^{\delta}}{q-1}=n_1.$$ To obtain the next smaller possible value for $n_1$, we replace $k$ by $k-1$ to obtain 
$$\frac{q^{\delta+k-1}-q^{\delta}}{q-1}<n_2.$$
The $j$-th column distance for Construction 1 with parameters $(\frac{q^{\delta+k-1}-q^{\delta}}{q-1}, k-1,\delta)$ is given by $$q^{\delta+k-2}+j(q^{\delta+k-2}-q^{\delta-1})=(j+1)q^{\delta+k-2}-jq^{\delta-1},$$
which is smaller than the $j$-th column distance for Construction 2 with parameters $(q^{\delta+k-1},k,\delta)$ given by $(j+1)q^{\delta+k-2}(q-1)$
(and the latter is even larger in the exceptional case mentioned above).

Now we compare Construction 3 with Construction 1.
For the same fixed values of $\delta$ and $k$, one has $$n_3=\frac{q^{\delta+k}-1}{q-1}>\frac{q^{\delta+k}-q^{\delta}}{q-1}=n_1.$$
As already mentioned before this theorem, we have $$d_{j,3}=(j+1)q^{\delta+k-1}\geq q^{\delta+k-1}+j(q^{\delta+k-1}-q^{\delta-1})=d_{j,1},$$
which completes the proof.
\end{proof}
In summary, the column distances of Construction $2$ and Construction $3$ are very close to optimal and including these constructions we can cover a larger variety of different code rates. 
Moreover, as we will show in the following, also the codes of Constructions 2 and 3 can be efficiently decoded with the Viterbi algorithm.

For Construction 2, for $t\in\{\mu+1,\hdots,N-\mu\}$, we can directly decode using the first order Reed-Muller Code $\mathcal{R}(q,\delta+k-1)$, for $t\in\{1,\hdots,\mu\}$, we have to consider $q^{\delta-(t-1)k}$ copies of $\mathcal{R}(q,tk-1)$, and for $t\in\{N-\mu+1,\hdots, N\}$, we have to consider $q^{(t-N+\mu)k-1}$ copies of $\mathcal{R}'(q,\delta-(t-N+\mu-1)k)$.
Using the fact that comparing a received word to all codewords of $\mathcal{R}(q, \nu)$
can be done with complexity $q(q-1)q^\nu\cdot \nu$, we can calculate that for each $t\in\{1,\hdots, N\}$, the complexity of each step $t$ in the Viterbi algorithm is $O(q^2n\log_q(n))$, which yields the same overall decoding complexity as for Construction 1.
Note that in the cases $t\in\{1,\hdots,\mu\}$ and $t\in\{N-\mu+1,\hdots, N\}$ where we have several copies of  a code to consider, adding the results of the single copies for all codewords requires at most $q\cdot n$ additions.
This is the case because if the code we consider consists of copies of $\mathcal{R}'(q,d)$ for some $d\in\mathbb N$, then there are $\frac{n}{q^d}$ copies and each copy has $q^d$ codewords and if the code we consider consists of copies of $\mathcal{R}(q,d)$ for some $d\in\mathbb N$, then there are $\frac{n}{q^d}$ copies and each copy has $q^{d+1}$ codewords.

Considering Construction 3,
we can write
\[
    S(q,m)=\left( R(q,m-1) \ \ \begin{array}{ccc}
    0 & \cdots & 0\\
  &  S(q,m-1)
\end{array}\right)
\]
and consequently,
$$\begin{pmatrix}
   G_0\\ \vdots\\ \tilde{G}_{\mu}
\end{pmatrix}=\left(
  \begin{array}{c|c|c|c|c|c}
        R(q,m-1)   & 0 \ldots 0            & 0 \ldots 0 & \ldots & 0 \  0 & 0\\
                               & R(q,m-2)  & 0 \ldots 0 & \ldots & 0 \  0& \vdots\\
                               &                       & R(q,m-3) &\ldots & 0                                                  \  0 & \vdots\\
        \vdots                  & \vdots                & \vdots    &\vdots     & \vdots & 0 \\
                                &                       &           &        & R(q,1) & 1
    \end{array}
    \right).$$
  Hence, comparing all codeword of a simplex code to a received word can be done with at most  
\begin{align*}
  &\sum_{i=1}^{m-1}q(q-1)q^{m-i}(m-i)+q^m\cdot (m-1)\leq (m-1)\cdot q(q-1)\sum_{i=1}^{m-1} q^{i}+q^m\cdot (m-1)\\
  &\leq q^2 \log_q(n) n   
\end{align*}
additions.
Here the term $q^m \cdot (m-1)$ comes from summing up the results from the single Reed-Muller codes.
   
Similar to Construction 1, considering steps with $t\in\{1,\hdots,\mu\}$ of the Viterbi algorithm, for $0\leq j=t-1\leq \mu-1$, requires comparing all codewords of the code with generator matrix
\begin{align*}&\begin{pmatrix} G_0\\ \vdots\\ G_j\end{pmatrix},
\end{align*}
to a received word. This generator matrix is up to column permutations equal to
\begin{align*}
&\left(
    \begin{array}{ c|c |c|c|c }
  \tilde{R}((j+1)k-1) & 0_{1\times q^{m-2}} & \cdots &   \\
        & \tilde{R}((j+1)k-2) & \cdots &   0_{((j+1)k-1)\times q^{\delta-kj}}\\ 
    \vdots  &    &                \cdots &     & 0_{(j+1)k\times \frac{q^{\delta-kj}-1}{q-1}} 
         \\ 
        &        \vdots       &   \vdots  & \tilde{R}(0)   
        \end{array}
    \right)
\end{align*}
\normalsize
where $\tilde{R}(d)$ denotes the matrix consisting of $q^{m-k(j+1)}=q^{\delta-jk}$ copies of $R(q,d)$. Therefore, at most 
\begin{align*}
&q^{\delta-jk} q (q-1) \sum_{i=1}^{(j+1)k-1}q^i \cdot i+ q^{k(j+1)}\cdot (j+1)k\cdot q^{m-k(j+1)}\\
&\leq q(q-1)((j+1)k-1)\cdot q^{\delta-jk}\cdot \frac{q^{(j+1)k}-1}{q-1}+(j+1)k\cdot q\cdot q^{m-1}\\
&\leq ((j+1)k-1) q(q-1) n+(j+1)k\cdot q\cdot n\leq q^2\cdot n\log_q(n)
\end{align*}
additions are required, where the first part of the sum gives the required number of additions for calculating the distances between all codewords of each of the involved Reed-Muller codes to the corresponding part of the received word and the second part gives for each of the $q^{k(j+1)}$ codewords the required number of additions to sum up the results for the single Reed-Muller codes.

Finally, we consider steps with $t\in\{N-\mu+1,\hdots, N\}$ of the Viterbi algorithm. For $1\leq j\leq \mu-1$, comparing a received word to all codewords of the code with generator matrix 
$$\begin{pmatrix}
    G_j\\ \vdots\\ \tilde{G}_{\mu}\end{pmatrix},
$$
which is up to column permutations equal to
$$\begin{pmatrix}
           \hat{R}& R(q,\delta-(j-1)k-1) & 0_{1\times q^{\delta-(j-1)k-2}} & \hdots & 0_{(\delta-(j-1)k-1)\times 1}\\\vdots& \vdots & R(q,\delta-(j-1)k-2) & \hdots & 1 \end{pmatrix},$$
where $\hat{R}$ denotes $\frac{q^{jk}-1}{q-1}$ copies of $R'(\delta-(j-1)k)$,
requires at most 

\begin{align*}
&q(q-1) \left(\underbrace{\frac{q^{jk}-1}{q-1} q^{\delta-(j-1)k}(\delta-(j-1)k)}_{\frac{q^{jk}-1}{q-1}\ \text{times}\ R'(q,\delta-(j-1)k)}+\sum_{i=1}^{\delta-(j-1)k-1}q^i\cdot i \right)\\
&+ q^{\delta-(j-1)k}\left(\frac{q^{jk}-1}{q-1}+\delta-(j-1)k-1\right)\\
&\leq q(q-1) \left(\frac{q^{jk}-1}{q-1} q^{\delta-(j-1)k}(\delta-(j-1)k)+\sum_{i=1}^{\delta-(j-1)k-1}q^i\cdot i \right)+ n + n \log_q(n)\\
&\leq q(q-1) (\delta-(j-1)k)\left(\frac{q^{\delta+k}-q^{\delta-(j-1)k}}{q-1}+\frac{q^{\delta-(j-1)k}-1}{q-1}\right)+n+ n\log_q(n)\\
&\leq q(q-1)\cdot \log_q(n)\cdot n+n+ n\log_q(n)\\
&\leq q^2\cdot n\log_q(n)
\end{align*}
additions.

In summary, Constructions 1, 2 and 3 can all be decoded with the same asymptotic complexity, which is significantly smaller than the complexity for general Viterbi decoding.

\section{Conclusion}
In this paper, we present a construction of convolutional codes with optimal column distances over arbitrary finite fields and for arbitrary code parameters $k$ and $\delta$, where $n$ is given by the other code parameters. Moreover, we present two constructions of convolutional codes with good but not optimal column distances of rates that cannot be covered by the first construction. For all of the constructed convolutional codes we are able to significantly improve the complexity when decoding them with the Viterbi algorithm.

\section*{Acknowledgements}
We would like to thank Zita Abreu for many helpful discussions.
The first author is supported by the German research foundation, project number 513811367.
The second author is supported by the Swiss National Science Foundation under SNSF grant number 212865.

\bibliographystyle{abbrv}
\bibliography{literature}

\end{document}